\documentclass[pdflatex,iicol,sn-mathphys-ay]{sn-jnl}% Math and Physical Sciences Author Year Reference Style
\usepackage{graphicx}%
\usepackage{multirow}%
\usepackage{amsmath,amssymb,amsfonts}%
\usepackage{amsthm}%
\usepackage{mathrsfs}%
\usepackage[title]{appendix}%
\usepackage{xcolor}%
\usepackage{textcomp}%
\usepackage{manyfoot}%
\usepackage{booktabs}%
\usepackage{algorithm}%
\usepackage{algorithmicx}%
\usepackage{algpseudocode}%
\usepackage{listings}%
\usepackage{tabularx}
\usepackage{makecell} 
\usepackage{epstopdf}
\theoremstyle{thmstyleone}%
\newtheorem{thm}{Theorem}%  meant for continuous numbers
\newtheorem{prop}{Proposition}%
\newtheorem{lemma}{Lemma}%

\theoremstyle{thmstyletwo}%
\newtheorem{example}{Example}%

\theoremstyle{thmstylethree}%
\newcommand{\indep}{\rotatebox[origin=c]{90}{$\models$}}
\newcommand{\setword}[2]{%
	\phantomsection
	#1\def\@currentlabel{\unexpanded{#1}}\label{#2}%
}
\begin{document}

\title[Sufficient  Dimension Reduction]{Sufficient  Dimension Reduction via  Inverse Conditional Mean or Variance Independence}

%%=============================================================%%
%% GivenName	-> \fnm{Joergen W.}
%% Particle	-> \spfx{van der} -> surname prefix
%% FamilyName	-> \sur{Ploeg}
%% Suffix	-> \sfx{IV}
%% \author*[1,2]{\fnm{Joergen W.} \spfx{van der} \sur{Ploeg}
%%  \sfx{IV}}\email{iauthor@gmail.com}
%%=============================================================%%

\author[1]{\fnm{Jicai} \sur{Liu}}%\email{liujicai1234@126.com}
%\equalcont{These authors contributed equally to this work.}
\author[2]{\fnm{Yu} \sur{Zhang}}%\email{iiauthor@gmail.com}
%\equalcont{These authors contributed equally to this work.}

%\author[3]{\fnm{Heng} \sur{Lian}}%\email{iiiauthor@gmail.com}

\author*[3]{\fnm{Jinhong} \sur{Li}}\email{jinhongli0106@gmail.com}

\affil[1]{\orgdiv{School of Statistics and Mathematics}, \orgname{Shanghai Lixin University of Accounting and Finance}, \orgaddress{\state{Shanghai}, \country{China}}}

\affil[2]{\orgdiv{ Department of Probability and Statistics}, \orgname{ Peking University}, \orgaddress{\state{Beijing}, \country{China}}}

%\affil[3]{\orgdiv{Department of Mathematics}, \orgname{City University of Hong Kong}, \orgaddress{\state{Hong Kong}, \country{China}}}

\affil*[3]{\orgdiv{School of Statistics and Data Science}, \orgname{Zhejiang Gongshang University}, \orgaddress{ \state{Zhejiang}, \country{China}}}

%%==================================%%
%% Sample for unstructured abstract %%
%%==================================%%

\abstract{
This paper presents a unified framework for sufficient dimension reduction (SDR) that  generalizes several existing SDR techniques and offers  new insights into the connection between inverse conditional moment independence and dimension reduction. The framework is built on two forms of inverse independence between  the response vector and predictors: inverse conditional mean independence (ICMI) and inverse conditional variance independence (ICVI). For each form, we develop  two general classes of matrices capable of recovering the central subspace,  based on projection and kernel techniques respectively. This yields four distinct estimators: projection- and kernel-based variants under both ICMI and ICVI frameworks.
Under standard regularity conditions, we establish the theoretical properties of these estimators and derive their convergence rates in high-dimensional settings. The proposed methods exhibit robustness to outliers in the response variable while maintaining computational competitiveness.   Simulation studies and real-data analyses demonstrate the practical effectiveness of the proposed methods.

}

\keywords{Inverse regression, Conditional mean or variance independence, Projection, Bochner's theorem,    Sufficient dimension reduction,   High dimensionality}

%%\pacs[JEL Classification]{D8, H51}

%%\pacs[MSC Classification]{35A01, 65L10, 65L12, 65L20, 65L70}

\maketitle

\section{Introduction}

Sufficient dimension reduction (SDR)  has attracted considerable attention  as an effective  tool for high-dimensional data analysis.
%For example, see \citet{Likc91, li1992principal,CookOn, Cook1998regression,cook2002dimension, Xia2002adaptive}.
Its primary goal is to reduce the dimensionality of the predictors
without losing regression information. Let $Y$ be the response variable and $\mathbf{X} = (X_1, \ldots, X_p)^T$ be the predictor vector. This reduction is achieved by projecting $\mathbf{X}$ onto a lower-dimensional subspace, say $\boldsymbol{\Gamma}^T \mathbf{X}$, such that
\begin{eqnarray}\label{intr0}
	Y \indep {\bf X}  | \mathbf{\Gamma}^T{\bf X},
\end{eqnarray}
where   $``\indep"$ indicates conditional independence and $\mathbf{\Gamma} \in \mathbb{R}^{p \times d}$  with $\text{rank}(\mathbf{\Gamma})=d <p$. This implies that $Y$ depends on $\mathbf{X}$ only through $\boldsymbol{\Gamma}^T \mathbf{X}$. The column space of $\boldsymbol{\Gamma}$ is referred to as a dimension reduction subspace. Under mild conditions, \citet{YinSuccessive} showed that the intersection of all such subspaces remains a dimension reduction subspace and is called the central subspace, denoted by $\mathcal{S}_{Y \mid \mathbf{X}}$. The dimension $d = \dim(\mathcal{S}_{Y \mid \mathbf{X}})$ is known as the structural dimension. This work focuses on estimating     $\mathcal{S}_{\textbf{Y}| \textbf{X}}$ under the assumption that  $d$ is known; readers interested in estimating $d$ are referred to Chapters 9-10 of \citet{li2018sufficient}.

Many methods have been proposed for estimating $\mathcal{S}_{Y|\mathbf{X}}$, such as sliced inverse regression (SIR; \citet{Likc91}), sliced average variance estimation (SAVE; \citet{cook91weisberg}), and principal Hessian directions (pHd; \citet{li1992principal}), among
others.  Further developments include minimum average variance estimation (MAVE; \citet{Xia2002adaptive}), directional regression (DR; \citet{Bing07}), principal fitted components (PFC; \citet{CookFisher07}), and semiparametric approaches \citep{Ma2012semiparametric}. For a comprehensive review, see \citet{li2018sufficient}.

In this paper, we introduce a new class of  SDR methods based on   inverse  conditional mean  independence to recover   $\mathcal S_{Y|{\bf X}}$  in  \eqref{intr0}.    Our approach is motivated by  the principal fitted components (PFC) framework   \citep{CookFisher07,PFC08} and  focuses    on
quantifying conditional mean  (in)dependence between ${\bf X}$ and $Y$.  We now provide   a brief description of the PFC model.  Let $\mathbf{X}_{y} $ denote the $p$-dimensional random vector
distributed as $\mathbf{X} |(Y=y).$  Assume that $\mathbf{X} |(Y=y) \sim N\left(\boldsymbol{\mu}_{y}, \boldsymbol{\Delta}\right)$ with  $\boldsymbol{\Delta}>0$, i.e.,
\begin{equation}\label{inverseregression}
	\mathbf{X}_{y}  =\boldsymbol{\mu}_{y}+\boldsymbol{\Delta}^{1/2}\boldsymbol{\varepsilon}, ~~~\boldsymbol{\varepsilon} \sim N\left(\textbf{0}, \mathbf{I}_{p}\right).
\end{equation}
Setting   $\boldsymbol{\mu}_{y}=\boldsymbol{\mu}+\mathbf{B} \boldsymbol{v}_{Y}$ and $\boldsymbol{\Delta}=\sigma^{2} \mathbf{I}_{p}$, the model simplifies to the principal components model
\begin{equation}\label{PC}
	\mathbf{X}_{y} =\boldsymbol{\mu}+\mathbf{B} \boldsymbol{v}_{Y}+\sigma\boldsymbol{\varepsilon},
\end{equation}
where   $ \mathbf{B} \in \mathbb{R}^{p \times d}$ and $\boldsymbol{v}_{Y} \in \mathbb{R}^{d}$ is unknown   with  $\operatorname{E}\left(\boldsymbol{v}_{Y}\right)=0$.
Further, by assuming     $\boldsymbol{\mu}_{y}=\boldsymbol{\mu}+\mathbf{B}\boldsymbol{\beta} \mathbf{f}(Y),$  we obtain the PFC model
\begin{equation}\label{PFC}
	\mathbf{X}_{y} =\boldsymbol{\mu}+\mathbf{B}\boldsymbol{\beta} \mathbf{f}(Y)+\sigma\boldsymbol{\varepsilon},
\end{equation}
where   $\boldsymbol{\beta} \in \mathbb{R}^{d \times r}$ with $d \leq \min(r, p)$, and $\mathbf{f}(Y) \in \mathbb{R}^r$ is a known, user-specified vector-valued function of the response.  Under the assumption that $\boldsymbol{\varepsilon}$ follows a multivariate normal distribution, \citet{CookFisher07} showed that $\mathcal{S}_{Y|\mathbf{X}} = \mathcal{S}(\boldsymbol{\Delta}^{-1} \mathbf{B})$, where $\mathcal{S}(\mathbf{A})$ denotes the column space of a matrix $\mathbf{A}$.

%
% The PFC model can be expressed as the inverse regression model:
%\begin{equation}\label{PFC}
%\mathbf{X}=\boldsymbol{\mu}+\mathbf{B}\boldsymbol{\beta} \mathbf{f}(Y)
%+\sigma\boldsymbol{\varepsilon},
%\end{equation}
%where  $\boldsymbol{\beta} \in \mathbb{R}^{d \times r}$ with $d \leq \min (r, p),$    $\mathbf{f} \in \mathbb{R}^{r}$ is a known user-specified vector-valued function of the response and  the error
%$\boldsymbol{\varepsilon}$ is independent of $Y$ and normally distributed with mean  zero and covariance matrix $\boldsymbol{\Delta}.$    \citet{CookFisher07} has proved that
%$\mathcal{S}_{Y | \mathbf{X}}=\operatorname{span}\left(\Delta^{-1} \mathbf{B}\right)$ under the
%assumption  that $\boldsymbol{\varepsilon}$  follows from a multivariate normal distribution.
%Recently,  there is an increasing interest in  the inverse model-based SDR approaches, which  directly specify a model for the inverse regression  of $\mathbf{X}$ on  $Y$, like the PFC model in \eqref{PFC}.  See, \citet{AdragniSufficient} for  a review.

Let $\left(\mathbf{B}, \mathbf{B}_{\perp} \right) \in \mathbb{R}^{p \times p}$ be  an orthogonal matrix.
Linearly transforming both sides of model  \eqref{PFC} and taking conditional  expectation,
we have  that
%\begin{eqnarray*}
%&&\mathbf{B}^{T}\mathbf{X}_{y} =\mathbf{B}^{T}\boldsymbol{\mu}+ \beta\left\{\mathbf{f}_{y}-\mathrm{E}\left(\mathbf{f}_{Y}\right)\right\}
%+\sigma\mathbf{B}^{T}\boldsymbol{\varepsilon}
% ~~\mathrm{and} ~~ \mathbf{B}_{\perp}^{T} \mathbf{X}_{y}=\mathbf{B}_{\perp}^{T} \boldsymbol{\mu}+\sigma\mathbf{B}_{\perp}^{T}\boldsymbol{\varepsilon},
%\end{eqnarray*}
%which implies that
\begin{equation}\label{eq:cond-moments}
\begin{array}{rcl}
E\{\mathbf{B}^{T}\mathbf{X}|Y\}
  &=& \mathbf{B}^{T}\boldsymbol{\mu} + \beta \mathbf{f}(Y), \\[2mm]
E\{\mathbf{B}_{\perp}^{T}\mathbf{X}|Y\}
  &=& \mathbf{B}_{\perp}^{T}\boldsymbol{\mu}
   = E\{\mathbf{B}_{\perp}^{T}\mathbf{X}\}.
\end{array}
\end{equation}
 The above expressions indicate  that   the transformed vector $[\mathbf{B}^{T}\mathbf{X}, \mathbf{B}_{\perp}^{T} \mathbf{X}]$ splits into two parts:   $\mathbf{B}^{T}\mathbf{X}$
 is conditionally mean dependent on $Y$, while $\mathbf{B}_{\perp}^{T}\mathbf{X}$
is conditionally mean independent of  $Y$. Thus, estimating $\mathbf{B}$ is equivalent to
estimate its  orthogonal completion  $\mathbf{B}_{\perp},$  such that
\begin{eqnarray}\label{PFC2}
	E\{ \mathbf{B}_{\perp}^{T} \mathbf{X}|Y \}=E\{ \mathbf{B}_{\perp}^{T} \mathbf{X}\}.
\end{eqnarray}
That is,    the conditional mean of  $ \mathbf{B}_{\perp}^{T} \mathbf{X}$ given  $Y$ is independent of  $Y$.

Motivated by this notion of conditional mean independence in \eqref{PFC2}, we consider  the  general  Inverse  Conditional Mean  Independence
(ICMI) relationship,  defined as
\begin{eqnarray}\label{conmeanind1}
	E\{\textbf{X} | \textbf{Y}\}=E\{\textbf{X}\},  ~a.s.,
\end{eqnarray}
where $\textbf{X}\in \mathbb{R}^{p}$  and $\textbf{Y}\in \mathbb{R}^{q}.$   Note that the response $\textbf{Y}$  can be multivariate.     To quantify this relationship, we propose two types of ICMI matrices. Crucially, we show that the eigenvectors corresponding to nonzero eigenvalues of these matrices capture directions of conditional mean dependence, while their orthogonal complement-the null space-identifies the subspace of conditional mean independence, as illustrated in \eqref{independ-beta} and \eqref{depend-beta}. As a result, the ICMI matrices offer a unified and natural framework for estimating   the central subspace $\mathcal{S}_{Y \mid \mathbf{X}}$ in model \eqref{intr0}  under the linearity assumption, as established in Theorem \ref{them-projection}.

The ICMI method described above relies on the conditional mean \(E\{\mathbf{X} \mid \mathbf{Y}\}\), making it a first-order approach. However, first-order methods are known to be ineffective when inverse mean degeneracy occurs, see   \citet{li2018sufficient}. To address this issue, we introduce the concept of  Inverse Conditional Variance Independence (ICVI), defined as
\begin{eqnarray}\label{equvlent-IV}
	{E}\{[\mathbf{X}-{E}\{\mathbf{X}\}][\mathbf{X}-{E}\{\mathbf{X}\}]^T| \mathbf{Y}\} =\boldsymbol{\Sigma}, ~a.s.,
\end{eqnarray}
where   $\boldsymbol{\Sigma}={E}\{[\mathbf{X}-{E}\{\mathbf{X}\}][\mathbf{X}-{E}\{\mathbf{X}\}]^T\}.$
Analogous to the ICMI framework, we develop a class of ICVI methods to quantify deviations from condition \eqref{equvlent-IV}. These methods enable the recovery of the central subspace \(\mathcal{S}_{Y \mid \mathbf{X}}\)  under the linearity and constant variance assumptions, as established in Theorem \ref{them-kenel-HD}.

A frequently studied problem is measuring the conditional mean independence
\begin{eqnarray}\label{scale-vector-mean-independ}
	E\{Y| \mathbf{X}\}=E\{Y\},  ~a.s.
\end{eqnarray}
Many statistical inference problems ultimately reduce to~\eqref{scale-vector-mean-independ}, such as regression model checking, tests of conditional moment restrictions, and feature screening (see, e.g., \citet{BIERENS1982105, Stute1997Nonparametric, Stinchcombe1998Consistent, escanciano2006consistent, shao2014martingale}). There are at least two key differences between existing studies and our work.
First, the aforementioned studies mainly address forward conditional mean independence, whereas we focus specifically on inverse conditional mean independence $E\{\textbf{X}|Y\}=E\{\textbf{X}\}$.  In subsequent sections, we further clarify the connection between inverse conditional mean independence and the SDR problem.  Second,   $Y$ is scalar in \eqref{scale-vector-mean-independ},  whereas the ICMI problem in~\eqref{conmeanind1} considers conditional mean dependence of a multivariate $Y$ with respect to a random vector. Thus, measuring ICMI is different from the forward regression problem in~\eqref{scale-vector-mean-independ}.

In the context of time series analysis, \citet{Lee2016Martingale} introduced the martingale difference divergence matrix based on~\eqref{conmeanind1} for dimension reduction in stationary multivariate time series. The dimension reduction models in \citet{Lee2016Martingale} can be equivalently formulated as factor models with martingale difference errors, thus essentially addressing unsupervised problems. From the perspective of inverse regression in~\eqref{inverseregression}, \citet{Lee2016Martingale} considers the unsupervised model in~\eqref{PC}. Therefore, our work can be viewed as a generalization of \citet{Lee2016Martingale} from unsupervised to supervised learning.

In this paper, we introduce   projection and kernel-based ICMI and ICVI matrices to quantify the relationships in~\eqref{conmeanind1} and~\eqref{equvlent-IV}. We first extend the CUME method \citep{Zhu2010Dimension} using projection techniques to handle multivariate responses, leading to the projection-based ICMI matrix.  We then employ Bochner's theorem to derive a class of translation-invariant kernel-based ICMI matrices. Under a non-sparse high-dimensional setting where $p = o(n)$, we establish the consistency of the resulting estimators-attaining the optimal growth rate of $p$ without sparsity assumptions. These theoretical results improve upon existing convergence rates for diverging dimensions, such as those in \citet{lin2018} and \citet{wang2019yu}.

The rest of the paper is organized as follows. Section~\ref{section2} introduces the ICMI matrix to characterize the relationship in~\eqref{conmeanind1}. Section~\ref{section3} presents  the projection-based ICMI matrix and establishes theoretical properties of its estimator. Section~\ref{section4}  develops the kernel-based ICMI matrix.
The framework is extended to the ICVI approach in Section \ref{Extension}.
Numerical studies, including simulations and real-data analyses, are reported in Sections~\ref{simulation} and~\ref{Realdata}, respectively. The paper concludes with a discussion in Section~\ref{Conclusion}, and all technical proofs are provided in the Appendix.

\section{ICMI matrix} \label{section2}

 This section constructs an inverse conditional mean independence  matrix to quantify the relationship in \eqref{conmeanind1}. We demonstrate that the eigenvectors of this matrix   encode the central subspace $\mathcal{S}_{Y \mid \mathbf{X}}$, thereby providing a principled framework for dimension reduction via spectral analysis.

We begin by assuming that
 $\textbf{X}$ is a univariate random variable. In this case, the ICMI relationship in \eqref{conmeanind1} admits the   equivalent forms
 \begin{eqnarray}\label{equvlent-IM}
	&&\mathbb{E}\{X | \mathbf{Y}\}=\mathbb{E}\{X\}, a.s.\\
	 & \Longleftrightarrow&  E\{X-E\{X\}| \mathbf{Y}\}=0 \nonumber \\
	&\Longleftrightarrow & E\{[X-E\{X\}] w(\mathbf{Y}, \boldsymbol{\mathbf{\xi}})  \}=0,\forall w(\cdot, \boldsymbol{\xi})\in \mathcal{H},\nonumber
\end{eqnarray}
where   $``\Leftrightarrow"$ denotes equivalence   and  $\mathcal{H}$  is   a suitably chosen function space. Here,  $w(\cdot, \boldsymbol{\mathbf{\xi}})$, parameterized by $\boldsymbol{\mathbf{\xi}}$,  is referred to as a weight function.   As shown in \citet{Stinchcombe1998Consistent},   the equivalence in \eqref{equvlent-IM} holds if and only if   $\mathrm{Span}\{ \mathcal{H}\}$  is   dense  in the weak topology of the conjugate space of $L^{p}(\Omega, \sigma(\mathbf{Y}), P),$ for some $p \in[1, \infty]$.   Various classes of    $\mathcal{H}$ have been proposed in the statistics and econometrics literature, including
$\mathcal{H}=\{ w(\mathbf{Y}, \boldsymbol{\mathbf{\xi}})\!:\! w(\mathbf{Y}, \boldsymbol{\mathbf{\xi}})=I( \mathbf{Y}\leq \boldsymbol{\mathbf{\xi}}),\boldsymbol{\mathbf{\xi}} \in \mathbb{R}^{q} \}$ \citep{Stute1997Nonparametric},   $\mathcal{H}=\{ w(\mathbf{Y}, \boldsymbol{\mathbf{\xi}})\!:\! w(\mathbf{Y}, \boldsymbol{\mathbf{\xi}})=\exp(i \boldsymbol{\mathbf{\xi}}^T  \mathbf{Y}),\boldsymbol{\mathbf{\xi}} \in \mathbb{R}^{q}, i=\sqrt{-1} \}$ \citep{BIERENS1982105}, $\mathcal{H}=\{ w(\mathbf{Y}, \boldsymbol{\mathbf{\xi}})\!:\! w(\mathbf{Y}, \boldsymbol{\mathbf{\xi}})=\exp(\boldsymbol{\mathbf{\xi}}^T  \mathbf{Y}),\boldsymbol{\mathbf{\xi}} \in \mathbb{R}^{q} \}$ \citep{bierens1990consistent, Stinchcombe1998Consistent},
$\mathcal{H}=\{ w(\mathbf{Y}, \boldsymbol{\mathbf{\xi}}): w(\mathbf{Y}, \boldsymbol{\mathbf{\xi}})=1 /\{1+\exp (c-\boldsymbol{\mathbf{\xi}}^T  \mathbf{Y})\}, c \neq 0,\boldsymbol{\xi} \in \mathbb{R}^{q} \}$\citep{Bierens97Ploberger} and so on.

Based on the equivalence in \eqref{equvlent-IM}, we define the ICMI measure using the weight function $w(\cdot, \cdot)$ as
\begin{eqnarray}\label{meas-weight}
	\mathrm{ICMI}_{w}(X|\mathbf{Y}) &=&  E\{[E\{(X_1- E\{X_1\}) w(\mathbf{Y}_1, \mathbf{Y}_2 )\}]^2 \}\nonumber\\
	&=&\mathrm{E}\{ ( {X}_1-E\{ {X}_1\}) ({X}_2-E\{ {X}_2\})\nonumber\\
	&&\times   w(\mathbf{Y}_1, \mathbf{Y}_3 ) \overline{w(\mathbf{Y}_2, \mathbf{Y}_3)}\},
\end{eqnarray}
where  $(X_i,\mathbf{Y}_i)$ for  $i=1,2,3$     are   independent and identically distributed (i.i.d.) copies  of $(X,\mathbf{Y})$, and
$\overline{w(\mathbf{Y}_2, \mathbf{Y}_3)}$  denotes the complex conjugate of $ {w(\mathbf{Y}_2, \mathbf{Y}_3)}.$ From  \eqref{equvlent-IM} and \eqref{meas-weight}, it follows that   $\mathrm{ICMI}_{w}(X|\mathbf{Y})=0$    if and only if  $E\{{X}|\mathbf{Y}\}=E\{X\}$ a.s.
Hence, $\mathrm{ICMI}_{w}(X|\mathbf{Y})$ serves as a valid measure of ICMI.

We now focus on the indicator weight class  $\mathcal{H}=\{ w(\mathbf{Y},\boldsymbol{\mathbf{\xi}})\!:\! w(\mathbf{Y}, \boldsymbol{\mathbf{\xi}})=I( \mathbf{Y}\leq \boldsymbol{\mathbf{\xi}}),\boldsymbol{\mathbf{\xi}} \in \mathbb{R}^{q} \}$.   In this case, $\mathrm{ICMI}_{w}(X|\mathbf{Y})^2$ becomes
\begin{eqnarray*}\label{IM-I}
	\mathrm{ICMI}_{\mathrm{ID}}(X|\mathbf{Y})
	&=& \mathrm{E} \{ ( {X}_1\!-\!E\{ {X}_1\}) ({X}_2\!-\! E\{ {X}_2\})\\
	&&\times I(\mathbf{Y}_1\leq \mathbf{Y}_3 ) I(\mathbf{Y}_2\leq \mathbf{Y}_3 ) \}.
\end{eqnarray*}
Here, the subscript  ``ID"  indicates the use of the indicator weight function.
When  $\textbf{Y}$ is univariate, \citet{jasa2011tm10563} introduced the SIRS measure for feature screening, defined as
\begin{equation*}
	{  { \Lambda}}_\mathrm{SIRS} =\mathbb{E} \{\Omega^{2}( {Y}) \},
\end{equation*}
where $\Omega(y)=\mathbb{E}\{[X-E\{X\} ]E[I( {Y}< {y} )| X]\}=\operatorname{cov}(X, I( {Y}< {y})).$      It is easily verified that ${  { \Lambda}}_\mathrm{SIRS}= \mathrm{ICMI}_{\mathrm{ID}}(X| {Y}),$ showing that ${  { \Lambda}}_\mathrm{SIRS}$ is a special case of $\mathrm{ICMI}_{w}(X|\mathbf{Y})$  under the indicator weight.

We now extend the measure
 $\mathrm{ICMI}_{\mathrm{ID}}(X| \mathbf{Y})$   to the multivariate setting where   $\textbf{X}\in \mathbb{R}^{p} $. A natural generalization is given by the matrix
 \begin{eqnarray*}
 \mathbf{M}&=&\mathrm{E} \{( \textbf{X}_1-E\{ \textbf{X}_1\}) (\textbf{X}_2-E\{ \textbf{X}_2\})^{T} \\
	&&\times I(\mathbf{Y}_1\leq \mathbf{Y}_3 ) I(\mathbf{Y}_2\leq \mathbf{Y}_3 )  \}.
\end{eqnarray*}
The following results clarify the link between  $\mathbf{M}$ and the conditional mean independence in   \eqref{PFC2}:
\begin{description}
\item [(i)] If there exists a nonzero vector $\boldsymbol{\beta} \in \mathbb{R}^{p}$ such that $E\{\boldsymbol{\beta}^{T}\textbf{X} \mid \textbf{Y}\} = E\{\boldsymbol{\beta}^{T}\textbf{X}\}$, then by the properties of $\mathrm{ICMI}_{\mathrm{ID}}(X|\mathbf{Y})$, we have
\begin{eqnarray}\label{independ-beta}
\mathrm{ICMI}_{\mathrm{ID}}(\boldsymbol{\beta}^{T}\textbf{X}|\mathbf{Y})
=\boldsymbol{\beta}^{T}\mathbf{M}\boldsymbol{\beta}=0.
\end{eqnarray}
This implies that $\boldsymbol{\beta}$ is an eigenvector of $\mathbf{M}$ corresponding to a zero eigenvalue.

\item [(ii)] Conversely, if there exists a nonzero $\boldsymbol{\beta} \in \mathbb{R}^{p}$ such that $E\{\boldsymbol{\beta}^{T}\textbf{X} \mid \textbf{Y}\} \neq E\{\boldsymbol{\beta}^{T}\textbf{X}\}$, then
     \begin{eqnarray}\label{depend-beta}
     \mathrm{ICMI}_{\mathrm{ID}}(\boldsymbol{\beta}^{T}\textbf{X}|\mathbf{Y})
=\boldsymbol{\beta}^{T}\mathbf{M}\boldsymbol{\beta}> 0.
\end{eqnarray}
Such $\boldsymbol{\beta}$ corresponds to an eigenvector of $\mathbf{M}$ associated with a nonzero eigenvalue.
\end{description}

In summary, the matrix $\mathbf{M}$ systematically encodes the ICMI relationship between linear combinations of $\mathbf{X}$ and $\mathbf{Y}$,   providing the basis for identifying conditionally mean independent directions and thus estimating $\mathbf{B}_{\perp}$ (equivalently, $\mathbf{B}$),  as established in \eqref{PFC2}. We therefore define $\mathbf{M}$ as the Inverse Conditional Mean Independence Matrix (ICMIM)  under the indicator weight function
\begin{eqnarray*}\label{ICMIM-indictor}
	\mathrm{ICMI.M}_\mathrm{ID}(\textbf{X}|\mathbf{Y})\!\!
	&=&\!\!\mathrm{E} \{( \textbf{X}_1\!\!-\!\! E\{ \textbf{X}_1\}) (\textbf{X}_2\!\!-\!\!E\{ \textbf{X}_2\})^{T}\\
	&&\times I(\mathbf{Y}_1\leq \mathbf{Y}_3 ) I(\mathbf{Y}_2\leq \mathbf{Y}_3 ) \}.
\end{eqnarray*}
This matrix serves as a foundational tool for quantifying the ICMI structure in \eqref{conmeanind1} and recovering the central subspace $\mathcal{S}_{Y \mid \mathbf{X}}$.

When $\textbf{Y}$ is a univariate  random  variable, \citet{Zhu2010Dimension}   proposed the CUME matrix for estimating   $\mathcal S_{Y|{\bf X}},$ given by
\begin{eqnarray*}\label{CUME}
	{ \boldsymbol{ \Lambda}}_\mathrm{CUME}
	&=&\mathrm{E} \{\mathbf{m}( {Y}) \mathbf{m}^{T}( {Y})\varpi( {Y})\}\nonumber\\
	&=&\mathrm{E} \{ ( \mathbf{X}_1-E\{ \mathbf{X}_1\}) (\mathbf{X}_2-E\{ \mathbf{X}_2\})^T\nonumber\\
	&&\times I(Y_1\leq Y_3 ) I(Y_2\leq Y_3 )
	\varpi( Y_3) \},
\end{eqnarray*}
where   $ \mathbf{m}( {y})=\mathrm{E}\{ (\mathbf{X}-E\{\textbf{X}\}) I(Y \leq  {y})\} $  for $ {y} \in \mathbb{R} $  and  $\varpi(\cdot)$ is a nonnegative weight function.  Clearly,  ${ \boldsymbol{ \Lambda}}_\mathrm{CUME}=\mathrm{ICMI.M}_\mathrm{ID}(\textbf{X}|\mathbf{Y})$  if $\varpi(\cdot)=1$.

A key distinction is that $\mathrm{ICMI}_{\mathrm{ID}}(X|\mathbf{Y})$ and $\mathrm{ICMI.M}_{\mathrm{ID}}(\textbf{X}|\mathbf{Y})$ extend to multivariate $\mathbf{Y}$, unlike the ${  { \Lambda}}_\mathrm{SIRS}$  and ${ \boldsymbol{ \Lambda}}_\mathrm{CUME}$ methods. Additionally, although all four rank-based statistics benefit from robustness to heavy-tailed distributions and require no moment assumptions, they share a fundamental weakness: their reliance on fragile component-wise comparisons. A minor perturbation in any dimension can drastically alter the result, leading to their well-known instability in multivariate settings-a challenge we directly address in the subsequent sections.

Note that the SDR method based on the PFC model in \eqref{PFC} requires the error term $\boldsymbol{\varepsilon}$ to follow a Gaussian distribution. In contrast, our ICMI method using $\mathrm{ICMI.M}_{\mathrm{ID}}(\textbf{X} \mid \mathbf{Y})$ does not rely on this assumption.
This can be seen from   \eqref{eq:cond-moments}, which only requires $ {E}\{\boldsymbol{\varepsilon} \mid \mathbf{Y})\} = 0$ to hold. Although our  method is motivated by the PFC framework, it is fundamentally built upon the ICMR relationship given in \eqref{conmeanind1}, which is equivalent to $ {E}\{\boldsymbol{\varepsilon} \mid \mathbf{Y}\} = 0$ with $\boldsymbol{\varepsilon} = \mathbf{X} - {E}\{\mathbf{X}\}$. A deeper discussion of the connection
  between   ICMI-based and PFC-based methods is provided in Section \ref{IRMR}.

\section{ICMI matrix via projection }\label{section3}

\subsection{Methodology}

As aforementioned, the rank-based measures $\mathrm{ICMI}_{\mathrm{ID}}(X|\mathbf{Y})$ and $\mathrm{ICMI.M}_{\mathrm{ID}}(\textbf{X}|\mathbf{Y})$ exhibit instability in multivariate settings. To overcome this, we adopt a projection technique, drawing upon its well-documented strength in the literature \citep{Zhu1997On, escanciano2006consistent, Baringhaus2004On, Zhu2017Projection}.

We begin by assuming that $\textbf{X}$ is a univariate random variable. For any $q > 1$, let $\mathbb{S}^{q-1} = \{\boldsymbol{\beta} \in \mathbb{R}^q : \|\boldsymbol{\beta}\| = 1\}$ denote the unit hypersphere in $\mathbb{R}^q$. According to Theorem 1 in \citet{BIERENS1982105}, the following equivalence holds
\begin{eqnarray*}
	&&E\{X \mid \mathbf{Y}\} = E\{X\}, \quad \text{a.s.} \\
	&\Longleftrightarrow& E\{[X - E\{X\}]   I(\boldsymbol{\beta}^\top\mathbf{Y} \leq u)\} = 0,
\end{eqnarray*}
for all $u \in \mathbb{R}$ and $\boldsymbol{\beta} \in \mathbb{S}^{q-1}$.
By averaging over all $u \in \mathbb{R}$ and all one-dimensional projection directions $\boldsymbol{\beta} \in \mathbb{S}^{p-1}$, we conclude that $E\{X \mid \mathbf{Y}\} = E\{X\},$ a.s. if and only if
\begin{eqnarray*}
	\int_{\mathbb{S}^{p-1}} \int_{-\infty}^\infty \{ E\{[X - E\{X\}]  I(\boldsymbol{\beta}^\top\mathbf{Y} \leq u)\} \}^2 \\
	\times \, dF_{\boldsymbol{\beta}^\top\mathbf{Y}}(u) \, d\boldsymbol{\beta} = 0.
\end{eqnarray*}
Then, we  define the following measure
\begin{eqnarray*}
	&&\mathrm{ICMI}_\mathrm{PR}({X}|\mathbf{Y}) \\
	&=&\int_{\mathbb{S}^{p-1}} \int_{-\infty}^\infty \{ E\{[X-E\{X\}]I(\boldsymbol{\beta}^T\mathbf{Y}\leq u)\} \}^2\\
	&&\times d F_{\boldsymbol{\beta}^T\mathbf{Y}}(u) d\boldsymbol{\beta}.
\end{eqnarray*}
Here, the subscript ``PR" in $\mathrm{ICMI}_{\mathrm{PR}}(X \mid \mathbf{Y})$ indicates that it relies on a projection technique.

By \citet{escanciano2006consistent} or  Lemma 1 in  \citet{Zhu2017Projection}, some algebra yields that
\begin{eqnarray*}\label{IMP}
	\mathrm{ICMI}_\mathrm{PR}({X}|\mathbf{Y}) \!\!&=&\!\!-\mathrm{E}\{ ( {X}_1\!\!-\!\! E\{ {X}_1\}) ({X}_2\!\!-\!\! E\{ {X}_2\})\\
	&&\times\mathrm{ang}(\mathbf{Y}_1- \mathbf{Y}_3,\mathbf{Y}_2- \mathbf{Y}_3)  \},
\end{eqnarray*}
where $\mathrm{ang}(\textbf{U}_1,\textbf{U}_2)$ denotes the angle between vectors
$\textbf{U}_1$ and $\textbf{U}_2$, defined as
$$\mathrm{ang}(\textbf{U}_1,\textbf{U}_2)=\arccos \{ {\textbf{U}_1^T \textbf{U}_2}/(\|\textbf{U}_1\|_2\| \textbf{U}_2\|_2) \}.$$
From this definition, it follows that $\mathrm{ICMI}_\mathrm{PR}({X}|\mathbf{Y}) \geq0,$   with equality     if and only if    $E\{{X}|\mathbf{Y}\}=E\{X\},$ a.s.

We   extend the definition of $\mathrm{ICMI}_\mathrm{PR}(X \mid \mathbf{Y})$ to the multivariate case where $\mathbf{X}$ is a random vector. Following the reasoning for $\mathrm{ICMI.M}_{\mathrm{ID}}(\mathbf{X} \mid \mathbf{Y})$, we define the projection-based ICMI matrix as
\begin{eqnarray*}\label{MCUME}
	\mathrm{ICMI.M}_\mathrm{PR}(\textbf{X}|\mathbf{Y})\!\!
	&=&\!\!-\mathrm{E}\{  ( \mathbf{X}_1- E\{\! \mathbf{X}_1\!\}) (\mathbf{X}_2\!\!-\!\! E\{\! \mathbf{X}_2\!\})^T\\ &&\times\mathrm{ang}(\mathbf{Y}_1- \mathbf{Y}_3,\mathbf{Y}_2- \mathbf{Y}_3)\!\}.
\end{eqnarray*}
In contrast to $\mathrm{ICMI.M}_\mathrm{ID}(\mathbf{X} \mid \mathbf{Y})$,   which depends on the ranks of  $\mathbf{Y}$, the  measure $\mathrm{ICMI.M}_\mathrm{PR}(\mathbf{X} \mid \mathbf{Y})$  is constructed using angles between differences of  $\mathbf{Y}$ vectors.  This   preserves the advantages of $\mathrm{ICMI.M}_\mathrm{ID}(\mathbf{X} \mid \mathbf{Y})$ while mitigating potential instability caused by component-wise ranking.

\begin{prop} \label{PR_CUME}
If ${Y}$ is continuous and univariate, and the weight function  is chosen as   $\varpi(\cdot) = 1$ in the definition of $\boldsymbol{\Lambda}_{\mathrm{CUME}}$, then  we have
$\mathrm{ICMI.M}_\mathrm{PR}(\textbf{X}|{Y})=2\pi{ \boldsymbol{ \Lambda}}_\mathrm{CUME}.$
 \end{prop}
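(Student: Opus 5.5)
The plan is to reduce the angle kernel, in the univariate case, to products of indicators and then compare term by term with the definition of $\boldsymbol{\Lambda}_{\mathrm{CUME}}$ with $\varpi(\cdot)=1$. Throughout, write $\boldsymbol{\mu}=E\{\mathbf{X}\}$ and $I_j = I(Y_j\le Y_3)$ for $j=1,2$. I assume $E\|\mathbf{X}\|<\infty$, so that all expectations below are finite and Fubini's theorem applies.

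\emph{Step 1 (the angle for scalars).} When $q=1$, the quantities $Y_1-Y_3$ and $Y_2-Y_3$ are real numbers. Their angle is $0$ if they have the same sign and $\pi$ if they have opposite signs. It is undefined only when one of them is zero. Since $Y$ is continuous and $Y_1,Y_2,Y_3$ are i.i.d., the events $\{Y_1=Y_3\}$ and $\{Y_2=Y_3\}$ have probability zero. Hence, almost surely,
\begin{equation*}
\mathrm{ang}(Y_1-Y_3,Y_2-Y_3)=\pi\bigl\{I_1(1-I_2)+(1-I_1)I_2\bigr\}=\pi\,(I_1+I_2-2I_1I_2).
\end{equation*}
This identity is the only place where continuity of $Y$ is used. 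The main care needed is to justify that the null events do not affect the expectation, which holds because the integrand is integrable.

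\emph{Step 2 (substitution and cancellation).} Substituting into the definition of $\mathrm{ICMI.M}_\mathrm{PR}(\mathbf{X}|Y)$ gives
\begin{equation*}
-\pi E\{(\mathbf{X}_1-\boldsymbol{\mu})(\mathbf{X}_2-\boldsymbol{\mu})^T I_1\}
-\pi E\{(\mathbf{X}_1-\boldsymbol{\mu})(\mathbf{X}_2-\boldsymbol{\mu})^T I_2\}
+2\pi E\{(\mathbf{X}_1-\boldsymbol{\mu})(\mathbf{X}_2-\boldsymbol{\mu})^T I_1I_2\}.
\end{equation*}
\begin{itemize}
\item In the first term, $(\mathbf{X}_2,Y_2)$ is independent of $(\mathbf{X}_1,Y_1,Y_3)$. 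Conditioning on $(\mathbf{X}_1,Y_1,Y_3)$ therefore factors out $E\{\mathbf{X}_2-\boldsymbol{\mu}\}=\mathbf{0}$, so the term vanishes.
\item The second term vanishes by the same argument, with the roles of indices $1$ and $2$ swapped.
\item The remaining term is exactly $2\pi$ times the expectation defining $\boldsymbol{\Lambda}_{\mathrm{CUME}}$ with $\varpi\equiv1$.
\end{itemize}
This yields $\mathrm{ICMI.M}_\mathrm{PR}(\mathbf{X}|Y)=2\pi\boldsymbol{\Lambda}_{\mathrm{CUME}}$.

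One can also verify the third term via $\boldsymbol{\Lambda}_{\mathrm{CUME}}=E\{\mathbf{m}(Y_3)\mathbf{m}^T(Y_3)\}$ by conditioning on $Y_3$, but this is not needed. No step is a real obstacle; the only delicate point is the almost-sure identity in Step 1 together with the integrability needed to drop the two cross terms.
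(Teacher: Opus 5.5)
Your proof is correct and follows essentially the same route as the paper. Both arguments use continuity of $Y$ to write the scalar angle a.s.\ as $\pi$ times the indicator that $Y_1-Y_3$ and $Y_2-Y_3$ have opposite signs, expand, drop the single-indicator terms by independence and $E\{\mathbf{X}_2-E\{\mathbf{X}_2\}\}=\mathbf{0}$, and identify the surviving $2\pi E\{(\mathbf{X}_1-\boldsymbol{\mu})(\mathbf{X}_2-\boldsymbol{\mu})^T I(Y_1\le Y_3)I(Y_2\le Y_3)\}$ with $2\pi\boldsymbol{\Lambda}_{\mathrm{CUME}}$; your explicit assumption $E\|\mathbf{X}\|<\infty$ is a useful addition that the paper leaves implicit.
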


Proposition \ref{PR_CUME} shows that when $\mathbf{Y}$ is continuous and univariate, $\mathrm{ICMI.M}_{\mathrm{PR}}(\mathbf{X} \mid Y)$ is proportional to $\boldsymbol{\Lambda}_{\mathrm{CUME}}$, differing only by a constant factor of $2\pi$. Therefore, $\mathrm{ICMI.M}_{\mathrm{PR}}(\mathbf{X} \mid Y)$ can be viewed as a natural extension of $\boldsymbol{\Lambda}_{\mathrm{CUME}}$ to the case where $\mathbf{X}$ is a multivariate random vector.

A natural point of comparison for our proposed measure $\mathrm{ICMI.M}_{\mathrm{PR}}(\mathbf{X} \mid \mathbf{Y})$ is the martingale difference divergence matrix (MDDM) introduced by \citet{Lee2016Martingale},  given by
\begin{eqnarray*}
	&&\mathrm {MDDM}(\mathbf{X}| \mathbf{Y})\nonumber\\
	&=&\!\!  \int_{\mathbb{R}^q}\!\! | E\{\mathbf{X}e^{i<\textbf{s}, \mathbf{Y}>}\}\!-\!E\{\mathbf{X}\}E\{e^{i<\textbf{s}, \mathbf{Y}>}\}|^2\omega(\textbf{s})d\mathbf{s}\nonumber \\
	&=&  -\mathrm{E} \{( \mathbf{X}_1\!-\!E\{ \mathbf{X}_1\}) (\mathbf{X}_2\!-\!E\{ \mathbf{X}_2\})^T\|\textbf{Y}_1\!-\!\textbf{Y}_2 \| \},
\end{eqnarray*}
with $\omega(\mathbf{s}) = 1/(\|\mathbf{s}\|^{1+q}c_q)$, $c_q = \pi^{(q+1)/2}/\Gamma((q+1)/2)$, and $\Gamma(\cdot)$ being the gamma function.

The connection and distinction between $\mathrm{MDDM}(\mathbf{X} \mid \mathbf{Y})$ and $\mathrm{ICMI.M}_{\mathrm{PR}}(\mathbf{X} \mid \mathbf{Y})$ can be clarified using Lemma 6.1 of \citet{2018arXiv180300715K}. In particular, the angular distance between any $\mathbf{y}_1, \mathbf{y}_2 \in \mathbb{R}^{q}$ is defined as
\[
\rho_{\text{Angle}}^{\mathbf{W}}(\mathbf{y}_1, \mathbf{y}_2) = \mathbb{E}_{\mathbf{W}}[\mathrm{ang}(\mathbf{y}_1 - \mathbf{W}, \mathbf{y}_2 - \mathbf{W})].
\]
Taking the expectation with respect to the Lebesgue measure $\boldsymbol{\nu}$ yields
\begin{eqnarray*}
\rho_{\text{Angle}}^{\boldsymbol{\nu}}(\mathbf{y}_1, \mathbf{y}_2) &=& \mathbb{E}_{\mathbf{W} \sim \boldsymbol{\nu}}[\mathrm{ang}(\mathbf{y}_1 - \mathbf{W}, \mathbf{y}_2 - \mathbf{W})] \\
&=& \gamma_{d} \|\mathbf{y}_1 - \mathbf{y}_2\|,
\end{eqnarray*}
where $\gamma_{d}$ is a positive constant. This leads to the following unified representations
\begin{eqnarray*}
		\!\!\mathrm{ICMI.M}_\mathrm{PR}(\textbf{X}|\mathbf{Y})
		\!\!&=&\!\!-\mathrm{E}\{ \rho_{\text {Angle}}^{\textbf{Y}}(\mathbf{Y}_1, \mathbf{Y}_2)\} \nonumber\\
		&&\times ( \mathbf{X}_1\!\!-\!\! E\{ \!\mathbf{X}_1\!\})(\mathbf{X}_2\!\!-\!\! E\{\! \mathbf{X}_2\!\})^T \},\label{MDDM-PR}\nonumber\\
		\!\!\mathrm {MDDM}(\mathbf{X}| \mathbf{Y})\!\!&=&\!\!-\mathrm{E}\{\rho_{\text {Angle}}^{\boldsymbol{\nu}}(\mathbf{Y}_1, \mathbf{Y}_2) \\
		&&\times(\mathbf{X}_1\!\!-\!\! E\{\! \mathbf{X}_1\!\}) (\mathbf{X}_2\!\!-\!\! E\{ \!\mathbf{X}_2\!\})^T\}.\nonumber
\end{eqnarray*}
The  key  difference between the two methods is  their distance metrics: $\mathrm{ICMI.M}_{\mathrm{PR}}$ adopts $\rho_{\text{Angle}}^{\mathbf{Y}}$   based on the     distribution of $\mathbf{Y}$, whereas $\mathrm{MDDM}$ relies on $\rho_{\text{Angle}}^{\boldsymbol{\nu}}$  derived from the Lebesgue measure. Moreover,  $\mathrm{MDDM}$ requires the moment condition $\mathrm{E}\{\|\mathbf{Y}_1 - \mathbf{Y}_2\|^2\} < \infty$ to be well-defined, whereas $\mathrm{ICMI.M}_{\mathrm{PR}}$  imposes no such requirement on $\mathbf{Y}$. As a result, $\mathrm{ICMI.M}_{\mathrm{PR}}$ exhibits   robustness when $\mathbf{Y}$ follows a heavy-tailed distribution.

As outlined in Section \ref{section2}, ICMI matrices can be used to identify the central subspace $\mathcal{S}_{Y \mid \mathbf{X}}$. The following theorem establishes the validity of our ICMI approach for sufficient dimension reduction.

\begin{thm}\label{them-projection}
	Assume that the linearity condition holds, i.e.,  $E\{\mathbf{X} | \mathbf{\Gamma}^T \mathbf{X}\}=\mathbf{P}_{\mathbf{\Gamma}}^T(\mathbf{\Sigma}) \mathbf{X}$,   where  $\mathbf{P}_{\mathbf{\Gamma}}(\mathbf{\Sigma})=\mathbf{\Gamma}(\mathbf{\Gamma}^T \mathbf{\Sigma} \Gamma)^{-1} \mathbf{\Gamma}^T \boldsymbol{\Sigma} $.   Then, we have  $\mathcal{S}({ \boldsymbol{ \Lambda}}_\mathrm{PR}) \subseteq \mathbf{\Sigma} \mathcal{S}_{\textbf{Y}| \textbf{X}}$, where ${ \boldsymbol{ \Lambda}}_\mathrm{PR}=  \mathrm{ICMI.M}_\mathrm{PR}(\textbf{X}|\mathbf{Y})$.
\end{thm}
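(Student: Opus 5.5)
The plan is to rewrite $\boldsymbol{\Lambda}_{\mathrm{PR}}$ as a positive semidefinite average of outer products of the vectors $E\{(\mathbf{X}-E\{\mathbf{X}\}) I(\boldsymbol{\beta}^T\mathbf{Y}\le u)\}$. I will then show that each such vector lies in $\boldsymbol{\Sigma}\mathcal{S}_{\textbf{Y}|\textbf{X}}$, using the linearity condition and the reduction $\mathbf{Y}\indep\mathbf{X}\mid\boldsymbol{\Gamma}^T\mathbf{X}$. Here $\boldsymbol{\Gamma}$ denotes a basis of $\mathcal{S}_{\textbf{Y}|\textbf{X}}$.

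\emph{Step 1 (integral representation).} The scalar identity used to define $\mathrm{ICMI}_{\mathrm{PR}}$ (from \citet{escanciano2006consistent} or Lemma 1 of \citet{Zhu2017Projection}) is bilinear in $(X_1,X_2)$. Applied entrywise, or to $\mathbf{a}^T\mathbf{X}$ for arbitrary $\mathbf{a}$ followed by polarization, it gives
\begin{eqnarray*}
\boldsymbol{\Lambda}_{\mathrm{PR}} = c\int_{\mathbb{S}^{q-1}}\int_{-\infty}^{\infty} \mathbf{m}(\boldsymbol{\beta},u)\,\mathbf{m}(\boldsymbol{\beta},u)^T\, dF_{\boldsymbol{\beta}^T\mathbf{Y}}(u)\, d\boldsymbol{\beta},
\end{eqnarray*}
with $\mathbf{m}(\boldsymbol{\beta},u)=E\{(\mathbf{X}-E\{\mathbf{X}\})I(\boldsymbol{\beta}^T\mathbf{Y}\le u)\}$ and $c>0$ a constant. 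This shows $\boldsymbol{\Lambda}_{\mathrm{PR}}$ is positive semidefinite.

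\emph{Step 2 (each $\mathbf{m}$ lies in $\boldsymbol{\Sigma}\mathcal{S}_{\textbf{Y}|\textbf{X}}$).} Condition on $\mathbf{X}$ and then use the independence to write
\begin{eqnarray*}
\mathbf{m}(\boldsymbol{\beta},u) = E\{(\mathbf{X}-E\{\mathbf{X}\})\, g(\boldsymbol{\Gamma}^T\mathbf{X})\},
\end{eqnarray*}
where $g(\boldsymbol{\Gamma}^T\mathbf{X})=P(\boldsymbol{\beta}^T\mathbf{Y}\le u\mid \mathbf{X})$. Conditioning further on $\boldsymbol{\Gamma}^T\mathbf{X}$ and applying the linearity condition, $E\{\mathbf{X}-E\{\mathbf{X}\}\mid\boldsymbol{\Gamma}^T\mathbf{X}\}=\boldsymbol{\Sigma}\boldsymbol{\Gamma}(\boldsymbol{\Gamma}^T\boldsymbol{\Sigma}\boldsymbol{\Gamma})^{-1}\boldsymbol{\Gamma}^T(\mathbf{X}-E\{\mathbf{X}\})$. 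Centering is harmless here, because linearity with the stated projection forces the intercept to vanish; equivalently, one may take $E\{\mathbf{X}\}=0$ without loss of generality. Hence
\begin{eqnarray*}
\mathbf{m}(\boldsymbol{\beta},u)=\boldsymbol{\Sigma}\boldsymbol{\Gamma}\,\mathbf{v}(\boldsymbol{\beta},u)\in\boldsymbol{\Sigma}\mathcal{S}_{\textbf{Y}|\textbf{X}},
\end{eqnarray*}
where $\mathbf{v}(\boldsymbol{\beta},u)=(\boldsymbol{\Gamma}^T\boldsymbol{\Sigma}\boldsymbol{\Gamma})^{-1}E\{\boldsymbol{\Gamma}^T(\mathbf{X}-E\{\mathbf{X}\})g(\boldsymbol{\Gamma}^T\mathbf{X})\}$.

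\emph{Step 3 (column space of the integral).} Let $\boldsymbol{\eta}$ be orthogonal to $\boldsymbol{\Sigma}\mathcal{S}_{\textbf{Y}|\textbf{X}}$. Then $\boldsymbol{\eta}^T\mathbf{m}(\boldsymbol{\beta},u)=0$ for every $(\boldsymbol{\beta},u)$, so $\boldsymbol{\Lambda}_{\mathrm{PR}}\boldsymbol{\eta}=0$. Therefore $\mathcal{S}(\boldsymbol{\Sigma}\mathcal{S}_{\textbf{Y}|\textbf{X}})^{\perp}\subseteq\ker(\boldsymbol{\Lambda}_{\mathrm{PR}})$. Since $\boldsymbol{\Lambda}_{\mathrm{PR}}$ is symmetric, taking orthogonal complements gives $\mathcal{S}(\boldsymbol{\Lambda}_{\mathrm{PR}})\subseteq\boldsymbol{\Sigma}\mathcal{S}_{\textbf{Y}|\textbf{X}}$.

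\emph{Main obstacle.} The delicate point is Step 1: justifying the matrix form of the angle identity. This requires Fubini/Tonelli to exchange the outer products with the integration over $\boldsymbol{\beta}$ and $u$, which needs $E\|\mathbf{X}\|^2<\infty$; the indicator weights are bounded, so no moments of $\mathbf{Y}$ are needed. I will verify it through the scalar quadratic forms $\mathbf{a}^T\boldsymbol{\Lambda}_{\mathrm{PR}}\mathbf{a}=\mathrm{ICMI}_{\mathrm{PR}}(\mathbf{a}^T\mathbf{X}\mid\mathbf{Y})$, for which the cited lemma applies directly.

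A fallback avoids Step 1. Write $\mathrm{ang}(\mathbf{Y}_1-\mathbf{Y}_3,\mathbf{Y}_2-\mathbf{Y}_3)=h(\mathbf{Y}_1,\mathbf{Y}_2)$ after integrating out $\mathbf{Y}_3$. Then condition $\mathbf{X}_1$ and $\mathbf{X}_2$ separately on $(\boldsymbol{\Gamma}^T\mathbf{X}_j,\mathbf{Y}_j)$. Since $E\{\mathbf{X}\mid\boldsymbol{\Gamma}^T\mathbf{X},\mathbf{Y}\}=E\{\mathbf{X}\mid\boldsymbol{\Gamma}^T\mathbf{X}\}$, linearity turns each centered factor into $\boldsymbol{\Sigma}\boldsymbol{\Gamma}$ times a $d$-vector. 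This exhibits $\boldsymbol{\Lambda}_{\mathrm{PR}}=\boldsymbol{\Sigma}\boldsymbol{\Gamma}\,\mathbf{C}\,\boldsymbol{\Gamma}^T\boldsymbol{\Sigma}$ for some matrix $\mathbf{C}$, which gives the inclusion immediately.
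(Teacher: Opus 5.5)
Your proposal is correct. Your main route differs from the paper's, but your fallback is essentially the paper's own proof. The paper replaces $\mathbf{X}_j-E\{\mathbf{X}_j\}$ by $E\{\mathbf{X}_j-E\{\mathbf{X}_j\}\mid\mathbf{Y}_j\}$ inside the angle form; this is valid because the three copies are independent and the angle depends only on the $\mathbf{Y}$'s. It then passes through $\boldsymbol{\Gamma}^T\mathbf{X}_j$ by the tower property and $\mathbf{Y}\indep\mathbf{X}\mid\boldsymbol{\Gamma}^T\mathbf{X}$, and applies linearity to get $\boldsymbol{\Lambda}_{\mathrm{PR}}=\mathbf{P}_{\boldsymbol{\Gamma}}^T(\boldsymbol{\Sigma})\boldsymbol{\Lambda}_{\mathrm{PR}}\mathbf{P}_{\boldsymbol{\Gamma}}(\boldsymbol{\Sigma})$. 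The column space of this lies in $\mathcal{S}(\mathbf{P}_{\boldsymbol{\Gamma}}^T(\boldsymbol{\Sigma}))=\boldsymbol{\Sigma}\mathcal{S}_{\textbf{Y}|\textbf{X}}$. Your version conditions directly on $(\boldsymbol{\Gamma}^T\mathbf{X}_j,\mathbf{Y}_j)$ and gets $\boldsymbol{\Sigma}\boldsymbol{\Gamma}\mathbf{C}\boldsymbol{\Gamma}^T\boldsymbol{\Sigma}$; it is the same idea.

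Your main route is genuinely different. You invert the projection identity to write $\boldsymbol{\Lambda}_{\mathrm{PR}}$ as a positive mixture of outer products of CUME-type vectors $\mathbf{m}(\boldsymbol{\beta},u)$, then apply the classical inverse-regression argument to each vector. This makes positive semidefiniteness explicit and ties the result to familiar SIR/CUME reasoning. The cost is that you need the identity at the matrix level. That requires Fubini and polarization (fine, since $\boldsymbol{\Lambda}_{\mathrm{PR}}$ is symmetric). When $\mathbf{Y}$ has atoms, it also requires a convention for $\mathrm{ang}$ at zero vectors under which the identity holds exactly. The indicator side forces the value $0$ for $\mathrm{ang}(\mathbf{0},\mathbf{b})$ with $\mathbf{b}\neq\mathbf{0}$, and $-\pi$ for $\mathrm{ang}(\mathbf{0},\mathbf{0})$. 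The paper's direct argument needs none of this: it uses only that the weight is some function of $(\mathbf{Y}_1,\mathbf{Y}_2,\mathbf{Y}_3)$. That is also why it transfers verbatim to the kernel matrix $\boldsymbol{\Lambda}_{\mathrm{K}}$ in Theorem~\ref{them-kenel}.
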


The linearity condition   in Theorem \ref{them-projection} is standard in the SDR literature (see, e.g., \citet{Likc91}), and  generally holds  when $\mathbf{X}$ follows an elliptically symmetric distribution. Under this condition, the theorem implies    that the eigenvectors of $\mathbf{\Sigma}^{-1} \boldsymbol{\Lambda}_\mathrm{PR}$ lie in the central subspace $\mathcal{S}_{\mathbf{Y} \mid \mathbf{X}}$, thereby providing a basis for its   estimation.

We now consider the estimation of   $\mathcal{S}_{\mathbf{Y} \mid \mathbf{X}}$. Let $\{\left(\mathbf{X}_{i}, \mathbf{Y}_{i}\right), i=1, \cdots, n\}$  be  i.i.d. copies of $(\mathbf{X}, \mathbf{Y}),$  where $\mathbf{X}_{i}=({X}_{i1},\cdots, {X}_{ip})^T$ and $\mathbf{Y}_{i}=({Y}_{i1},\cdots, {Y}_{iq})^T.$ The sample mean and covariance matrix of
 $\mathbf{X}$ are given by
  $\overline{\mathbf{X}}_n=\frac{1}{n} \sum_{i=1}^{n} \mathbf{X}_{i}$ and $\widehat{\mathbf{\Sigma}}_n=\frac{1}{n} \sum_{i=1}^{n}\left(\mathbf{X}_{i}-\overline{\mathbf{X}}_n\right)\left(\mathbf{X}_{i}-\overline{\mathbf{X}}_n\right)^T.$ An estimator for $\mathrm{ICMI.M}_\mathrm{PR}(\textbf{X}|\mathbf{Y})$ is   provided by
\begin{eqnarray*}
	\widehat{\mathrm{ICMI.M}}_{\mathrm{PR}}(\mathbf{X} | \mathbf{Y})
	&=& - \frac{1}{n^3}\!\! \sum_{i,j,k=1}^n\!\!
	(\mathbf{X}_i\!\! -\!\!\overline{\mathbf{X}}_{n} )(\mathbf{X}_j\!\! -\!\!\overline{\mathbf{X}}_{n} )^T\\
	&&\times \mathrm{ang}(\mathbf{Y}_i -\mathbf{Y}_k, \mathbf{Y}_j-\mathbf{Y}_k)\\
	&=&\widehat{\boldsymbol{ \Lambda}}_\mathrm{PR}.
\end{eqnarray*}

Since $\widehat{\boldsymbol{ \Lambda}}_\mathrm{PR}$ only involves simple moment estimates, it can be easily computed from the sample. By Theorem \ref{them-projection}, and assuming the structural dimension
$d$  is known, the central subspace can be estimated by the span of the first
$d$ leading eigenvectors of ${\widehat{\mathbf{\Sigma}}_n^{-1} \widehat{ \boldsymbol{ \Lambda}}_\mathrm{PR}}.$

\subsection{Asymptotic properties}

This section establishes the theoretical properties of the projection-based ICMI method in high-dimensional settings. Before presenting the main results, we introduce some necessary notation. For a matrix \(\mathbf{A} \in \mathbb{R}^{p \times p}\), let \(\lambda_{\min}(\mathbf{A})\) and \(\lambda_{\max}(\mathbf{A})\) denote its smallest and largest singular values, respectively. The spectral norm of \(\mathbf{A}\) is defined as \(\|\mathbf{A}\| = \sqrt{\lambda_{\max}(\mathbf{A}^\mathrm{T} \mathbf{A})}\). For a  random variable \(Z\), its sub-Gaussian norm
$\|\cdot\|_{\psi_{2}}$ is given by
\begin{equation*}
\|Z\|_{\psi_{2}} =  \sup _{k \geq 1} k^{-1 / 2} ( {E}\{|Z|^{k}\} )^{1 / k},
\end{equation*}
and its sub-exponential norm  $\|\cdot\|_{\psi_{1}}$ is defined as
\begin{equation*}
\|Z\|_{\psi_{1}}  =  \sup _{k \geq 1} k^{-1} ( {E}\{|Z|^{k}\} )^{1 / k}.
\end{equation*}

To establish the asymptotic properties, we make the following assumptions.

\begin{description}
\item[Condition 1.] The random vector \(\mathbf{X} \in \mathbb{R}^p\) is sub-Gaussian. That is, there exists a constant \(C > 0\) such that for any unit vector \(\boldsymbol{v} \in \mathbb{R}^p\) with \(\|\boldsymbol{v}\| = 1\),
\[
\| \boldsymbol{v}^\mathrm{T} (\mathbf{X} - \mathbb{E}[\mathbf{X}]) \|_{\psi_2}^2 \leq C \cdot \mathbb{E}\left[ (\boldsymbol{v}^\mathrm{T} (\mathbf{X} - \mathbb{E}[\mathbf{X}]))^2 \right].
\]

\item[Condition 2.] There exist positive constants \(C_1\) and \(C_2\) such that
\[
C_1 \leq \lambda_{\min}(\mathbf{\Sigma}) \leq \lambda_{\max}(\mathbf{\Sigma}) \leq C_2.
\]
\end{description}

The sub-Gaussian assumption in Condition 1   is commonly used for the consistency of the sample covariance
matrix $ \widehat{\mathbf{\Sigma}}_n$, see   Theorem 4.7.1    in  \citet{vershynin2018high} and  Assumption 1 in  \citet{bunea2015sample}.  Condition 2 requires the eigenvalues of $\mathbf{X}$ to be bounded away from zero and infinity, a standard assumption
 in the literature of high-dimensional covariance matrix estimation.
See, for example, \citet{Bickel2008Covariance},   \citet{lin2018} and  \citet{wang2019yu}.

\begin{thm}\label{them-kenerl-consistency}
Under Condition 1,  for any \(\alpha \in (0,1)\),
we have
\begin{eqnarray*}\label{T1n-S}
	\operatorname{pr}\Big( \|\widehat{ \boldsymbol{ \Lambda}}_\mathrm{PR} -{ \boldsymbol{ \Lambda}}_\mathrm{PR}\|    \leq C \|\mathbf{\Sigma}\|  \sqrt{ \frac{ p +\log(1/\alpha)}{n} } \Big)\geq 1- \alpha,
\end{eqnarray*}
where $C$ is a  positive constant.
\end{thm}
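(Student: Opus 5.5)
Write $\mathbf{Z}_i=\mathbf{X}_i-E\{\mathbf{X}\}$ and $h(\mathbf{Y}_i,\mathbf{Y}_j,\mathbf{Y}_k)=\mathrm{ang}(\mathbf{Y}_i-\mathbf{Y}_k,\mathbf{Y}_j-\mathbf{Y}_k)\in[0,\pi]$. The proof reduces the operator-norm deviation to a supremum of scalar deviations over unit vectors and controls that supremum by an $\varepsilon$-net.

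\textbf{Reduction to quadratic forms.} Both $\widehat{\boldsymbol\Lambda}_{\mathrm{PR}}$ and $\boldsymbol\Lambda_{\mathrm{PR}}$ are symmetric. Hence
\[
\|\widehat{\boldsymbol\Lambda}_{\mathrm{PR}}-\boldsymbol\Lambda_{\mathrm{PR}}\|=\sup_{\|\boldsymbol v\|=1}\big|\boldsymbol v^T(\widehat{\boldsymbol\Lambda}_{\mathrm{PR}}-\boldsymbol\Lambda_{\mathrm{PR}})\boldsymbol v\big|.
\]
Fix a $1/4$-net $\mathcal N$ of $\mathbb S^{p-1}$ with $|\mathcal N|\le 9^p$. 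The supremum is at most twice the maximum over $\mathcal N$. It therefore suffices to show that for each fixed $\boldsymbol v$,
\[
\operatorname{pr}\Big(\big|\boldsymbol v^T(\widehat{\boldsymbol\Lambda}_{\mathrm{PR}}-\boldsymbol\Lambda_{\mathrm{PR}})\boldsymbol v\big|>t\,\|\boldsymbol\Sigma\|\Big)\le 2\exp\{-c\,n\min(t^2,t)\}.
\]
Taking $t\asymp\sqrt{(p+\log(1/\alpha))/n}$ and a union bound over $9^p$ points then gives the claim. When $t>1$, either the bound is trivial after adjusting $C$, or the linear term handles it.

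\textbf{Scalar decomposition.} For fixed $\boldsymbol v$, set $a_i=\boldsymbol v^T\mathbf Z_i$. By Condition 1 these are i.i.d., mean zero, and sub-Gaussian with $\|a_i\|_{\psi_2}^2\le C\boldsymbol v^T\boldsymbol\Sigma\boldsymbol v\le C\|\boldsymbol\Sigma\|$. Replacing $\mathbf X_i-\overline{\mathbf X}_n$ by $\mathbf Z_i-\bar{\mathbf Z}$ and expanding splits $\boldsymbol v^T\widehat{\boldsymbol\Lambda}_{\mathrm{PR}}\boldsymbol v$ into three pieces:
\[
-\frac1{n^3}\sum_{i,j,k}a_ia_jh_{ijk}\;+\;\frac{2\bar a}{n^3}\sum_{i,j,k}a_ih_{ijk}\;-\;\frac{\bar a^2}{n^3}\sum_{i,j,k}h_{ijk}.
\]
\begin{itemize}
\item The last piece is at most $\pi\bar a^2$. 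Since $\bar a^2$ is sub-exponential with scale $\|\boldsymbol\Sigma\|/n$, this piece is $O(\|\boldsymbol\Sigma\|t)$ with the required probability.
\item For the middle piece, $|n^{-2}\sum_{j,k}h_{ijk}|\le\pi$. Conditionally on the $\mathbf Y$'s, $n^{-1}\sum_i a_i w_i$ with $|w_i|\le\pi$ is not a sum of independent terms, because $a_i$ and $\mathbf Y_i$ are dependent. I would bound it crudely by $\pi\big(n^{-1}\sum_i a_i^2\big)^{1/2}|\bar a|\cdot\sqrt n\cdot n^{-1/2}$; better, handle it as a V-statistic of lower order, as in the next step.
\end{itemize}

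\textbf{Main step: the V-statistic.} The main piece is a third-order V-statistic with kernel
\[
g(\mathbf W_i,\mathbf W_j,\mathbf W_k)=-a_ia_jh_{ijk},\qquad \mathbf W_i=(\mathbf X_i,\mathbf Y_i),
\]
whose mean is exactly $\boldsymbol v^T\boldsymbol\Lambda_{\mathrm{PR}}\boldsymbol v$. I would symmetrize $g$ and write its Hoeffding decomposition:
\begin{itemize}
\item The linear projection $g_1(\mathbf W)$ is a sum of terms such as $a\,E\{a'h(\mathbf Y,\mathbf Y',\mathbf Y'')\mid\mathbf W\}$. Each is (bounded)$\times$(sub-Gaussian) or (bounded conditional expectation of $a'$)$\times$(bounded), so $g_1$ has sub-exponential norm $\lesssim\|\boldsymbol\Sigma\|$. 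Bernstein's inequality gives the rate $\|\boldsymbol\Sigma\|\max(t,t^2)$ at probability $e^{-cn\min(t,t^2)}$.
\item The degenerate second- and third-order parts contribute at order $\|\boldsymbol\Sigma\|/n$ times sub-exponential or chaos-type variables. I would control them with decoupling plus Hanson--Wright-type bounds for products $a_ia_j$ with bounded weights, or more simply with an exponential inequality for U-statistics with unbounded kernels (e.g.\ Arcones--Gin\'e with truncation).
\item The diagonal terms (indices coinciding) are $O(\|\boldsymbol\Sigma\|/n)$ times averages of $a_i^2$ and are handled the same way.
\end{itemize}
I expect the degenerate higher-order terms to be the main obstacle, since the kernel is unbounded and exponential tails of order $e^{-cnt}$ are needed. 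If a direct route fails, I would truncate $a_i$ at level $\sqrt{\|\boldsymbol\Sigma\|\log n}$, apply bounded-kernel U-statistic Bernstein inequalities, and control the truncation remainder via sub-Gaussian tails. Finally, assembling the pieces over the net and absorbing the factors $2$ and $\pi$ into $C$ yields the stated bound.
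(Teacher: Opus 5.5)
Your $1/4$-net reduction and your three-piece expansion coincide with the paper's (its $-S_{1n}-S_{2n}+2S_{3n}$). The two $\bar a$-pieces are fine; even your crude Cauchy--Schwarz bound $2\pi|\bar a|\,(n^{-1}\sum_i a_i^2)^{1/2}$ for the middle piece suffices.

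The gap is the main step. Write $\theta=\boldsymbol v^T\boldsymbol\Lambda_{\mathrm{PR}}\boldsymbol v$. You need $\operatorname{pr}(|U_n-\theta|>t\|\boldsymbol\Sigma\|)\le 2\exp\{-cn\min(t,t^2)\}$ for the third-order U-statistic with the unbounded kernel $-a_ia_jh_{ijk}$. This is never proved, only deferred to a menu of tools, and the tools you name do not work as stated:
\begin{itemize}
\item Hanson--Wright with ``bounded weights'' requires conditioning on the $\mathbf Y$'s. Given $\mathbf Y_i$, however, $a_i$ is neither centered nor, under Condition~1 alone, conditionally sub-Gaussian. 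This is the same dependence you correctly flagged for the middle piece.
\item Truncating at $\sqrt{\|\boldsymbol\Sigma\|\log n}$ costs a $\log n$ in the linear part of the exponent. It also leaves a truncation event whose probability is only polynomially small in $n$, whereas the union bound over $9^p$ net points needs per-vector failure probabilities of order $9^{-p}\alpha$.
\end{itemize}
The missing idea, which the paper uses, is Hoeffding's representation of a U-statistic as a permutation average of block means of independent kernel evaluations (Serfling, Sec.~5.1.6):
\[
U_n=\frac{1}{n!}\sum_{\sigma}\frac{1}{m}\sum_{r=0}^{m-1}k\bigl(\mathbf W_{\sigma(3r+1)},\mathbf W_{\sigma(3r+2)},\mathbf W_{\sigma(3r+3)}\bigr),\qquad m=\lfloor n/3\rfloor,
\]
with $k$ the symmetrized kernel and $\sigma$ ranging over permutations of $\{1,\ldots,n\}$. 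Convexity of $x\mapsto e^{\lambda x}$ gives $E\,e^{\lambda(U_n-\theta)}\le\{E\,e^{\lambda(k-\theta)/m}\}^m$. Moreover $\|k\|_{\psi_1}\le\pi\|\boldsymbol v^T\mathbf Z\|_{\psi_2}^2\lesssim\|\boldsymbol\Sigma\|$, since $k$ is two sub-Gaussian factors times a factor in $[0,\pi]$. The sub-exponential MGF bound plus a Chernoff optimization over $|\lambda|\lesssim m/\|\boldsymbol\Sigma\|$ then yields $\exp\{-cm\min(t^2,t)\}$ directly. No Hoeffding decomposition or separate treatment of degenerate parts is needed, because the target exponent involves only the kernel's own $\psi_1$-scale, not the variance of its linear projection.

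Separately, your remark that for $t>1$ ``the linear term handles it'' is wrong. In that regime the per-vector bound is $e^{-cnt}$, so beating $9^p/\alpha$ forces $t\gtrsim x:=(p+\log(1/\alpha))/n$, not $t\gtrsim\sqrt{x}$. The union bound therefore gives only $C\|\boldsymbol\Sigma\|\max\{\sqrt x,x\}$, which matches the stated bound only when $x\lesssim1$.

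The paper's proof has the same flaw. Its final step uses $Ct^2/(\|\boldsymbol\Sigma\|^2+t\|\boldsymbol\Sigma\|)\le Ct^2/\|\boldsymbol\Sigma\|^2$, which yields a necessary rather than a sufficient condition on $t$. This is not merely an artifact of the proof. Take $\mathbf X\sim N(\mathbf 0,\mathbf I_p)$ and a continuous univariate $Y$ independent of $\mathbf X$. Then $\boldsymbol\Lambda_{\mathrm{PR}}=\mathbf 0$, while $\widehat{\boldsymbol\Lambda}_{\mathrm{PR}}=n^{-1}\mathbb X^T\mathbf A\,\mathbb X$, where $\mathbb X$ is the $n\times p$ data matrix and $\mathbf A$ is positive semidefinite, depends only on the $Y$'s, and has $\|\mathbf A\|$ of order one. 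Hence $\|\widehat{\boldsymbol\Lambda}_{\mathrm{PR}}\|$ is of order $p/n\gg\sqrt{p/n}$ when $p\gg n$. You should state the conclusion for $p+\log(1/\alpha)\le n$ (the regime behind the $p=o(n)$ consistency claim), or with the $\max$.
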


If $\|\mathbf{\Sigma}\|$ is bounded, Theorem~\ref{them-kenerl-consistency} implies the following convergence rate
\begin{equation*}\label{consis-rate-PR}
\|\widehat{\boldsymbol{\Lambda}}_{\mathrm{PR}}-\boldsymbol{\Lambda}_{\mathrm{PR}} \|
=O_{p}\left( \sqrt{ \frac{p }{n}}\right).
\end{equation*}
Hence, the estimator is consistent provided that $p = o(n)$.
Our proof relies primarily on the Chernoff method \citep{vershynin2018high} and U-statistics theory to bound the tail probability $\mathrm{pr}\big( |\widehat{\boldsymbol{\Lambda}}\mathrm{PR} - \boldsymbol{\Lambda}\mathrm{PR}| > t \big)$; see Lemma~\ref{lemma1-thprojection} in the Appendix.

It is worth noting the convergence rate $ O(\sqrt{ {p }/{n}})$  of $\widehat{\boldsymbol{\Lambda}}_\mathrm{PR}$ derived in Theorem~\ref{them-kenerl-consistency}. Two closely related works are those of \citet{lin2018} and \citet{wang2019yu}. In \citet{lin2018}, the convergence rate of  SIR  is given by
  \begin{equation*}\label{consis-rate-Lin}
\|\widehat{\boldsymbol{\Lambda}}_{\mathrm{SIR}}-\boldsymbol{\Lambda}_{\mathrm{SIR}} \|=O_{p}\left(\frac{1}{H^{\vartheta}}+\frac{H^{2} p}{n}+\sqrt{\frac{H^{2} p}{n}}\right),
\end{equation*}
where $H$ denotes the number of slices and $\vartheta$ is a nonnegative constant (see Theorem~1 in \citet{lin2018}). By choosing $H = O\left( (p / n)^{-1/(2(\vartheta+1))} \right)$, the optimal rate is achieved
\begin{equation*}\label{consis-rate-Lin-optimal}
\|\widehat{\boldsymbol{\Lambda}}_{\mathrm{SIR}}-\boldsymbol{\Lambda}_{\mathrm{SIR}} \|=O_{p}\left((p / n)^{\frac{\vartheta}{2(\vartheta+1)}}\right).
\end{equation*}
On the other hand, \citet{wang2019yu} established the following rate for ${\boldsymbol{ \Lambda}}_\mathrm{CUME}$
\begin{equation*}\label{consis-rate-wang2019yu-optimal}
\|\widehat{ \boldsymbol{ \Lambda}}_\mathrm{CUME} - { \boldsymbol{ \Lambda}}_\mathrm{CUME}\| =O_{p}\left(\sqrt{\frac{\max (p, \log n) }{n}}\right).
\end{equation*}
Compared with these results, our derived rate provides an improvement in the high-dimensional regime where the dimension $p$ diverges.

\begin{thm}\label{them-projection-consistency}
Under Conditions 1-2,
we have
\begin{eqnarray*}
	 \| \widehat{\mathbf{\Sigma}}_n^{-1}\widehat{ \boldsymbol{ \Lambda}}_\mathrm{PR}-  {\mathbf{\Sigma}}^{-1} { \boldsymbol{ \Lambda}}_\mathrm{PR} \|=O_{p}\left( \sqrt{ \frac{p }{n}}\right).
\end{eqnarray*}
\end{thm}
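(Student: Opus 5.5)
We decompose the error algebraically and control each piece with results already in hand. Write
\begin{eqnarray*}
\widehat{\mathbf{\Sigma}}_n^{-1}\widehat{\boldsymbol{\Lambda}}_\mathrm{PR}-\mathbf{\Sigma}^{-1}\boldsymbol{\Lambda}_\mathrm{PR}
&=&\widehat{\mathbf{\Sigma}}_n^{-1}\big(\widehat{\boldsymbol{\Lambda}}_\mathrm{PR}-\boldsymbol{\Lambda}_\mathrm{PR}\big)\\
&&+\big(\widehat{\mathbf{\Sigma}}_n^{-1}-\mathbf{\Sigma}^{-1}\big)\boldsymbol{\Lambda}_\mathrm{PR},
\end{eqnarray*}
and use $\widehat{\mathbf{\Sigma}}_n^{-1}-\mathbf{\Sigma}^{-1}=\widehat{\mathbf{\Sigma}}_n^{-1}(\mathbf{\Sigma}-\widehat{\mathbf{\Sigma}}_n)\mathbf{\Sigma}^{-1}$. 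By submultiplicativity of the spectral norm,
\begin{eqnarray*}
\|\widehat{\mathbf{\Sigma}}_n^{-1}\widehat{\boldsymbol{\Lambda}}_\mathrm{PR}-\mathbf{\Sigma}^{-1}\boldsymbol{\Lambda}_\mathrm{PR}\|
&\le&\|\widehat{\mathbf{\Sigma}}_n^{-1}\|\,\|\widehat{\boldsymbol{\Lambda}}_\mathrm{PR}-\boldsymbol{\Lambda}_\mathrm{PR}\|\\
&&+\|\widehat{\mathbf{\Sigma}}_n^{-1}\|\,\|\widehat{\mathbf{\Sigma}}_n-\mathbf{\Sigma}\|\,\|\mathbf{\Sigma}^{-1}\|\,\|\boldsymbol{\Lambda}_\mathrm{PR}\|.
\end{eqnarray*}
It therefore suffices to establish four facts.

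\begin{description}
\item[(a)] $\|\widehat{\boldsymbol{\Lambda}}_\mathrm{PR}-\boldsymbol{\Lambda}_\mathrm{PR}\|=O_p(\sqrt{p/n})$. This follows from Theorem~\ref{them-kenerl-consistency}, since Condition 2 gives $\|\mathbf{\Sigma}\|\le C_2$; taking $\alpha$ fixed but arbitrary yields the $O_p$ statement.
\item[(b)] $\|\widehat{\mathbf{\Sigma}}_n-\mathbf{\Sigma}\|=O_p(\sqrt{p/n})$. This is the standard sub-Gaussian covariance bound under Condition 1 (Theorem 4.7.1 of \citet{vershynin2018high}). The additional term from centering at $\overline{\mathbf{X}}_n$ is $(\overline{\mathbf{X}}_n-E\mathbf{X})(\overline{\mathbf{X}}_n-E\mathbf{X})^T$, whose norm is $\|\overline{\mathbf{X}}_n-E\mathbf{X}\|^2=O_p(p/n)$, which is of smaller order.
\item[(c)] $\|\boldsymbol{\Lambda}_\mathrm{PR}\|=O(1)$. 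For a unit vector $\mathbf{v}$, the definition gives
\[
\mathbf{v}^T\boldsymbol{\Lambda}_\mathrm{PR}\mathbf{v}=\int\!\!\int\{E[\mathbf{v}^T(\mathbf{X}-E\mathbf{X})I(\boldsymbol{\beta}^T\mathbf{Y}\le u)]\}^2\,dF\,d\boldsymbol{\beta}.
\]
The integrand is at most $\mathbf{v}^T\mathbf{\Sigma}\mathbf{v}$ by Cauchy--Schwarz. Hence the quadratic form is bounded by $|\mathbb{S}^{q-1}|\,C_2$, or equivalently bounded directly using $|\mathrm{ang}|\le\pi$. Because $\boldsymbol{\Lambda}_\mathrm{PR}$ is positive semidefinite, this bounds its spectral norm.
\item[(d)] $\|\widehat{\mathbf{\Sigma}}_n^{-1}\|=O_p(1)$. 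Weyl's inequality gives $\lambda_{\min}(\widehat{\mathbf{\Sigma}}_n)\ge C_1-\|\widehat{\mathbf{\Sigma}}_n-\mathbf{\Sigma}\|$. By (b) and $p=o(n)$, this is at least $C_1/2$ with probability tending to one. Also $\|\mathbf{\Sigma}^{-1}\|\le 1/C_1$ by Condition 2.
\end{description}

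Combining the pieces, the right-hand side of the displayed bound is $O_p(1)\cdot O_p(\sqrt{p/n})+O_p(1)\cdot O_p(\sqrt{p/n})\cdot C_1^{-1}\cdot O(1)=O_p(\sqrt{p/n})$.

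The main obstacle is step (d). The statement implicitly requires $p=o(n)$, or at least $p/n$ small, so that $\widehat{\mathbf{\Sigma}}_n$ is invertible with its smallest eigenvalue bounded away from zero. I would state this explicitly and work on the event where $\|\widehat{\mathbf{\Sigma}}_n-\mathbf{\Sigma}\|\le C_1/2$, whose probability tends to one. If $p/n$ does not tend to zero, the claimed rate is trivial or vacuous anyway. A secondary check is the centering argument in (b) and making sure the constant in (c) does not depend on $p$; both are routine.
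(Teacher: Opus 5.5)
Your proposal is correct and follows essentially the paper's route: the same perturbation decomposition of $\widehat{\mathbf{\Sigma}}_n^{-1}\widehat{\boldsymbol{\Lambda}}_\mathrm{PR}-\mathbf{\Sigma}^{-1}\boldsymbol{\Lambda}_\mathrm{PR}$ (yours in two terms, the paper's in three), combined with Theorem~\ref{them-kenerl-consistency}, Theorem 4.7.1 of \citet{vershynin2018high}, and Condition 2. Your steps (b)--(d) make explicit what the paper leaves implicit: the centering correction at $\overline{\mathbf{X}}_n$, the bound on $\|\boldsymbol{\Lambda}_\mathrm{PR}\|$, and control of $\|\widehat{\mathbf{\Sigma}}_n^{-1}\|$ via Weyl under $p=o(n)$. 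One aside is not quite right: when $p/n\not\to 0$ the claimed rate is not automatically ``trivial'', since an $O_p(1)$ bound still requires $\lambda_{\min}(\widehat{\mathbf{\Sigma}}_n)$ to stay bounded away from zero.
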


Theorem \ref{them-projection-consistency} shows that $\widehat{\mathbf{\Sigma}}_n^{-1}\widehat{\boldsymbol{\Lambda}}_\mathrm{PR}$ achieves a convergence rate of $O_p(\sqrt{p/n})$.  By contrast,  \citet{Zhu2010Dimension} obtained for the CUME method the rate
$$ \| \widehat{\mathbf{\Sigma}}_n^{-1}\widehat{ \boldsymbol{ \Lambda}}_\mathrm{CUME} -\mathbf{\Sigma}^{-1} \boldsymbol{ \Lambda}_\mathrm{CUME} \|=O(p n^{-1 / 2} \log n).$$
Comparison of these rates confirms that our convergence rate is sharper.

\section{ICMI matrix via kernel}\label{section4}
\subsection{Methodology}

This section presents a new class of ICMI methods based on translation-invariant kernels, developed through the equivalence in \eqref{equvlent-IM}. Our approach relies on the following celebrated Bochner's Theorem \citep{Bochner1933}, stated here in the form given by
 \citet[Theorem 6.6]{wendland_2004}.

\begin{lemma}\label{Bochner}
A continuous function  $K : \mathbb{R}^{m} \rightarrow \mathbb{C}$   is positive  semi-definite
if and only if it is the Fourier transform of a finite nonnegative Borel measure $\mu$ on $\mathbb{R}^{m} $, that is,
\begin{equation*}
	K(\mathbf{z})=\int_{\mathbb{R}^m} e^{-i \mathbf{z}^{\mathrm{T}} \mathbf{u}} d \mu(\mathbf{u}),
\end{equation*}
for any $\mathbf{z} \in \mathbb{R}^m.$
\end{lemma}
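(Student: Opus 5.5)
The statement is Bochner's theorem in the form of \citet[Theorem 6.6]{wendland_2004}. The plan is to prove the two directions separately, following the classical route. Throughout, positive semi-definite means that for all $N$, all points $\mathbf{z}_1,\dots,\mathbf{z}_N\in\mathbb{R}^m$ and all $c\in\mathbb{C}^N$,
\[
\sum_{j,k} c_j\bar c_k K(\mathbf{z}_j-\mathbf{z}_k)\ge 0 .
\]

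\emph{Sufficiency.} This direction is a direct computation. Suppose $K(\mathbf{z})=\int e^{-i\mathbf{z}^{\mathrm T}\mathbf{u}}\,d\mu(\mathbf{u})$ with $\mu$ finite and nonnegative. Continuity of $K$ follows from dominated convergence, since $|e^{-i\mathbf{z}^{\mathrm T}\mathbf{u}}|=1$ and $\mu$ is finite. For positive semi-definiteness, substitute the integral representation into the quadratic form:
\[
\sum_{j,k} c_j\bar c_k K(\mathbf{z}_j-\mathbf{z}_k)=\int \Big|\sum_j c_j e^{-i\mathbf{z}_j^{\mathrm T}\mathbf{u}}\Big|^2 d\mu(\mathbf{u})\ge 0 .
\]

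\emph{Necessity.} This is the substantive direction.

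First, I will collect the elementary consequences of positive semi-definiteness. Taking $N=1$ and $N=2$ gives $K(0)\ge0$, $K(-\mathbf{z})=\overline{K(\mathbf{z})}$ and $|K(\mathbf{z})|\le K(0)$, so $K$ is bounded.

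Second, I will regularize with a Gaussian window. For $\varepsilon>0$ set
\[
f_\varepsilon(\mathbf{u})=(2\pi)^{-m}\int K(\mathbf{z})\,e^{-\varepsilon\|\mathbf{z}\|^2}e^{i\mathbf{z}^{\mathrm T}\mathbf{u}}\,d\mathbf{z}.
\]
I will show $f_\varepsilon\ge0$. The key step is to pass from the discrete positive semi-definiteness to its integral version
\[
\iint K(\mathbf{x}-\mathbf{y})\,g(\mathbf{x})\overline{g(\mathbf{y})}\,d\mathbf{x}\,d\mathbf{y}\ge0 ,
\]
for continuous integrable $g$, by approximating the double integral with Riemann sums (this uses continuity and boundedness of $K$). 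I then take $g(\mathbf{x})=e^{-2\varepsilon\|\mathbf{x}\|^2}e^{i\mathbf{x}^{\mathrm T}\mathbf{u}}$. A change of variables to $\mathbf{x}-\mathbf{y}$ and $\mathbf{x}+\mathbf{y}$ shows that the resulting double integral equals a positive constant times $f_\varepsilon(\mathbf{u})$, which gives $f_\varepsilon(\mathbf{u})\ge0$.

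Third, I will bound the total mass. Because $f_\varepsilon\ge0$, Fourier inversion against a second Gaussian, followed by the monotone convergence theorem, gives
\[
\int f_\varepsilon(\mathbf{u})\,d\mathbf{u}=K(0)<\infty .
\]
Hence $d\mu_\varepsilon=f_\varepsilon\,d\mathbf{u}$ is a finite nonnegative measure of mass $K(0)$, and
\[
K(\mathbf{z})e^{-\varepsilon\|\mathbf{z}\|^2}=\int e^{-i\mathbf{z}^{\mathrm T}\mathbf{u}}\,d\mu_\varepsilon(\mathbf{u}),
\]
up to the sign convention of the exponent, which is fixed by replacing $\mathbf{u}$ with $-\mathbf{u}$.

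Fourth, I will let $\varepsilon\to0$. The left-hand sides converge pointwise to $K$, and $K$ is continuous at $0$. By L\'evy's continuity theorem applied to the normalized measures $\mu_\varepsilon/K(0)$ (the case $K(0)=0$ is trivial, since then $K\equiv 0$), the family is tight and converges weakly to a finite measure $\mu$ whose Fourier transform is $K$.

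\emph{Main obstacle.} I expect the hardest part to be the positivity and integrability step, namely justifying $f_\varepsilon\ge0$ and the bound on $\int f_\varepsilon$ without assuming $K\in L^1$. This is exactly why the Gaussian damping and the monotone convergence argument are used. The final limit can simply be cited from L\'evy's continuity theorem.
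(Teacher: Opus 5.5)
Your argument is correct; it is the classical proof of Bochner's theorem, and the computations check out. The identity $\|\mathbf{x}\|^2+\|\mathbf{y}\|^2=\tfrac12(\|\mathbf{x}-\mathbf{y}\|^2+\|\mathbf{x}+\mathbf{y}\|^2)$ together with the Jacobian $2^{-m}$ turns the integrated quadratic form with your $g$ into $\pi^m(\pi/\varepsilon)^{m/2}f_\varepsilon(\mathbf{u})$. Gaussian summability plus monotone convergence then gives $\int f_\varepsilon=K(0)$ without assuming $K\in L^1$. Moreover, with your definition of $f_\varepsilon$, $L^1$ Fourier inversion already yields $K(\mathbf{z})e^{-\varepsilon\|\mathbf{z}\|^2}=\int e^{-i\mathbf{z}^{\mathrm{T}}\mathbf{u}}f_\varepsilon(\mathbf{u})\,d\mathbf{u}$ with the paper's sign, so no reflection $\mathbf{u}\mapsto-\mathbf{u}$ is needed.

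The paper, by contrast, gives no proof. The lemma is quoted from \citet[Theorem 6.6]{wendland_2004} and used as a black box, so the comparison is between a citation and a self-contained argument. Citing is the economical choice for a classical result. Your route makes the lemma self-contained, modulo two standard facts:
\begin{itemize}
\item $L^1$ Fourier inversion, which you use implicitly to pass from $\int f_\varepsilon=K(0)$ to the representation at every $\mathbf{z}$;
\item L\'evy's continuity theorem, which you cite for the final limit.
\end{itemize}

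It is also worth seeing how the two halves are used downstream. Theorem~\ref{Bochner-induced}(i) starts from a given finite $\mu$, so it needs only your short sufficiency direction. Its proof is the same modulus-squared identity, with the finite sum $\sum_j c_j e^{-i\mathbf{z}_j^{\mathrm{T}}\mathbf{u}}$ replaced by $E\{(X-E\{X\})e^{i\mathbf{u}^{\mathrm{T}}\mathbf{Y}}\}$. Your substantive necessity argument is what justifies the paper's subsequent switch from $\mathrm{ICMI}_\mu$ to $\mathrm{ICMI}_K$ for a general continuous, translation-invariant, positive semi-definite kernel $K$.
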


We begin by assuming that $ \mathbf{X}$ is a univariate random variable. Focusing on the Fourier weight function $w(\mathbf{Y}, \mathbf{y})=\exp(i \mathbf{y}^T\mathbf{Y})$,    \eqref{meas-weight} becomes
\begin{eqnarray*}
\mathrm{ICMI}_\mathrm{F}(X|\mathbf{Y}) &=& \mathrm{E} \{  ( {X}_1-E\{ {X}_1\}) ({X}_2-E\{ {X}_2\})\nonumber\\
&&\times  \exp (i \textbf{Y}_1^T\textbf{Y}_3 )\exp(-i \textbf{Y}_2^T\textbf{Y}_3 ) \},
\end{eqnarray*}
which can be rewritten as
\begin{eqnarray*}
\mathrm{ICMI}_\mathrm{F}(X|\mathbf{Y})
&=& \int_{\mathbb{R}^q}  |\mathrm{E}\{{X}\exp(i \textbf{u}^T\textbf{Y} )\} \\
&&-\mathrm{E}\{ X \}\mathrm{E}\{\exp(i \textbf{u}^T\textbf{Y})\}  |^2   d F_\textbf{Y}(\textbf{u}),
\end{eqnarray*}
where $ F_\textbf{Y}(\textbf{u})$ is the CDF of $\textbf{Y}.$ The subscript ``F" indicates the use of the Fourier weight function.

A natural comparison can be drawn between $\mathrm{MDDM}(X|\mathbf{Y})$ and $\mathrm{ICMI}_\mathrm{F}(X|\mathbf{Y})$. While both are derived from similar Fourier-type expressions, the former is defined with respect to the measure  $\omega(\textbf{s})d\textbf{s} = {1}/\{{\|\textbf{s}\|^{1+q}}c_q\}d\textbf{s}$,   whereas the latter uses the probability measure $\mathrm{d} F_{\mathbf{Y}}(\mathbf{u})$. This motivates a generalization via an arbitrary    nonnegative   measure $\mu$, leading to  a new class of ICMI measures. We thus  propose the following measure
 \begin{eqnarray*}
\mathrm{ICMI}_\mathrm{\mu}(X|\mathbf{Y})&=& \int_{\mathbb{R}^q}  |\mathrm{E}\{{X}\exp(i \textbf{u}^T\mathbf{Y} )\} -\mathrm{E}\{ X \}\nonumber\\
&&\times \mathrm{E}\{\exp(i \textbf{u}^T \mathbf{Y})\}  |^2  d \mu(\textbf{u}).
\end{eqnarray*}

Although
$\mathrm{ICMI}_\mathrm{\mu}(X|\mathbf{Y})$ provides a general measure of conditional mean independence,  its direct computation can be computationally intractable. However, when   $\mu(\cdot)$ is a finite nonnegative Borel measure   on $\mathbb{R}^{q}$,     Bochner's Theorem ensures that    $\mathrm{ICMI}_\mathrm{\mu}(X|\mathbf{Y})$   admits a closed-form representation, as shown in the following Theorem~\ref{Bochner-induced}.  In what follows, we restrict our attention to such measures $\mu(\cdot)$.

It should be emphasized that $\mathrm{ICMI}_\mathrm{\mu}(X|\mathbf{Y})$,   defined via a finite nonnegative Borel measure $\mu(\cdot)$ differs fundamentally from $\mathrm{MDDM}(X|\mathbf{Y})$. In the latter, the weight function   $\omega(\textbf{s})ds = {1}/\{{\|\textbf{s}\|^{1+q}}c_q\}ds,$   is not    integrable on $\mathbb{R}^{q}$. Therefore, $\mathrm{MDDM}(X|\mathbf{Y})$  cannot be viewed as a special case of $\mathrm{ICMI}_\mathrm{\mu}(X|\mathbf{Y})$ under any finite nonnegative Borel measure $\mu(\cdot)$.

\begin{thm}\label{Bochner-induced}
If $\mu(\cdot)$ is a finite nonnegative Borel measure on  $\mathbb{R}^{q} $ and $K(\mathbf{y})$ is a positive semidefinite
kernel induced by $\mu(\cdot)$, then we have
\begin{description}
	\item[(i)] $\mathrm{ICMI}_\mathrm{\mu}(X|\mathbf{Y})$ can be represented as
	\begin{eqnarray}
		\mathrm{ICMI}_\mathrm{\mu}(X|\mathbf{Y})
		&=&\mathrm{E}\big\{( {X}_1-E\{ {X}_1\}) ({X}_2-E\{ {X}_2\})\nonumber\\
		&&\times K(\mathbf{Y}_1-\mathbf{Y}_2)  \big\};\nonumber
	\end{eqnarray}
	
	\item[(ii)] $\mathrm{ICMI}_\mathrm{\mu}(X|\mathbf{Y})\geq0$ and  the equality holds  if and only if $\mathrm{E}\{X | \mathbf{Y}\}=\mathrm{E}\{X\},$ a.s.
\end{description}
\end{thm}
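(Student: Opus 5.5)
For part (i) I would expand the squared modulus inside the integral defining $\mathrm{ICMI}_\mu(X|\mathbf{Y})$. Write $\tilde X = X - E\{X\}$. Since $E\{\tilde X\}=0$, we have
\[
E\{X e^{i\mathbf{u}^T\mathbf{Y}}\}-E\{X\}E\{e^{i\mathbf{u}^T\mathbf{Y}}\}=E\{\tilde X e^{i\mathbf{u}^T\mathbf{Y}}\}.
\]
Taking $(X_1,\mathbf{Y}_1)$ and $(X_2,\mathbf{Y}_2)$ to be i.i.d. copies of $(X,\mathbf{Y})$, the squared modulus becomes
\[
|E\{\tilde X e^{i\mathbf{u}^T\mathbf{Y}}\}|^2=E\{\tilde X_1\tilde X_2\, e^{i\mathbf{u}^T(\mathbf{Y}_1-\mathbf{Y}_2)}\}.
\]
Integrating against $\mu$ and exchanging the order of integration gives
\[
E\Big\{\tilde X_1\tilde X_2\int_{\mathbb{R}^q} e^{i\mathbf{u}^T(\mathbf{Y}_1-\mathbf{Y}_2)}\,d\mu(\mathbf{u})\Big\}.
\]
By Lemma \ref{Bochner}, the inner integral is $K(\mathbf{Y}_2-\mathbf{Y}_1)$. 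Since $\mu$ is real and nonnegative, $K(-\mathbf{z})=\overline{K(\mathbf{z})}$. The left-hand side is real, so the expectation equals its own real part. The pair $(1,2)$ is exchangeable, so replacing $\mathbf{Y}_2-\mathbf{Y}_1$ by $\mathbf{Y}_1-\mathbf{Y}_2$ does not change the expectation. This yields the stated representation with $K(\mathbf{Y}_1-\mathbf{Y}_2)$.

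The step that needs care is the Fubini exchange. I would justify it by noting that $|\tilde X_1\tilde X_2 e^{i\mathbf{u}^T(\cdot)}|\le |\tilde X_1||\tilde X_2|$. This bound is integrable with respect to $P\otimes P\otimes\mu$ because $E|\tilde X|<\infty$ (implicitly assumed, since $E\{X\}$ appears) and $\mu(\mathbb{R}^q)<\infty$. Finiteness of $\mu$ is exactly what is used here, which is why MDDM is excluded.

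For part (ii), nonnegativity is immediate from the integral form, since the integrand is a squared modulus and $\mu\ge0$. If $E\{X|\mathbf{Y}\}=E\{X\}$ a.s., then
\[
E\{\tilde X e^{i\mathbf{u}^T\mathbf{Y}}\}=E\{E\{\tilde X|\mathbf{Y}\}e^{i\mathbf{u}^T\mathbf{Y}}\}=0
\]
for every $\mathbf{u}$, so the measure vanishes.

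The converse is the main obstacle. From $\mathrm{ICMI}_\mu=0$ we only learn that $g(\mathbf{u}):=E\{\tilde X e^{i\mathbf{u}^T\mathbf{Y}}\}=0$ for $\mu$-a.e. $\mathbf{u}$. The function $g$ is continuous, by dominated convergence, so it vanishes on $\mathrm{supp}(\mu)$. To conclude, I need $g\equiv0$ on $\mathbb{R}^q$. Then $g$ is the Fourier transform of the finite signed measure $\nu(A)=E\{\tilde X I(\mathbf{Y}\in A)\}$, so uniqueness of Fourier transforms gives $\nu=0$. This means $E\{\tilde X I(\mathbf{Y}\in A)\}=0$ for all Borel $A$, i.e. $E\{X|\mathbf{Y}\}=E\{X\}$ a.s.

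Obtaining $g\equiv 0$ requires $\mu$ to have full support, i.e. $K$ must be characteristic. I would state this as the interpretation of ``kernel induced by $\mu$'' (for example, a Gaussian or Laplace kernel whose spectral density is positive everywhere) and invoke it explicitly. I would also remark that without full support the ``only if'' direction can fail. A counterexample is $\mu=\delta_0$, which gives $K\equiv1$ and an $\mathrm{ICMI}_\mu$ that is identically zero.
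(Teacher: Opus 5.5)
Your proof is correct and follows the same route as the paper's: expand the squared modulus as a double expectation over independent copies, interchange the $\mu$-integral with the expectation, identify the inner integral as the kernel via Bochner's theorem, and for (ii) invoke uniqueness of Fourier transforms. Your extra care also repairs two points the paper glosses over. First, its proof of (ii) passes from $\mathrm{ICMI}_\mu(X|\mathbf{Y})=0$ to $E\{Xe^{i\mathbf{u}^T\mathbf{Y}}\}=E\{X\}E\{e^{i\mathbf{u}^T\mathbf{Y}}\}$ for \emph{all} $\mathbf{u}\in\mathbb{R}^q$, which only follows on $\mathrm{supp}(\mu)$; so the full-support (characteristic) hypothesis you make explicit is genuinely needed for the ``only if'' direction, as your $\mu=\delta_0$ example shows. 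Second, your exchangeability step fixes the $K(\mathbf{Y}_2-\mathbf{Y}_1)$ versus $K(\mathbf{Y}_1-\mathbf{Y}_2)$ sign that arises from the lemma's convention.
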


Theorem \ref{Bochner-induced} shows that \(\mathrm{ICMI}_{\mu}(X \mid \mathbf{Y})\) admits a representation via a translation-invariant kernel function \(K: \mathbb{R}^{q} \times \mathbb{R}^{q} \rightarrow \mathbb{C}\), which depends only on the difference between its arguments, that is, \(K(\mathbf{y}_1, \mathbf{y}_2) = K(\mathbf{y}_1 - \mathbf{y}_2)\) for any \(\mathbf{y}_1, \mathbf{y}_2 \in \mathbb{R}^{q}\). We accordingly adopt the notation \(\mathrm{ICMI}_{K}(X \mid \mathbf{Y})\) in place of \(\mathrm{ICMI}_{\mu}(X \mid \mathbf{Y})\). A variety of translation-invariant kernel functions are available in the kernel methods literature. Common examples include the Gaussian kernel,   the Laplacian kernel,   and
the  rational quadratic  kernel.  A comprehensive overview of such kernels can be found in \citet{Shawetaylor2005Kernel,scholkopf2002learning}.

We  now extend  $\mathrm{ICMI}_\mathrm{K}(X|\mathbf{Y})$ to   the multivariate setting  where    $\textbf{X}$  is multivariate. Define  the following ICMI matrix based on translation-invariant kernels
 \begin{eqnarray*}%\label{MCUME-K}
\mathrm{ICMI.M}_\mathrm{K}(\mathbf{X}|\mathbf{Y})
\!\!&=&\!\! \mathrm{E}\big\{ ( \mathbf{X}_1\!-\!E\{ \mathbf{X}_1\}) (\mathbf{X}_2\!\!-\!\! E\{ \mathbf{X}_2\})^T\\
&&\times  K(\mathbf{Y}_1\!\!-\! \!\mathbf{Y}_2)  \big\}= { \boldsymbol{ \Lambda}}_\mathrm{K}.
\end{eqnarray*}
A natural estimator of   $ { \boldsymbol{ \Lambda}}_\mathrm{K}$
  is   given by
\begin{eqnarray*}
\widehat{\mathrm{ICMI.M}}_\mathrm{K}(\mathbf{X} | \mathbf{Y})
\!\!&=&\!\! \frac{1}{n^2} \sum_{i,j=1}^n
(\mathbf{X}_i -\overline{\mathbf{X}}_{n} ) (\mathbf{X}_j -\overline{\mathbf{X}}_{n} )^T\\
&&\times \mathrm{K}(\mathbf{Y}_i-\mathbf{Y}_j)=\widehat{ \boldsymbol{ \Lambda}}_\mathrm{K}.
\end{eqnarray*}
As in Theorem \ref{them-projection}, we obtain the following result.

\begin{thm}\label{them-kenel}
Assume that the linearity condition holds  $E\{\mathbf{X} | \mathbf{\Gamma}^T \mathbf{X}\}=\mathbf{P}_{\mathbf{\Gamma}}^T(\mathbf{\Sigma}) \mathbf{X}$,   where  $\mathbf{P}_{\mathbf{\Gamma}}(\mathbf{\Sigma})=\mathbf{\Gamma}(\mathbf{\Gamma}^T \mathbf{\Sigma} \Gamma)^{-1} \mathbf{\Gamma}^T \boldsymbol{\Sigma} $.   Then, we have
 $\mathcal{S}({ \boldsymbol{ \Lambda}}_\mathrm{K}) \subseteq \mathbf{\Sigma} \mathcal{S}_{\textbf{Y}| \textbf{X}}$.
\end{thm}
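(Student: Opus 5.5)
The key identity is that, under the linearity condition, every conditional first moment of $\mathbf{X}$ given $\mathbf{Y}$ factors through $\mathbf{\Gamma}^T\mathbf{X}$. Here $\mathbf{\Gamma}$ is a basis of $\mathcal{S}_{\mathbf{Y}|\mathbf{X}}$, so $\mathbf{Y}\indep\mathbf{X}\mid\mathbf{\Gamma}^T\mathbf{X}$.

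The first step is to write $\boldsymbol{\Lambda}_\mathrm{K}$ through conditional means. Since $(\mathbf{X}_1,\mathbf{Y}_1)$ and $(\mathbf{X}_2,\mathbf{Y}_2)$ are independent, conditioning on $(\mathbf{Y}_1,\mathbf{Y}_2)$ gives
\begin{equation*}
\boldsymbol{\Lambda}_\mathrm{K}=\mathrm{E}\big\{\mathbf{m}(\mathbf{Y}_1)\mathbf{m}(\mathbf{Y}_2)^T K(\mathbf{Y}_1-\mathbf{Y}_2)\big\},
\end{equation*}
where $\mathbf{m}(\mathbf{Y})=E\{\mathbf{X}-E\mathbf{X}\mid\mathbf{Y}\}$. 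This step needs $E\|\mathbf{X}\|<\infty$, and we use that $K$ is bounded by $\mu(\mathbb{R}^q)$ because $\mu$ is finite (Lemma \ref{Bochner}).

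The second step is the standard inverse-regression argument. By the tower property and the conditional independence,
\begin{equation*}
\mathbf{m}(\mathbf{Y})=E\{E(\mathbf{X}-E\mathbf{X}\mid\mathbf{\Gamma}^T\mathbf{X},\mathbf{Y})\mid\mathbf{Y}\}=E\{E(\mathbf{X}-E\mathbf{X}\mid\mathbf{\Gamma}^T\mathbf{X})\mid\mathbf{Y}\}.
\end{equation*}
We may center $\mathbf{X}$ without loss of generality. The linearity condition then gives
\begin{equation*}
\mathbf{m}(\mathbf{Y})=\mathbf{\Sigma}\mathbf{\Gamma}(\mathbf{\Gamma}^T\mathbf{\Sigma}\mathbf{\Gamma})^{-1}E\{\mathbf{\Gamma}^T(\mathbf{X}-E\mathbf{X})\mid\mathbf{Y}\}.
\end{equation*}
Hence $\mathbf{m}(\mathbf{Y})=\mathbf{\Sigma}\mathbf{\Gamma}\,\mathbf{g}(\mathbf{Y})$ almost surely, for a $d$-dimensional function $\mathbf{g}$.

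The third step substitutes this into the representation:
\begin{equation*}
\boldsymbol{\Lambda}_\mathrm{K}=\mathbf{\Sigma}\mathbf{\Gamma}\,\mathrm{E}\{\mathbf{g}(\mathbf{Y}_1)\mathbf{g}(\mathbf{Y}_2)^TK(\mathbf{Y}_1-\mathbf{Y}_2)\}\,\mathbf{\Gamma}^T\mathbf{\Sigma}.
\end{equation*}
Every column of the right-hand side lies in $\mathcal{S}(\mathbf{\Sigma}\mathbf{\Gamma})=\mathbf{\Sigma}\mathcal{S}_{\mathbf{Y}|\mathbf{X}}$, which gives the claim.

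A check on the complex case: $K$ may be complex-valued, but Theorem \ref{Bochner-induced} shows that the quadratic forms $\boldsymbol{\beta}^T\boldsymbol{\Lambda}_\mathrm{K}\boldsymbol{\beta}$ are real and nonnegative. The column-space argument above does not use realness, so it goes through unchanged.

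Equivalently, one can argue by orthogonality. Take $\boldsymbol{\beta}\perp\mathbf{\Sigma}\mathcal{S}_{\mathbf{Y}|\mathbf{X}}$, so that $\mathbf{\Gamma}^T\mathbf{\Sigma}\boldsymbol{\beta}=0$. Then $\boldsymbol{\beta}^T\mathbf{m}(\mathbf{Y})=0$, hence $\boldsymbol{\beta}^T\boldsymbol{\Lambda}_\mathrm{K}=0$, and $\mathcal{S}(\boldsymbol{\Lambda}_\mathrm{K})\subseteq\mathbf{\Sigma}\mathcal{S}_{\mathbf{Y}|\mathbf{X}}$.

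The only delicate point is the first step: justifying the interchange of conditioning and the kernel weight. This needs integrability, which follows from boundedness of $K$ together with a finite first moment of $\mathbf{X}$. Given that, everything mirrors the proof of Theorem \ref{them-projection}, with the kernel weight in place of the angle weight.
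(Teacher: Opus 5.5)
Your proposal is correct and follows essentially the paper's own argument. Both rewrite $\boldsymbol{\Lambda}_\mathrm{K}$ through the conditional means $E\{\mathbf{X}-E\mathbf{X}\mid\mathbf{Y}\}$, then use the tower property with the sufficiency of $\mathbf{\Gamma}^T\mathbf{X}$ and the linearity condition to pull $\mathbf{P}_{\mathbf{\Gamma}}^T(\mathbf{\Sigma})$ out of each conditional mean. The paper records this as the identity $\boldsymbol{\Lambda}_\mathrm{K}=\mathbf{P}_{\mathbf{\Gamma}}^T(\mathbf{\Sigma})\boldsymbol{\Lambda}_\mathrm{K}\mathbf{P}_{\mathbf{\Gamma}}(\mathbf{\Sigma})$, whereas you factor out $\mathbf{\Sigma}\mathbf{\Gamma}$ explicitly; your integrability remark, using $|K|\le\mu(\mathbb{R}^q)$, is a justification the paper leaves implicit.
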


Theorem \ref{them-kenel} ensures that   $\mathrm{ICMI.M}_\mathrm{K}(\mathbf{X}|\mathbf{Y})$   lies in     $\mathcal S_{\textbf{Y}| \textbf{X}}$ under the linearity condition. We now present the convergence rate of  $\widehat{\mathrm{ICMI.M}}_\mathrm{K}(\mathbf{X} | \mathbf{Y})$ in the following theorem.

\begin{thm}\label{them- Bochner-consistency}
Under Condition 1,  for any \(\alpha \in (0,1)\),
we have
\begin{eqnarray*}\label{T1n-S-kernel}
	\mathrm{pr}\Big(  \|\widehat{ \boldsymbol{ \Lambda}}_\mathrm{K} - { \boldsymbol{ \Lambda}}_\mathrm{K}\|    \leq C \|\mathbf{\Sigma}\|  \sqrt{ \frac{p+\log(1/\alpha)}{n} } \Big)\geq 1-2\alpha,
\end{eqnarray*}
where  $C>0$ is a constant.  Under the additional Condition 2, we have
\begin{eqnarray*}
	 \|\widehat{\mathbf{\Sigma}}_n^{-1}\widehat{ \boldsymbol{ \Lambda}}_\mathrm{K}-  {\mathbf{\Sigma}}^{-1} { \boldsymbol{ \Lambda}}_\mathrm{K} \|=O_{p}\left( \sqrt{ \frac{p }{n}}\right).
\end{eqnarray*}
\end{thm}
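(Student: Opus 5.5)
The proof proceeds in the same way as Theorem~\ref{them-kenerl-consistency}, using the kernel structure of $\boldsymbol{\Lambda}_\mathrm{K}$. Write $\mathbf{Z}_i=\mathbf{X}_i-E\{\mathbf{X}\}$ and $\bar{\mathbf{Z}}=\overline{\mathbf{X}}_n-E\{\mathbf{X}\}$. Expanding $\mathbf{X}_i-\overline{\mathbf{X}}_n=\mathbf{Z}_i-\bar{\mathbf{Z}}$ decomposes $\widehat{\boldsymbol{\Lambda}}_\mathrm{K}$ into four terms:
\begin{eqnarray*}
\mathbf{T}_1&=&n^{-2}\sum_{i,j}\mathbf{Z}_i\mathbf{Z}_j^T K(\mathbf{Y}_i-\mathbf{Y}_j),\\
\mathbf{T}_2&=&-\Big(n^{-2}\sum_{i,j}\mathbf{Z}_iK(\mathbf{Y}_i-\mathbf{Y}_j)\Big)\bar{\mathbf{Z}}^T,\\
\mathbf{T}_3&=&\mathbf{T}_2^T,\\
\mathbf{T}_4&=&\bar{\mathbf{Z}}\bar{\mathbf{Z}}^T\, n^{-2}\sum_{i,j}K(\mathbf{Y}_i-\mathbf{Y}_j).
\end{eqnarray*}
Since $\mu$ is finite, $|K(\mathbf{y})|\le K(\mathbf{0})=\mu(\mathbb{R}^q)<\infty$. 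We may also take $K$ real and even, replacing it by its real part if necessary, since $\boldsymbol{\Lambda}_\mathrm{K}$ is real and symmetric.

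The main device is Bochner's representation (Lemma~\ref{Bochner}): $K(\mathbf{Y}_i-\mathbf{Y}_j)=\int e^{-i\mathbf{u}^T\mathbf{Y}_i}e^{i\mathbf{u}^T\mathbf{Y}_j}d\mu(\mathbf{u})$. With $\mathbf{g}_n(\mathbf{u})=n^{-1}\sum_i \mathbf{Z}_i e^{-i\mathbf{u}^T\mathbf{Y}_i}$ and $\mathbf{g}(\mathbf{u})=E\{\mathbf{Z}e^{-i\mathbf{u}^T\mathbf{Y}}\}$, this gives
\[
\mathbf{T}_1=\int \mathbf{g}_n(\mathbf{u})\overline{\mathbf{g}_n(\mathbf{u})}^Td\mu(\mathbf{u}),\qquad
\boldsymbol{\Lambda}_\mathrm{K}=\int \mathbf{g}(\mathbf{u})\overline{\mathbf{g}(\mathbf{u})}^Td\mu(\mathbf{u}).
\]
Hence
\[
\|\mathbf{T}_1-\boldsymbol{\Lambda}_\mathrm{K}\|\le \mu(\mathbb{R}^q)\sup_{\mathbf{u}}\big(2\|\mathbf{g}\|\,\|\mathbf{g}_n-\mathbf{g}\|+\|\mathbf{g}_n-\mathbf{g}\|^2\big).
\]
Because $\|\mathbf{g}(\mathbf{u})\|\le \sup_{\|\mathbf{v}\|=1}E|\mathbf{v}^T\mathbf{Z}|\le\|\mathbf{\Sigma}\|^{1/2}$, everything reduces to bounding $\sup_{\mathbf{u}}\|\mathbf{g}_n(\mathbf{u})-\mathbf{g}(\mathbf{u})\|$.

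The supremum over $\mathbf{u}$ is the main obstacle, and I would first try to avoid it. One option is to work directly with the quadratic form $\mathbf{v}^T(\mathbf{T}_1-\boldsymbol{\Lambda}_\mathrm{K})\mathbf{v}$ for fixed unit $\mathbf{v}$, which is a V-statistic with kernel $h=\mathbf{v}^T\mathbf{Z}_i\mathbf{Z}_j^T\mathbf{v}K(\mathbf{Y}_i-\mathbf{Y}_j)$. The Hoeffding decomposition gives a linear part $2n^{-1}\sum_i\{\mathbf{v}^T\mathbf{Z}_i\, E[\mathbf{v}^T\mathbf{Z}K(\mathbf{Y}_i-\mathbf{Y})\mid \mathbf{Y}_i]-\mathbf{v}^T\boldsymbol{\Lambda}_\mathrm{K}\mathbf{v}\}$, a degenerate part, and the diagonal terms $i=j$. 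The linear part is sub-exponential: it is a product of the sub-Gaussian $\mathbf{v}^T\mathbf{Z}_i$ and a variable bounded by $K(\mathbf{0})\|\mathbf{\Sigma}\|^{1/2}$, with $\psi_1$ norm $\lesssim\|\mathbf{\Sigma}\|$ by Condition 1. Bernstein's inequality then gives deviation $\|\mathbf{\Sigma}\|(\sqrt{t/n}+t/n)$. For the degenerate part, the Bochner factorization shows it equals $\int|\mathbf{v}^T(\mathbf{g}_n-\mathbf{g})|^2d\mu$ minus diagonal corrections. The integrand at each $\mathbf{u}$ is the square of an average of centered sub-Gaussian terms, so its $\psi_1$ norm is $\lesssim\|\mathbf{\Sigma}\|/n$, and Minkowski's integral inequality in $L^k$ norms passes this through $\int d\mu$. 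The diagonal terms $n^{-2}\sum_i(\mathbf{v}^T\mathbf{Z}_i)^2K(\mathbf{0})$ are $O(\|\mathbf{\Sigma}\|/n)$ with exponential tails. Together these give
\[
\mathrm{pr}\big(|\mathbf{v}^T(\mathbf{T}_1-\boldsymbol{\Lambda}_\mathrm{K})\mathbf{v}|>C\|\mathbf{\Sigma}\|(\sqrt{t/n}+t/n)\big)\le 2e^{-t},
\]
which is the analogue of Lemma~\ref{lemma1-thprojection}. A $1/4$-net of $\mathbb{S}^{p-1}$ of size $9^p$, combined with the symmetric-matrix bound $\|\mathbf{A}\|\le 2\max_{\mathbf{v}\in\mathcal{N}}|\mathbf{v}^T\mathbf{A}\mathbf{v}|$, and the choice $t=p\log 9+\log(1/\alpha)$ then yield the rate $\|\mathbf{\Sigma}\|\sqrt{(p+\log(1/\alpha))/n}$ on the relevant range $p+\log(1/\alpha)\le n$.

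The remaining terms are lower order. We have $\|\mathbf{T}_4\|\le K(\mathbf{0})\|\bar{\mathbf{Z}}\|^2$, and $\|\mathbf{T}_2\|\le K(\mathbf{0})\,\|\bar{\mathbf{Z}}\|\cdot\sup_{\|\mathbf{v}\|=1}n^{-1}\sum_i|\mathbf{v}^T\mathbf{Z}_i|$. The standard sub-Gaussian vector bound gives $\|\bar{\mathbf{Z}}\|\le C\|\mathbf{\Sigma}\|^{1/2}\sqrt{(p+\log(1/\alpha))/n}$ with probability at least $1-\alpha$. Spending probability $\alpha$ on $\bar{\mathbf{Z}}$ and $\alpha$ on the net argument gives the overall level $1-2\alpha$.

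For the second claim, decompose
\[
\widehat{\mathbf{\Sigma}}_n^{-1}\widehat{\boldsymbol{\Lambda}}_\mathrm{K}-\mathbf{\Sigma}^{-1}\boldsymbol{\Lambda}_\mathrm{K}=\widehat{\mathbf{\Sigma}}_n^{-1}(\widehat{\boldsymbol{\Lambda}}_\mathrm{K}-\boldsymbol{\Lambda}_\mathrm{K})+\widehat{\mathbf{\Sigma}}_n^{-1}(\mathbf{\Sigma}-\widehat{\mathbf{\Sigma}}_n)\mathbf{\Sigma}^{-1}\boldsymbol{\Lambda}_\mathrm{K}.
\]
Theorem 4.7.1 of \citet{vershynin2018high} gives $\|\widehat{\mathbf{\Sigma}}_n-\mathbf{\Sigma}\|=O_p(\sqrt{p/n})$. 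By Condition 2, $\|\widehat{\mathbf{\Sigma}}_n^{-1}\|=O_p(1)$ when $p=o(n)$, and $\|\boldsymbol{\Lambda}_\mathrm{K}\|\le K(\mathbf{0})\|\mathbf{\Sigma}\|$ is bounded. Combining these with the first part gives the rate $O_p(\sqrt{p/n})$, exactly as in Theorem~\ref{them-projection-consistency}.
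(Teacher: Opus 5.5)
Your argument is correct, but the key concentration step takes a different route from the paper's.

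\textbf{The paper's route.} The paper records only $|K(\mathbf{y}_1-\mathbf{y}_2)|\le K(\mathbf{0})$ and defers to the proofs of Theorem~\ref{them-kenerl-consistency} and Lemma~\ref{lemma1-thprojection}. For each $\boldsymbol{v}$ in the $9^p$-net, $\boldsymbol{v}^T\widehat{\boldsymbol{\Lambda}}_\mathrm{K}\boldsymbol{v}$ is split into a main term and two centering terms, all treated as scalars. The main term is written as a U-statistic. Its moment generating function is bounded via Hoeffding's permutation representation (averages over disjoint blocks) and Jensen's inequality. This uses only that $(\boldsymbol{v}^T\mathbf{Z}_1)(\boldsymbol{v}^T\mathbf{Z}_2)K(\mathbf{Y}_1-\mathbf{Y}_2)$ has $\psi_1$-norm of order $\|\mathbf{\Sigma}\|$, which follows from boundedness of $K$ exactly as $\mathrm{ang}\le\pi$ was used before.

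\textbf{Your route.} You exploit positive definiteness instead. Bochner's factorization writes $\boldsymbol{v}^T(\mathbf{T}_1-\boldsymbol{\Lambda}_\mathrm{K})\boldsymbol{v}$ as a Hoeffding linear term (in fact sub-Gaussian) plus the nonnegative quantity $\int|\boldsymbol{v}^T(\mathbf{g}_n-\mathbf{g})|^2d\mu$. You control the $\psi_1$-norm of the latter by $\|\mathbf{\Sigma}\|/n$ via Minkowski's integral inequality. You then bound the centering terms once, at the matrix level, through $\|\bar{\mathbf{Z}}\|$. For the second claim, your two-term decomposition is equivalent to the paper's three-term one.

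\textbf{What each buys.} The paper's template needs nothing beyond a bounded weight, so one proof serves $\boldsymbol{\Lambda}_\mathrm{PR}$ and $\boldsymbol{\Lambda}_\mathrm{K}$ alike. Yours avoids the decoupling machinery. It also absorbs the diagonal terms $K(\mathbf{0})(\boldsymbol{v}^T\mathbf{Z}_i)^2/n^2$ automatically; these vanish for the angle but not for a kernel, so the paper's template needs an extra, easy step for them. Finally, your route exposes the error structure: a $\sqrt{t/n}$ fluctuation plus a nonnegative $t/n$ remainder.

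That structure shows your caveat $p+\log(1/\alpha)\le n$ is genuinely needed, not a weakness of your proof. Since $\widehat{\boldsymbol{\Lambda}}_\mathrm{K}$ is quadratic in $\mathbf{X}$, its tail is only sub-exponential, so a pure $\sqrt{(p+\log(1/\alpha))/n}$ bound cannot hold when $\log(1/\alpha)\gg n$. The paper's inversion step in the proof of Theorem~\ref{them-kenerl-consistency} reaches that rate only by treating the necessary condition $(p+\log(1/\alpha))/n\le Ct^2/\|\mathbf{\Sigma}\|^2$ as if it were sufficient.

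\textbf{Two small repairs.}
\begin{enumerate}
\item Your bound on $\|\mathbf{T}_2\|$ also needs $\sup_{\|\boldsymbol{v}\|=1}n^{-1}\sum_i|\boldsymbol{v}^T\mathbf{Z}_i|\le\|n^{-1}\sum_i\mathbf{Z}_i\mathbf{Z}_i^T\|^{1/2}=O(\|\mathbf{\Sigma}\|^{1/2})$ on a further high-probability event. The extra probability cost is absorbed into the constants.
\item The reduction to real, even $K$ is not justified by ``$\boldsymbol{\Lambda}_\mathrm{K}$ is real and symmetric''; for asymmetric $\mu$ it is only Hermitian. The paper makes the same tacit restriction, so this does not separate the two proofs.
\end{enumerate}
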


Theorem \ref{them- Bochner-consistency} shows  that $\widehat{\mathbf{\Sigma}}_n^{-1}\widehat{ \boldsymbol{ \Lambda}}_\mathrm{K}$  is consistent at the rate of $\sqrt{ {p }/{n}},$ provided   $p/n\rightarrow 0$.  These results align with those in Theorems~\ref{them-kenerl-consistency} and~\ref{them-projection-consistency}.

\subsection{Connections to the  PFC model }\label{IRMR}
This section investigates the connections among the PFC model in \eqref{PFC}, the Factor Model with Martingale Difference Error (FM-MDE) \citep{Lee2016Martingale}, and our kernel-based ICMI methods.

By the properties of the kernel function, there exist a Hilbert space $\mathcal{H}$ and a feature map $\boldsymbol{\phi} : \mathbb{R}^{q} \rightarrow \mathcal{H}$ such that
$$K(\mathbf{y}_1-\mathbf{y}_2)=\langle\boldsymbol{\phi}(\mathbf{y}_1), \boldsymbol{\phi}(\mathbf{y}_2)\rangle_\mathcal{H},$$
where $\langle \cdot, \cdot \rangle_\mathcal{H}$ denotes the inner product in $\mathcal{H}$. It can be shown that
\begin{eqnarray}\label{equ-kernel}
E\{\mathbf{X}|\mathbf{Y}\}\!=\! E \{\mathbf{X}\} \!\Longleftrightarrow\!{E}\{\mathbf{X}| \boldsymbol{\phi}(\mathbf{Y})\}\!=\!\mathbb{E}\{\mathbf{X}\}.~~~
\end{eqnarray}
A proof is provided in the Appendix.
Motivated by \eqref{equ-kernel} and the discussion in Introduction, we consider the following inverse conditional mean regression model
\begin{equation}\label{PFC-kenerls-mean}
\mathbf{X} =\boldsymbol{\mu}+\mathbf{B}\boldsymbol{\beta} \boldsymbol{\phi}(\mathbf{Y})+\boldsymbol{\varepsilon},  ~ \text{ with }~ E\{\boldsymbol{\varepsilon} |\mathbf{Y}\}=0,
\end{equation}
where $\boldsymbol{\mu} \in \mathbb{R}^{p}$, $\mathbf{B} \in \mathbb{R}^{p \times d}$, and $\boldsymbol{\beta} \in \mathbb{R}^{d \times \mathrm{dim}(\mathcal{H})}$. Here, $\mathrm{dim}(\mathcal{H})$ may be infinite in theory.

In the context of time series analysis, \citet{Lee2016Martingale} introduced the FM-MDE model for stationary multivariate time series $\mathbf{X}_t \in \mathbb{R}^p$, which takes the form
\begin{equation*}
\textbf{X}_{t}=\mathbf{A} \textbf{Z}_{t}+\boldsymbol{\varepsilon}_{t},
\end{equation*}
where $\mathbf{A}$ is a $p \times d$ factor loading matrix, $\textbf{Z}_{t}$ is a $d$-dimensional latent factor, and $\boldsymbol{\varepsilon}_{t}$ is a martingale difference sequence with respect to $\mathcal{F}_{t-1}=\sigma(\textbf{X}_{t-1},\textbf{X}_{t-2}, \dots),$ i.e., $E\{\boldsymbol{\varepsilon}_{t}|\mathcal{F}_{t-1}\}=0.$
If we consider the case where $\mathbf{Y}$ is unobserved in model \eqref{PFC-kenerls-mean}, the specification for $\mathbf{X}_t$ reduces to the FM-MDE described above.

In contrast to the PFC model in \eqref{PFC}, model \eqref{PFC-kenerls-mean} only requires the error term $\boldsymbol{\varepsilon}$ to satisfy $\mathrm{E}\{\boldsymbol{\varepsilon} \mid \mathbf{Y}\} = 0$, thereby relaxing the normality assumption in \eqref{PFC}. Furthermore,
the basis function $\mathbf{f}_{y}$ in model \eqref{PFC} is specified by the user in practice. \citet{CookFisher07} and \citet{PFC08} recommend choices such as
\begin{eqnarray}\label{fy-cook}
&&\!\!\!\mathbf{f}_{y}\!\!=\!\! (y, y^{2}, \ldots, y^{r} )^{T}  ~~\mathrm{or}   \nonumber \\
&&\!\!\!\mathbf{f}_{y}\!\!=\!\!(\cos (2 \pi y), \sin (2 \pi y), \ldots, \cos(2 \pi r y), \sin (2 \pi r y) )^{T},    \nonumber \\
\end{eqnarray}
for some positive integer $r.$
These  functions can be viewed as finite-dimensional approximations to feature maps associated with specific kernels in model \eqref{PFC-kenerls-mean}. For instance, the Gaussian kernel
\begin{eqnarray*}
	K( {y}_1- {y}_2)&=&\exp(-( {y}_1- {y}_2)^2/\gamma)\\
	&=&\langle\boldsymbol{\phi}_{G}( {y}_1), \boldsymbol{\phi}_{G}( {y}_2)\rangle_\mathcal{H}, ~\gamma>0,
\end{eqnarray*}
 corresponds to the feature map
\begin{eqnarray*}
\boldsymbol{\phi}_{G}(y)
\!=\! e^{-\frac{y^{2}}{2 \sigma^{2}}}( 1, \sqrt{\frac{1}{1 !}} \frac{y}{\sigma},   \cdots, \sqrt{\frac{1}{k !}} \frac{y^{k}}{\sigma^{k}}, \cdots)^T.
\end{eqnarray*}
Another example is the  kernel from \citet[Example 3.14]{Shawetaylor2005Kernel},
$$
K( {y}_1- {y}_2)=\sum_{k=0}^{\infty} a_{k} \cos (k(  {y}_1- {y}_2) ),
$$
which corresponds to an orthogonal feature map
\begin{eqnarray*}
	\boldsymbol{\phi}_{O}(y)\!\!=\!\! \big(1, \sin (y), \cos(y),  \ldots,\sin(k y), \cos(k y), \ldots\big)^T.
\end{eqnarray*}
 Thus, the basis functions $\mathbf{f}_{y}$ in \eqref{fy-cook} can be interpreted as finite truncations of the infinite-dimensional feature maps $\boldsymbol{\phi}_{G}(y)$ or $\boldsymbol{\phi}_{O}(y)$, establishing a natural link between the PFC framework and the proposed kernel-based formulation.

The preceding discussion clarifies that selecting the basis function $\mathbf{f}_{y}$ in model \eqref{PFC} corresponds to choosing a specific feature map in model \eqref{PFC-kenerls-mean}. The kernel-based methodology eliminates this explicit selection requirement while maintaining theoretical coherence. Furthermore, model \eqref{PFC-kenerls-mean} naturally extends to multivariate   $\mathbf{Y}$, overcoming a significant limitation of the  PFC framework.

\section{ICVI  matrix }\label{Extension}

As established in the classical SDR literature, first-order methods such as SIR fail to  recover $\mathcal S_{\textbf{Y}| \textbf{X}}$  when   the regression
surface is symmetric about zero. For further details, refer to Chapter 5 of
 \citet{li2018sufficient}. Our proposed ICMI methods, being based on the first conditional moment $\mathrm{E}\{\mathbf{X} \mid Y\}$, fall into this category and consequently inherit the same limitation. To address this   shortcoming, this section develops a new framework   based on the ICVI relationship in \eqref{equvlent-IV}. The  new framework draws inspiration from classical second-order methods, such as SAVE, pHd and DR,   thereby   overcoming the inherent weaknesses of the ICMI framework.

 To elucidate the rationale underlying our proposed methods, we revisit   the inverse regression framework in \eqref{PFC}. Under this model, the following relationships hold
\begin{eqnarray*}
\!\!&&\operatorname{Cov}(\mathbf{B}^{T}\mathbf{X}|Y)\!\!=\!\!  \boldsymbol{\beta} \mathbf{f}(Y) \mathbf{f}(Y)^T \boldsymbol{\beta}^T,  \\
\!\!&& \operatorname{Cov}(\mathbf{B}_{\perp}^{T}  \mathbf{X}|Y)\!\!=\!\!
\sigma^2\mathbf{B}_{\perp}^{T}\operatorname{Cov}(\boldsymbol{\varepsilon})\mathbf{B}_{\perp}
\!\!=\!\!\mathbf{B}_{\perp}^{T}\operatorname{Cov}(\mathbf{X})\mathbf{B}_{\perp}^{T}.
\end{eqnarray*}
  This implies that $\mathbf{B}^{T}\mathbf{X}$  is conditionally variance-dependent on
 $Y$, while  $\mathbf{B}_{\perp}^{T} \mathbf{X}$  is conditionally variance-invariant with respect to
 $Y$.
Instead of directly estimating $\mathbf{B}$,  we   focus on estimating its orthogonal complement
  $\mathbf{B}_{\perp},$   which satisfies
\begin{eqnarray*}
\operatorname{Cov}(\mathbf{B}_{\perp}^{T}  \mathbf{X}|Y)=\operatorname{Cov}( \mathbf{B}_{\perp}^{T} \mathbf{X}).
\end{eqnarray*}
That is,    the conditional variance of  $ \mathbf{B}_{\perp}^{T} \mathbf{X}$ given  $Y$ does not vary with $Y$.
Building on this relationship, we   consider the   ICVI framework through  \eqref{equvlent-IV}.
Let $\mathbf{Z}=\mathbf{X}-{E}\{\mathbf{X}\}$ and $\mathbf{Z}_i=\mathbf{X}_i-{E}\{\mathbf{X}\}.$
Following reasoning analogous to \eqref{equvlent-IM}, we obtain the equivalence
\begin{eqnarray}\label{equvlent-equ-IV}
 {E}\{\mathbf{Z}\mathbf{Z}^T| \mathbf{Y}\}=\boldsymbol{\Sigma}
 \Longleftrightarrow  E\{[\mathbf{Z}\mathbf{Z}^T-\boldsymbol{\Sigma}]w(\mathbf{Y}, \boldsymbol{\mathbf{\xi}})  \}=0,  \nonumber
\end{eqnarray}
for any $w(\cdot, \boldsymbol{\xi})\in \mathcal{H}$.

In parallel with the ICMI framework,    we consider two specific function classes for $\mathcal{H}$: the class of indicator functions $I(\mathbf{Y} \leq \boldsymbol{\xi})$ and the class of complex exponentials $\exp(i \boldsymbol{\xi}^\mathrm{T} \mathbf{Y})$.
  These choices lead to two   ICVI approaches for recovering  $\mathcal{S}_{\textbf{Y}| \textbf{X}}$.
 Specifically,  the projection-based approach yields the ICVI matrix
 \begin{eqnarray*}
\mathrm{ICVI.M}_\mathrm{PR}(\textbf{X}|\mathbf{Y})
\!\!&=&\!\! -\mathrm{E}\{ [\mathbf{Z}_1\mathbf{Z}_1^T-\boldsymbol{\Sigma}][\mathbf{Z}_2\mathbf{Z}_2^T-
\boldsymbol{\Sigma}]\\
&&\times  \mathrm{ang}(\mathbf{Y}_1- \mathbf{Y}_3,\mathbf{Y}_2- \mathbf{Y}_3)  \}\\
&=& {\boldsymbol{ \Lambda}}_\mathrm{V.PR},
\end{eqnarray*}
while the kernel-based approach gives
 \begin{eqnarray*}%\label{IMP}
\mathrm{ICVI.M}_\mathrm{K}(\textbf{X}|\mathbf{Y})
&=&\mathrm{E}\{ [\mathbf{Z}_1\mathbf{Z}_1^T-\boldsymbol{\Sigma}][\mathbf{Z}_2\mathbf{Z}_2^T-
\boldsymbol{\Sigma}]   \\
&&\times K(\mathbf{Y}_1-\mathbf{Y}_2) \}={\boldsymbol{ \Lambda}}_\mathrm{V.K}.
\end{eqnarray*}
 Both ${\boldsymbol{ \Lambda}}_\mathrm{V.PR}$ and ${\boldsymbol{ \Lambda}}_\mathrm{V.K}$  belong to second-order moment-based methods. Their theoretical properties are established in the following result.

\begin{thm}\label{them-kenel-HD}
Assume that the linearity condition   $E\{\mathbf{X} | \mathbf{\Gamma}^T \mathbf{X}\}=\mathbf{P}_{\mathbf{\Gamma}}^T(\mathbf{\Sigma}) \mathbf{X}$,
 and the constant variance condition
$\operatorname{Var}(\mathbf{X}| \mathbf{\Gamma}^T \mathbf{X})
=\mathbf{\Sigma}\{\textbf{I}-\mathbf{P}_{\mathbf{\Gamma}}\}$,  where  $\mathbf{P}_{\mathbf{\Gamma}}(\mathbf{\Sigma})=\mathbf{\Gamma}(\mathbf{\Gamma}^T \mathbf{\Sigma} \Gamma)^{-1} \mathbf{\Gamma}^T \boldsymbol{\Sigma}.$
Then, we have  $\mathcal{S}({\boldsymbol{ \Lambda}}_\mathrm{V.PR} )\subseteq \mathbf{\Sigma} \mathcal{S}_{\textbf{Y}| \textbf{X}}$ and $\mathcal{S}({\boldsymbol{ \Lambda}}_\mathrm{V.K}) \subseteq \mathbf{\Sigma} \mathcal{S}_{\textbf{Y}| \textbf{X}}.$
\end{thm}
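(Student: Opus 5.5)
The plan is to identify a large subspace inside the null space of each matrix and then use symmetry to pass to the column space. Let $\mathbf{Z}=\mathbf{X}-E\{\mathbf{X}\}$ and $\mathbf{A}=\mathbf{Z}\mathbf{Z}^T-\boldsymbol{\Sigma}$. Both $\boldsymbol{\Lambda}_\mathrm{V.PR}$ and $\boldsymbol{\Lambda}_\mathrm{V.K}$ have the form
\[
E\{\mathbf{A}_1\mathbf{A}_2\, h\},
\]
where $h$ is a bounded function of the $\mathbf{Y}$'s only: $h=-\mathrm{ang}(\mathbf{Y}_1-\mathbf{Y}_3,\mathbf{Y}_2-\mathbf{Y}_3)$ or $h=K(\mathbf{Y}_1-\mathbf{Y}_2)$. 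I would assume $E\|\mathbf{X}\|^4<\infty$ so that these expectations exist.

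\textbf{Step 1: reduction to a conditional-moment identity.} Fix $\boldsymbol{v}$ with $\mathbf{\Gamma}^T\boldsymbol{\Sigma}\boldsymbol{v}=0$, i.e.\ $\boldsymbol{v}\perp\boldsymbol{\Sigma}\mathcal{S}_{\mathbf{Y}|\mathbf{X}}$. Condition on $(\mathbf{Y}_1,\mathbf{Y}_2,\mathbf{Y}_3)$ and use independence of the copies:
\[
E\{\mathbf{A}_1\mathbf{A}_2\boldsymbol{v}\,h\}=E\big\{E[\mathbf{A}_1\mid\mathbf{Y}_1]\,E[\mathbf{A}_2\boldsymbol{v}\mid\mathbf{Y}_2]\,h\big\}.
\]
Hence it suffices to show
\[
E\{(\mathbf{Z}\mathbf{Z}^T-\boldsymbol{\Sigma})\boldsymbol{v}\mid\mathbf{Y}\}=0 \quad\text{a.s.}
\]
This gives $\boldsymbol{\Lambda}\boldsymbol{v}=0$ for every $\boldsymbol{v}\in(\boldsymbol{\Sigma}\mathcal{S}_{\mathbf{Y}|\mathbf{X}})^{\perp}$.

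\textbf{Step 2: proving the identity.} Because $\mathbf{Y}\indep\mathbf{X}\mid\mathbf{\Gamma}^T\mathbf{X}$, the tower property gives
\[
E\{\mathbf{Z}\mathbf{Z}^T\boldsymbol{v}\mid\mathbf{Y}\}=E\big[E\{\mathbf{Z}\mathbf{Z}^T\mid\mathbf{\Gamma}^T\mathbf{X}\}\boldsymbol{v}\,\big|\,\mathbf{Y}\big].
\]
The inner term decomposes as
\[
E\{\mathbf{Z}\mathbf{Z}^T\mid\mathbf{\Gamma}^T\mathbf{X}\}=\operatorname{Var}(\mathbf{X}\mid\mathbf{\Gamma}^T\mathbf{X})+E\{\mathbf{Z}\mid\mathbf{\Gamma}^T\mathbf{X}\}E\{\mathbf{Z}\mid\mathbf{\Gamma}^T\mathbf{X}\}^T.
\]
The linearity condition, applied to the centered vector, gives $E\{\mathbf{Z}\mid\mathbf{\Gamma}^T\mathbf{X}\}=\mathbf{P}_{\mathbf{\Gamma}}^T(\boldsymbol{\Sigma})\mathbf{Z}$. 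The second term then contributes $\mathbf{P}_{\mathbf{\Gamma}}^T\mathbf{Z}\mathbf{Z}^T\mathbf{P}_{\mathbf{\Gamma}}\boldsymbol{v}$. This vanishes because
\[
\mathbf{P}_{\mathbf{\Gamma}}\boldsymbol{v}=\mathbf{\Gamma}(\mathbf{\Gamma}^T\boldsymbol{\Sigma}\mathbf{\Gamma})^{-1}\mathbf{\Gamma}^T\boldsymbol{\Sigma}\boldsymbol{v}=0.
\]
The constant variance condition gives $\operatorname{Var}(\mathbf{X}\mid\mathbf{\Gamma}^T\mathbf{X})\boldsymbol{v}=\boldsymbol{\Sigma}(\mathbf{I}-\mathbf{P}_{\mathbf{\Gamma}})\boldsymbol{v}=\boldsymbol{\Sigma}\boldsymbol{v}$. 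Therefore $E\{\mathbf{Z}\mathbf{Z}^T\boldsymbol{v}\mid\mathbf{Y}\}=\boldsymbol{\Sigma}\boldsymbol{v}$, which is the required identity.

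\textbf{Step 3: from null space to column space.} Both matrices are symmetric. For $\boldsymbol{\Lambda}_\mathrm{V.PR}$ this follows by swapping indices $1\leftrightarrow2$, using $\mathrm{ang}(\mathbf{U}_1,\mathbf{U}_2)=\mathrm{ang}(\mathbf{U}_2,\mathbf{U}_1)$ and $\mathbf{A}^T=\mathbf{A}$. For $\boldsymbol{\Lambda}_\mathrm{V.K}$ I would restrict to the real part of $K$: since $\mu$ is real, $K(-\mathbf{y})=\overline{K(\mathbf{y})}$, and the expectation is taken symmetrically, so the imaginary part cancels. Hence the matrix is real symmetric, and for a real symmetric matrix $\mathcal{S}(\boldsymbol{\Lambda})=\ker(\boldsymbol{\Lambda})^{\perp}$. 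Combined with Step 1 this yields $\mathcal{S}(\boldsymbol{\Lambda})\subseteq\boldsymbol{\Sigma}\mathcal{S}_{\mathbf{Y}|\mathbf{X}}$ for both matrices.

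The conditional-moment computation in Step 2 is routine and mirrors the SAVE argument. I expect the main care point to be Step 3: making the real and symmetric structure of the kernel version rigorous, together with justifying the factorization in Step 1 (Fubini, using boundedness of $h$ and fourth moments).
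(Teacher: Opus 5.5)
Your argument is correct in substance and uses the same ingredients as the paper's proof: the tower property through $\mathbf{\Gamma}^T\mathbf{X}$, plus linearity and constant variance applied to $E\{\mathbf{Z}\mathbf{Z}^T-\boldsymbol{\Sigma}\mid\mathbf{Y}\}$. The difference is in packaging. The paper derives the full sandwich identity $E\{\mathbf{A}\mid\mathbf{Y}\}=\mathbf{P}_{\mathbf{\Gamma}}^T E\{\mathbf{A}\mid\mathbf{Y}\}\mathbf{P}_{\mathbf{\Gamma}}$. It then reads the column space off the \emph{left} factor $\mathbf{P}_{\mathbf{\Gamma}}^T$, whose range is $\boldsymbol{\Sigma}\mathcal{S}_{\mathbf{Y}|\mathbf{X}}$. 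You check only the action on $\ker\mathbf{P}_{\mathbf{\Gamma}}$, which is a cleaner computation, but you then need Step 3 to pass from the null space to the column space.

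Step 3 contains a false claim for $\boldsymbol{\Lambda}_{\mathrm{V.K}}$: the imaginary part does not cancel. Write $g=\mathrm{Im}\,K$, which is odd, and swap the labels $1\leftrightarrow 2$. This gives $E\{\mathbf{A}_1\mathbf{A}_2\,g(\mathbf{Y}_1-\mathbf{Y}_2)\}=-\bigl(E\{\mathbf{A}_1\mathbf{A}_2\,g(\mathbf{Y}_1-\mathbf{Y}_2)\}\bigr)^T$, so the imaginary part is antisymmetric. It is not zero in general, because $\mathbf{A}_1\mathbf{A}_2\neq\mathbf{A}_2\mathbf{A}_1$; the values of $E\{\mathbf{A}\mid\mathbf{Y}\}$ need not commute. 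The cancellation you have in mind works only for the scalar measure $\mathrm{ICMI}_{K}(X\mid\mathbf{Y})$. For a general finite $\mu$ the matrix is therefore Hermitian, not real symmetric. Your claim holds only if you assume $K$ is real-valued, in which case $K$ is even.

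The conclusion survives, and Step 3 can be dropped entirely. Since $E\{\mathbf{A}\mid\mathbf{Y}\}$ is symmetric, your Step 2 identity also gives $\boldsymbol{v}^T E\{\mathbf{A}_1\mid\mathbf{Y}_1\}=0$. Hence $\boldsymbol{v}^T\boldsymbol{\Lambda}=0$ for every real $\boldsymbol{v}\perp\boldsymbol{\Sigma}\mathcal{S}_{\mathbf{Y}|\mathbf{X}}$. So the real and imaginary parts of every column of $\boldsymbol{\Lambda}$ lie in $\boldsymbol{\Sigma}\mathcal{S}_{\mathbf{Y}|\mathbf{X}}$, for any bounded $h$ and with no symmetry needed. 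This left-annihilation is exactly what the paper's left factor $\mathbf{P}_{\mathbf{\Gamma}}^T$ provides.

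Incidentally, your fourth-moment assumption is more than Step 1 needs. Because $\mathbf{A}_1$ and $\mathbf{A}_2$ are independent and $h$ is bounded, $E\|\mathbf{X}\|^2<\infty$ already justifies the integrability and the conditional factorization.
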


This theorem confirms that both
  ${\boldsymbol{ \Lambda}}_\mathrm{V.PR}$ and ${\boldsymbol{ \Lambda}}_\mathrm{V.K}$ can be used to recover the central subspace. The constant variance condition adopted here is standard in second-order SDR methods, paralleling the routine use of the linearity condition in first-order inverse regression approaches \citep{li2018sufficient}.

Using moment-based estimation, the estimators for $\boldsymbol{\Lambda}_{\mathrm{V.PR}}$ and $\boldsymbol{\Lambda}_{\mathrm{V.K}}$ are   defined as
\begin{eqnarray*}
\widehat{ {\boldsymbol{ \Lambda}}}_\mathrm{V.PR} =- \frac{1}{n^3} \sum_{i,j,k=1}^n
(\mathbf{Z}_i\mathbf{Z}_i^T-\boldsymbol{\Sigma}_{n}) (\mathbf{Z}_j\mathbf{Z}_j^T-\boldsymbol{\Sigma}_{n})\\
\times\mathrm{ang}(\mathbf{Y}_i -\mathbf{Y}_k, \mathbf{Y}_j-\mathbf{Y}_k),
\end{eqnarray*}
and
\begin{equation*}
\widehat{ {\boldsymbol{ \Lambda}}}_\mathrm{V.K} =\frac{1}{n^2} \sum_{i,j=1}^n
(\mathbf{Z}_i\mathbf{Z}_i^T-\boldsymbol{\Sigma}_{n}) (\mathbf{Z}_j\mathbf{Z}_j^T-\boldsymbol{\Sigma}_{n})\mathrm{K}(\mathbf{Y}_i-\mathbf{Y}_j).
\end{equation*}
Following arguments similar to those for Theorems \ref{them-kenerl-consistency}-\ref{them-projection-consistency}, the asymptotic properties of $\widehat{\boldsymbol{\Lambda}}_{\mathrm{V.PR}}$ and $\widehat{\boldsymbol{\Lambda}}_{\mathrm{V.K}}$ can be established; we omit the details here.

\section{Simulation studies}\label{simulation}

This section evaluates the finite-sample performance of the proposed ICMI and ICVI methods against existing SDR  techniques  and MDDM. The comparison is conducted for both univariate and multivariate response scenarios. The competing SDR methods include:
\begin{description}
	\item[{For univariate responses:}]
  SIR \citep{Likc91}, CUME \citep{Zhu2010Dimension}, and SAVE \citep{cook91weisberg}.

\item[{For multivariate responses:}]
 SIR based on discretization-expectation estimation (SIR-DEE; \citet{Zhu2010Sufficient}), as well as SIR and SAVE using projective resampling (PR-SIR, PR-SAVE; \citet{li2008projective}).
 \end{description}
 In all subsequent simulations, SIR, SAVE, PR-SIR, and PR-SAVE are implemented with 5 slices.

We study the influence of the kernel function by employing the Gaussian kernel $K( \mathbf{Y}_1- \mathbf{Y}_2)=\exp(-\|\mathbf{Y}_1- \mathbf{Y}_2\|^2/\gamma)$, Laplace kernel $K(\mathbf{Y}_1-\mathbf{Y}_2)=\exp \left(- {\|\mathbf{Y}_1-\mathbf{Y}_2\|}/{\gamma}\right)$  and the rational quadratic kernel  $K(\mathbf{Y}_1-\mathbf{Y}_2)=
1-\frac{\|\mathbf{Y}_1-\mathbf{Y}_2\|^{2}}{\|\mathbf{Y}_1-\mathbf{Y}_2\|^{2}+\gamma}$  (with $\gamma > 0$) within our framework. This leads to the following methods in our simulations:
\begin{description}
	\item[{For ICMI:}] the projection-based method (IM$_{\mathrm{PR}}$) and the kernel-based methods (IM$_{\mathrm{Gauss}}$, IM$_{\mathrm{Lap}}$, IM$_{\mathrm{RQ}}$)

	\item[{For ICVI:}] the projection-based method (IV$_{\mathrm{PR}}$) and the kernel-based methods (IV$_{\mathrm{Gauss}}$, IV$_{\mathrm{Lap}}$, IV$_{\mathrm{RQ}}$).

\end{description}

Let $\mathbf{\Gamma}_0$ is a  $ p\times {d_0}$ matrixwhose columns form a basis for the central subspace $\mathcal{S}{\mathbf{Y}|\mathbf{X}}$, and let $\widehat{\mathbf{\Gamma}}$ be a corresponding estimator. The accuracy of $\widehat{\mathbf{\Gamma}}$ is evaluated using the distance metric
\begin{eqnarray*} \mathrm{dist}(\widehat{\mathbf{\Gamma}},\mathbf{\Gamma}_0)
=\|\mathrm{P}_{\widehat{\mathbf{\Gamma}}}-\mathrm{P}_{\mathbf{\Gamma}_0}\|_F,
\end{eqnarray*}
where $\|\cdot\|_\mathrm{F}$ is the Frobenius norm, and $\mathrm{P}_{\mathbf{\Gamma}}=\mathbf{\Gamma}(\mathbf{\Gamma}^T \mathbf{\Gamma})^{-1}\mathbf{\Gamma}^T$  is the projection matrix for any matrix $\mathbf{\Gamma}$.
 A smaller value of $\mathrm{dist}(\widehat{\mathbf{\Gamma}},\mathbf{\Gamma}_0)$ indicates a more accurate estimate. For each configuration, we report the mean and standard deviation of this distance over 500 replications.

\begin{example}[Univariate response]\label{exampl_1}
We consider two models for a univariate response $Y,$ given by
	\begin{description}
		\item[Case (i):]  $Y= \boldsymbol{\beta}_1^T\textbf{X}  + \varepsilon;$
		\item[Case (ii):]  $Y= \sin(\boldsymbol{\beta}_2^T\textbf{X})  + \exp(\boldsymbol{\beta}_3^T\textbf{X})\varepsilon,$
	\end{description}
	where $\boldsymbol{\beta}_1 =(1, 1, 1,1,0,\cdots, 0)^T/2,$   $\boldsymbol{\beta}_2 = (1,0,\cdots,0)^T$, and $\boldsymbol{\beta}_3 = (0,1,0,\cdots,0)^T.$   Following    \citet{Zhu2010Dimension},  we let the dimension $p$ diverge with the sample size $n$ as $p = \lfloor \sqrt{n} \rfloor - 5$. The predictor vector $\mathbf{X}$ is generated from a $p$-variate standard normal distribution. The error term $\varepsilon$ is generated from three different distributions:
	(1) $\varepsilon\thicksim  N(0, 1)$; (2) $\varepsilon\thicksim Cauchy(0, 1);$
	  (3)  a normal mixture: $\varepsilon\thicksim0.7N(0, 1) + 0.3N(0, 10)$.
\end{example}

To appreciate the impact of the parameter $\gamma$ on the kernel-based estimators, we consider the following    functions:
$\phi_{{\mathrm{Guass},\gamma}}(t)=1-\exp(-t^{2} / \gamma ) \text{ and } \phi_{{\mathrm{Lap},\gamma}}(t)=1-\exp  (-|t| / \gamma ).
$
It is easy to show that
\begin{eqnarray*}
	\mathbf{\Gamma}_0&=& \mathop{\arg\max}\limits_{\boldsymbol{\mathbf{\Gamma}^T \mathbf{\Gamma}=\textbf{I}_d } }  \{\mathbf{\Gamma}^T\mathrm{E}\{ ( \mathbf{X}_1-E\{ \mathbf{X}_1\}) \\
	&&\times(\mathbf{X}_2-E\{ \mathbf{X}_2\})^T K_{{\mathrm{Guass}}}(\mathbf{Y}_1-\mathbf{Y}_2) \}\mathbf{\Gamma} \}\\
	&=& \mathop{\arg\min}\limits_{\boldsymbol{\mathbf{\Gamma}^T \mathbf{\Gamma}=\textbf{I}_d } } \{\mathbf{\Gamma}^T\mathrm{E}\{ ( \mathbf{X}_1-E\{ \mathbf{X}_1\})\\
	&&\times (\mathbf{X}_2-E\{ \mathbf{X}_2\})^T \phi_{{\mathrm{Guass},\gamma}}(\|\mathbf{Y}_1- \mathbf{Y}_2\|) \}\mathbf{\Gamma} \}.
\end{eqnarray*}
where $K_{\mathrm{Gauss}}(t) = \exp(-t^2 / \gamma)$. The same results hold when $\phi_{\mathrm{Gauss},\gamma}(t)$ and $K_{\mathrm{Gauss}}(t)$ are replaced by $\phi_{\mathrm{Lap},\gamma}(t)$ and $K_{\mathrm{Lap}}(t) = \exp(-t / \gamma)$, respectively.
Note that for large $\gamma$,
 $$\phi_{{\mathrm{Guass},\gamma}}(t) \approx t^{2} / \gamma ~~ \text{ and }~~  \phi_{{\mathrm{Lap},\gamma}}(t) \approx |t|  / \gamma.  $$
 Thus, for large $\gamma$, $\mathrm{IM}_{\mathrm{Gauss}}$ and $\mathrm{IM}_{\mathrm{Lap}}$ behave similarly to OLS \citep{li1989regression} (for univariate $Y$) and MDDM, respectively. This implies that a larger value of $\gamma$ is generally preferable for light-tailed data, whereas a smaller $\gamma$ offers greater robustness for heavy-tailed distributions or in the presence of outliers.

Table \ref{tab_1-1} presents the sample means and standard deviations of $\mathrm{dist}(\widehat{\mathbf{\Gamma}},\mathbf{\Gamma}0)$ for seven first-order SDR methods: SIR, CUME,  MDDM, IM$_{\mathrm{PR}}$, IM$_{\mathrm{Guass}}$, IM$_{\mathrm{Lap}}$ and IM$_{\mathrm{RQ}}$, for  $(n,p) = (225,10)$, $(400,15)$, and $(625,20)$.

 As evidenced in Table \ref{tab_1-1},  the kernel-based ICMI methods (IM$_{\mathrm{Gauss}}$, IM$_{\mathrm{Lap}}$, IM$_{\mathrm{RQ}}$) outperform the other approaches across all scenarios under the given $\gamma$ values. The results   also reveals    that MDDM performs competitively, slightly surpassing SIR and IM$_{\mathrm{PR}}$ under normal errors. However, its performance deteriorates markedly under heavy-tailed error distributions, exhibiting higher bias and variance due to its reliance on the Euclidean distance $\|\mathbf{Y}_1 - \mathbf{Y}_2\|$, which is sensitive to outliers.  Furthermore, the equivalence between IM$_{\mathrm{PR}}$ and CUME for univariate responses, as established in Proposition \ref{PR_CUME}, is confirmed by the Case (i) results; thus, CUME results for Case (ii) are omitted.

 Fig.\ref{fig:kernel-1} reveals the impact of $\gamma$ on the MSE for Example \ref{exampl_1}(i) with $n=400$ and $p=15$.   As expected, the MSEs achieve their minimum at smaller values of $\gamma$ in the presence of heteroscedastic or non-normal errors. In contrast, for normal and homoscedastic errors, the minimum MSE occurs at a larger $\gamma$.

\begin{table*}[hptb]    
	\caption{The mean (standard deviation) of $\mathrm{dist}(\widehat{\mathbf{\Gamma}},\mathbf{\Gamma}_0)$
		for Example  \ref{exampl_1}.\label{tab_1-1}}
		\begin{tabular*}{\textwidth}{@{\extracolsep\fill}cclccc} \toprule
			\multicolumn{1}{c}{Settings}&\multicolumn{1}{c}{Errors} &\multicolumn{1}{c}{Methods}
			&\multicolumn{1}{c}{$\begin{array}{c}n=225  \vspace{-0.8ex}\\ (p=10)\end{array} $ }
			&\multicolumn{1}{c}{ $\begin{array}{c}n=400 \vspace{-0.8ex}\\ (p=15)\end{array} $}
			&\multicolumn{1}{c}{$\begin{array}{c}n=625 \vspace{-0.8ex}\\ (p=20)\end{array} $ }\\
			\midrule
			&$N(0, 1)$  &SIR                  &0.312 (0.072)  &0.291 (0.059) &0.271 (0.044)        \\
			&           &CUME                 &0.304 (0.067)  &0.284 (0.057) &0.265 (0.042)         \\                           	&           &MDDM                 &0.291 (0.065)  &0.273 (0.055) &0.255 (0.041)         \\[0.15cm]
			&           &IM$_{\mathrm{PR}}$   &0.304 (0.067)  &0.284 (0.057) &0.265 (0.042)  \\
			&$\gamma=40$&IM$_{\mathrm{Guass}}$&\textbf{0.281 (0.065)}  &\textbf{0.264 (0.053)}&\textbf{0.247 (0.040)} \\
			&           &IM$_{\mathrm{Lap}}$  &0.292 (0.066)  &0.274 (0.055) &0.256 (0.041)  \\
			&           &IM$_{\mathrm{RQ}}$   &0.284 (0.065)  &0.266 (0.053) &0.249 (0.040)  \\
			\cmidrule{2-6}
			\textbf{Case (i)}
			&Cauchy     &SIR                  &0.550 (0.134) &0.512 (0.096)     &0.472 (0.078)        \\
			&           &CUME                 &0.531 (0.133) &0.499 (0.094)     &0.462 (0.077)         \\
			&           &MDDM                 &0.748 (0.312) &0.712 (0.297)     &0.666 (0.296)         \\[0.15cm]
			&           &IM$_{\mathrm{PR}}$   &0.531 (0.133) &0.499 (0.094)     &0.462 (0.077)  \\
			&$\gamma=5 $&IM$_{\mathrm{Guass}}$&0.501 (0.120) &0.471 (0.088)     &0.435 (0.073) \\
			&           &IM$_{\mathrm{Lap}}$  &0.494 (0.121) &\textbf{0.464 (0.087)}     &0.429 (0.073)  \\
			&           &IM$_{\mathrm{RQ}}$   &\textbf{0.494 (0.119)} &0.465 (0.087)    &\textbf{0.429(0.072)}   \\
			\cmidrule{2-6}
			
			&Mixnormal  &SIR                  &0.550 (0.136)  &0.516 (0.099)   &0.484 (0.077)        \\
			&           &CUME                 &0.537 (0.137)  &0.509 (0.100)   &0.476 (0.078)         \\
			&           &MDDM                 &0.627 (0.174)  &0.588 (0.127)   &0.550 (0.097)         \\[0.15cm]
			&           &IM$_{\mathrm{PR}}$   &0.537 (0.137)  &0.509 (0.100)   &0.476 (0.078)  \\
			&$\gamma=5$ &IM$_{\mathrm{Guass}}$&0.460 (0.113)  &0.438 (0.084)   &0.405 (0.066) \\
			&           &IM$_{\mathrm{Lap}}$  &0.472 (0.119)  &0.447 (0.089)   &0.414 (0.066)  \\
			&           &IM$_{\mathrm{RQ}}$   &\textbf{0.460 (0.113)}  &\textbf{0.437 (0.085)}  &\textbf{0.405 (0.065) } \\    \hline			
			&$N(0, 1)$  &SIR                   &0.692 (0.132) &0.661 (0.102)   &0.620 (0.080)  \\
			&           &MDDM                  &0.632 (0.125) &0.613 (0.098)   &0.575 (0.078)   \\[0.15cm]
			&           &IM$_{\mathrm{PR}}$    &0.646 (0.119) &0.626 (0.096)   &0.588 (0.076)  \\
			&$\gamma=2$ &IM$_{\mathrm{Guass}}$ &0.622 (0.115) &0.598 (0.094)   &0.563 (0.072)   \\
			&           &IM$_{\mathrm{Lap}}$   &\textbf{0.602 (0.116)} &\textbf{0.578 (0.093)}  &\textbf{0.543 (0.070)}   \\
			&           &IM$_{\mathrm{RQ}}$    &0.607 (0.115) &0.584 (0.093)   &0.549 (0.071)   \\
			\cmidrule{2-6}
			
			\textbf{Case (ii)}
			&Cauchy     &SIR                   &0.832 (0.160)  &0.786 (0.113)   &0.737 (0.096)        \\
			&           &MDDM                  &1.220 (0.252)  &1.222 (0.240)   &1.204 (0.254)         \\[0.15cm]
			&           &IM$_{\mathrm{PR}}$    &0.845 (0.160)  &0.817 (0.114)   &0.773 (0.101)  \\
			&$\gamma=2 $&IM$_{\mathrm{Guass}}$ &0.771 (0.141)  &0.741 (0.109)   &0.705 (0.091) \\
			&           &IM$_{\mathrm{Lap}}$   &\textbf{0.759 (0.141)}  &\textbf{0.730 (0.105)}  &\textbf{0.690 (0.090)}   \\
			&           &IM$_{\mathrm{RQ}}$    &0.760 (0.140)  &0.731 (0.106)   &0.693 (0.091) \\
			\cmidrule{2-6}
			
			&Mixnormal  &SIR                   &0.845 (0.166)      &0.818 (0.124)  &0.755 (0.095)     \\
			&           &MDDM                  &1.074 (0.224)      &1.058 (0.191)  &0.996 (0.169)         \\[0.15cm]
			&           &IM$_{\mathrm{PR}}$    &0.866 (0.165)      &0.845 (0.128)  &0.798 (0.103)  \\
			&$\gamma=2$ &IM$_{\mathrm{Guass}}$ &0.782 (0.144)      &0.768 (0.112)  &0.721 (0.088) \\
			&           &IM$_{\mathrm{Lap}}$   &0.771 (0.145)      &0.756 (0.111)  &0.711 (0.089)   \\
			&           &IM$_{\mathrm{RQ}}$    &\textbf{0.771 (0.144)}      &\textbf{0.756 (0.111)}&\textbf{0.711 (0.088)} \\
			\botrule
\multicolumn{5}{l}{Note: The smallest mean values are shown in bold. }
		\end{tabular*}
\end{table*}

\begin{figure*}
	\centering
	\includegraphics[width=5.5cm,height=5cm]{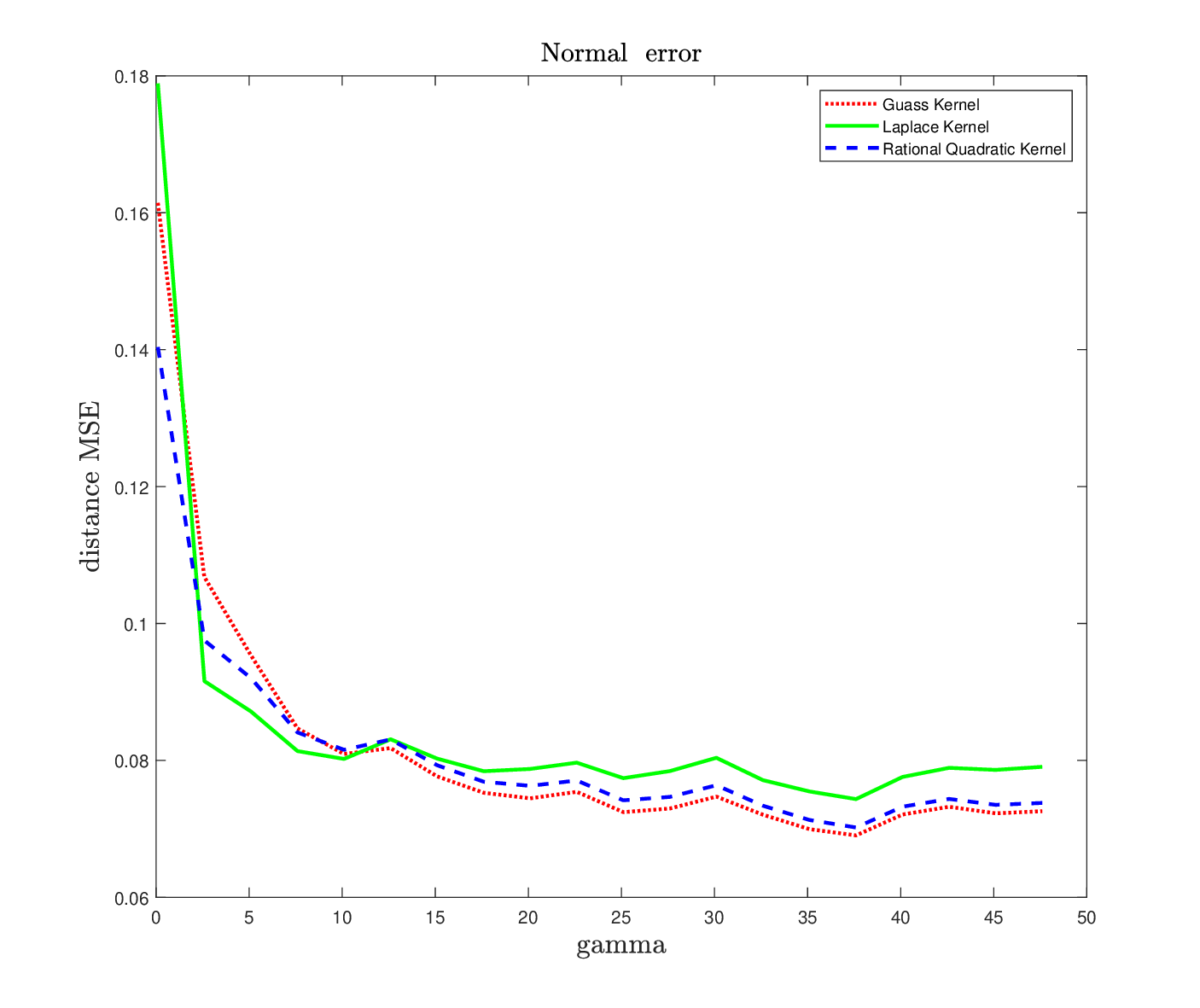}\hspace{-0.4cm}
	\includegraphics[width=5.5cm,height=5cm]{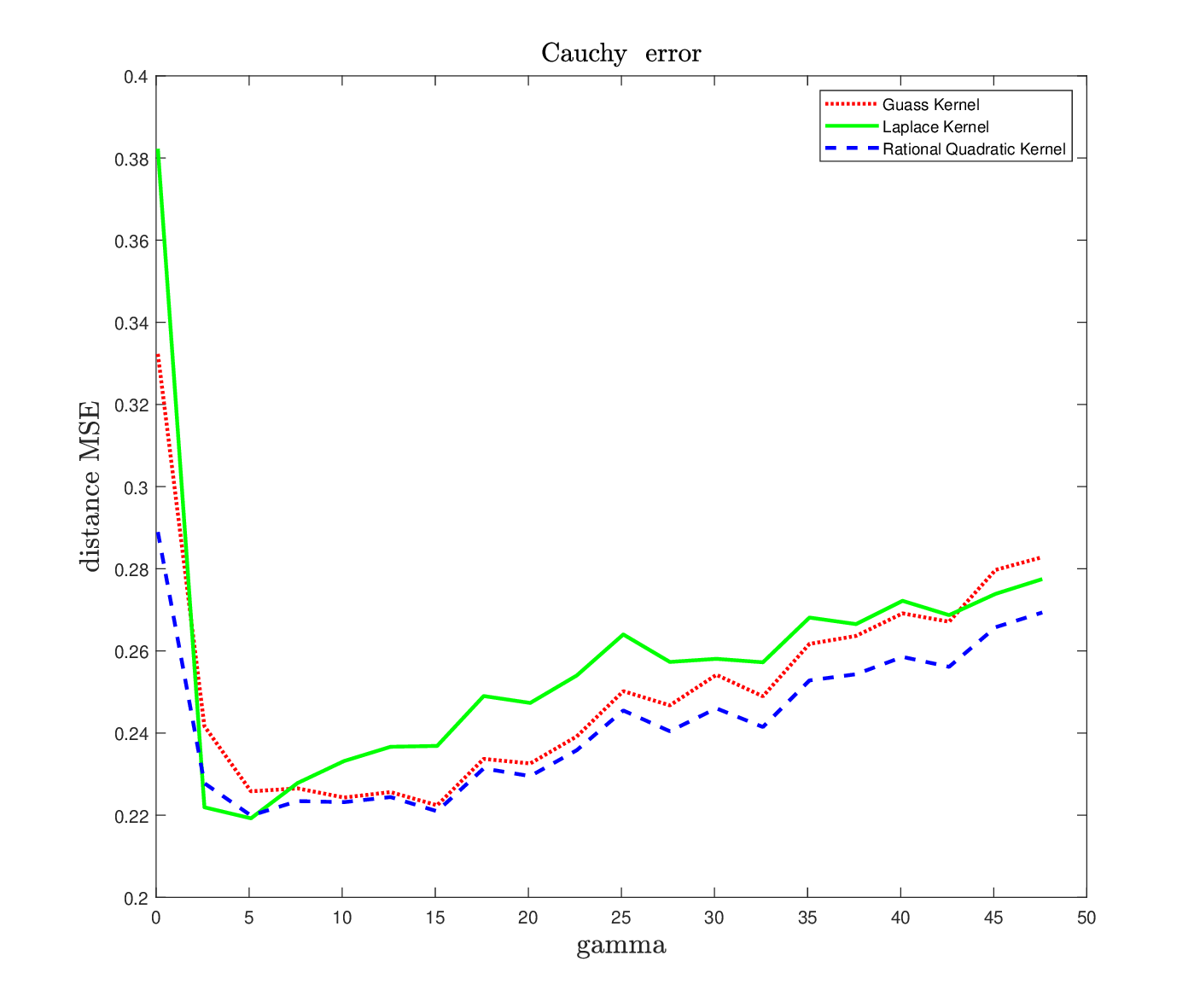}\hspace{-0.4cm}
	\includegraphics[width=5.05cm,height=5cm]{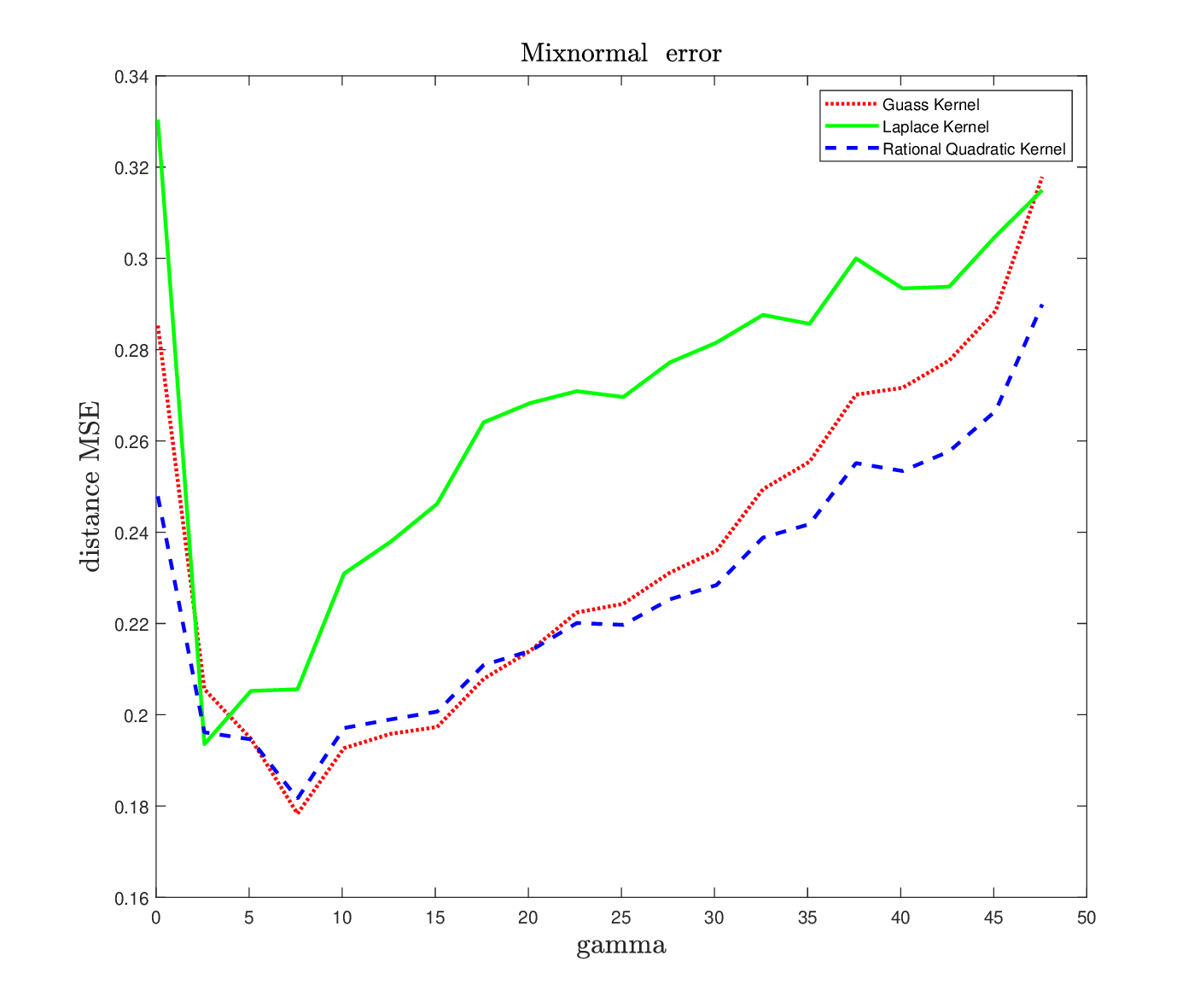}
	\caption{MSE of the three kernel-based methods for Example  \ref{exampl_1} Case (i).
	\label{fig:kernel-1}}
\end{figure*}
%\begin{figure*}
%	\centering
%	\includegraphics[width=5.5cm,height=5cm]{Figure/normal_ken_ii.eps}\hspace{-0.4cm}
%	\includegraphics[width=5.5cm,height=5cm]{Figure/Cauchy_ken_ii.eps}\hspace{-0.4cm}
%	\includegraphics[width=5.5cm,height=5cm]{Figure/mixture_ken_ii.eps}
%	\caption{MSE of three kernel-based methods for Example  \ref{exampl_1} Case (ii).
%	\label{fig:kernel-2}}
%\end{figure*}

\begin{example}[Multivariate response]\label{exampl_2}
This example evaluates the finite-sample performance of the proposed methods for a multivariate response. We consider the model from Example 1 in \citet{li2008projective}, given by
 $$
	Y_{1}=\boldsymbol{\beta}_{1}^{T} \mathbf{X}+\varepsilon_{1},
	\quad Y_{2}=\boldsymbol{\beta}_{2}^{T} \mathbf{X}+\varepsilon_{2},
	\quad Y_{3}=\varepsilon_{3}, \quad Y_{4}=\varepsilon_{4},$$
	where $p=6, q=4$,
	$\boldsymbol{\beta}_{1}=(1,0,0,0,0,0)^{T}$, $\boldsymbol{\beta}_{2}=(0,2,1,0,0,0)^{T}$ and  $\mathbf{X} \sim \mathrm{N}(\mathbf{0}, \mathbf{I}_{6}).$
	The error vector  $\boldsymbol{\varepsilon}=(\varepsilon_{1},\cdots, \varepsilon_{4})^T$ is generated from three different distributions:
 \begin{description}
		\item[Case (i):] $\boldsymbol{\varepsilon} \sim \mathrm{N}_{4}(\mathbf{0}, \mathbf{\Delta})$, where $\mathbf{\Delta} = \operatorname{diag}(\mathbf{\Delta}_{1}, \mathbf{I}_{2})$ and
        \[
        \mathbf{\Delta}_{1} = \begin{pmatrix}
            1 & -0.5 \\
            -0.5 & 1
        \end{pmatrix};
        \]
        \item[Case (ii):] $\boldsymbol{\varepsilon} \sim t_1(\mathbf{0}, \mathbf{\Delta})$ (multivariate $t$-distribution);
        \item[Case (iii):] a normal mixture: $\varepsilon_j \stackrel{\text{i.i.d.}}{\sim} 0.7 N(0, 1) + 0.3 N(0, 10)$ for $j=1,\ldots,4$.
    \end{description}
\end{example}

The simulation results for Example \ref{exampl_2} with   $n = 200, 400,$ and $800$ are summarized in Table \ref{tab_2-1}, where the PR-SIR method  is implemented with 1000   resamples.
 The results indicate that under normal errors, our four methods perform comparably to PR-SIR, though both are slightly less accurate than MDDM. For non-Gaussian errors, however, our methods perform significantly better than MDDM and also outperform PR-SIR in most scenarios.  These results demonstrate that the proposed methods provide reasonably accurate estimates for multivariate response vectors.

\begin{table*}[hptb]
	\caption{The mean (standard deviation) of  $\mathrm{dist}(\widehat{\mathbf{\Gamma}},\mathbf{\Gamma}_0)$
		for Example  \ref{exampl_2}.\label{tab_2-1}}
		\begin{tabular*}{\textwidth}{@{\extracolsep\fill}clccc} \toprule
			\multicolumn{1}{c}{Errors} &\multicolumn{1}{c}{Methods}
			&\multicolumn{1}{c}{$n=200$ }
			&\multicolumn{1}{c}{$n=400$}
			&\multicolumn{1}{c}{$n=800$}\\
			\midrule
			$\mathrm{N}_{4}(\mathbf{0}, \mathbf{\Delta})$
			& SIR-DEE             &0.357 (0.097) &0.238 (0.055) &0.171 (0.045)        \\
			& PR-SIR              &0.235 (0.083) &0.153 (0.040) &0.110 (0.040)         \\                       & MDDM                &\textbf{0.230 (0.080)} &\textbf{0.149 (0.042)} &\textbf{0.110 (0.039)} \\[0.15cm]
			&IM$_{\mathrm{PR}}$   &0.234 (0.082) &0.153 (0.043) &0.113 (0.040)  \\
			$\gamma=40$&IM$_{\mathrm{Guass}}$&0.236 (0.081) &0.154 (0.042) &0.112 (0.039) \\
			&IM$_{\mathrm{Lap}}$  &0.232 (0.080) &0.151 (0.042) &0.111 (0.039)  \\
			&IM$_{\mathrm{RQ}}$   &0.238 (0.081) &0.155 (0.042) &0.113 (0.039)  \\    \hline

			${t}_1(\mathbf{0}, \mathbf{\Delta})$
			&SIR-DEE              &0.927 (0.325)    &0.740 (0.274)      &0.500 (0.202)        \\
			&PR-SIR               &0.430 (0.153)    &0.286 (0.097)      &0.197 (0.057)         \\
			&MDDM                 &0.795 (0.357)    &0.545 (0.335)      &0.416 (0.353)         \\[0.15cm]
			&IM$_{\mathrm{PR}}$   &0.411 (0.137)    &0.285 (0.093)      &0.200 (0.060)  \\
			$\gamma=5 $&IM$_{\mathrm{Guass}}$&0.436 (0.117)    &0.309 (0.091)      &0.210 (0.060) \\
			&IM$_{\mathrm{Lap}}$  &\textbf{0.365 (0.129)}    &\textbf{0.258 (0.089)}     &\textbf{0.176 (0.048)}  \\
			&IM$_{\mathrm{RQ}}$   &0.378 (0.111)    &0.271 (0.087)      &0.183 (0.052) \\    \hline
			
			Mixnormal
			&SIR-DEE              &0.794  (0.254)        &0.534 (0.164)         &0.360 (0.098)       \\
			&PR-SIR               &0.747  (0.263)        &0.466 (0.153)         &0.317 (0.091)        \\
			&MDDM                 &0.871  (0.275)        &0.592 (0.196)         &0.408 (0.123)        \\[0.15cm]
			&IM$_{\mathrm{PR}}$   &0.724  (0.229)        &0.497 (0.154)         &0.353 (0.103)  \\
			$\gamma=5$ &IM$_{\mathrm{Guass}}$&0.741  (0.226)        &0.488 (0.138)         &0.329 (0.085) \\
			&IM$_{\mathrm{Lap}}$  &0.701  (0.266)        &0.436 (0.138)         &0.295(0.084)        \\
			&IM$_{\mathrm{RQ}}$   &\textbf{0.674  (0.232)}        &\textbf{ 0.431 (0.125)}        &\textbf{0.289 (0.079)}  \\    \botrule			
		\end{tabular*}
\end{table*}

\begin{example}[Multivariate response with heteroscedastic errors]\label{exampl_3-3}
This example considers multivariate response models with heteroscedastic errors, adapted from Examples 4.2--4.3 in \citet{li2008projective}, given by	
	\begin{description}
		\item[Case (i):] $ Y_{1}=2\exp \left(\varepsilon_{1}\right),   Y_{i}=\varepsilon_{i},   i=2,3,4;$
		\item[Case (ii):]  $Y_{1}= \varepsilon_{1},   Y_{i}=\varepsilon_{i},   i=2,3,4,$
	\end{description}
	where  $(\varepsilon_{1}, \ldots, \varepsilon_{4})^{T}$  follows from $\mathrm{N}_4(\mathbf{0}, \mathbf{\Delta}) $ with $\mathbf{\Delta}=$
	$\operatorname{diag}\left(\boldsymbol{\Delta}_{1}, \mathbf{I}_{2}\right)$ and
	$$
	\boldsymbol{\Delta}_{1}=\left(\begin{array}{cc}
		1 & \sin \left(\boldsymbol{\beta}^{T} \mathbf{X}\right) \\
		\sin \left(\boldsymbol{\beta}^{T} \mathbf{X}\right) & 1
	\end{array}\right).
	$$
	Here, $\mathbf{X} \sim \mathrm{N}(\mathbf{0}, \mathbf{I}_{6})$ and  $\boldsymbol{\beta}=(0.8, 0.6,0,0,0,0)^{T}.$
\end{example}

For both cases in Example \ref{exampl_3-3},   $E\{\mathbf{Y} \mid \mathbf{X}\}$ is non-random,   $\mathrm{var}(\mathbf{Y} \mid \mathbf{X}) = \mathrm{var}(\mathbf{Y} \mid \boldsymbol{\beta}^{T}\mathbf{X})$, and   $\mathcal{S}_{\mathbf{Y} \mid \mathbf{X}} = \mathcal{S}(\boldsymbol{\beta})$. For detailed derivations, see \citet{li2008projective}. The results in Table \ref{tab_3-1} indicate that the kernel-based methods IM$_{\mathrm{Gauss}}$, IM$_{\mathrm{Lap}}$, and IM$_{\mathrm{RQ}}$  achieve superior performance in most cases, followed by SIR-DEE and PR-SIR. In contrast, both MDDM and IM$_{\mathrm{PR}}$ exhibit inferior performance. The inferior accuracy  of MDDM can be attributed to its sensitivity to outliers, while the reason for the poor performance of IM$_{\mathrm{PR}}$ remains unclear and warrants further study.

\begin{table*}[hptb]
	\caption{The mean (standard deviation) of $\mathrm{dist}(\widehat{\mathbf{\Gamma}},\mathbf{\Gamma}_0)$
		for Example  \ref{exampl_3-3}.\label{tab_3-1}}
		\begin{tabular*}{\textwidth}{@{\extracolsep\fill}clccc} \toprule
			\multicolumn{1}{c}{Settings} &\multicolumn{1}{c}{Methods}
			&\multicolumn{1}{c}{$n=200$ }
			&\multicolumn{1}{c}{$n=400$}
			&\multicolumn{1}{c}{$n=800$}\\
			\midrule
			\textbf{Case (i)}
			&SIR-DEE              &0.594 (0.197) &0.385 (0.124) &0.260 (0.083)      \\
			&PR-SIR               &0.697 (0.294) &0.406 (0.168) &0.244 (0.086)    \\                           	
			&MDDM                 &1.126 (0.271) &0.884 (0.364) &0.506 (0.292)    \\[0.15cm]
			&IM$_{\mathrm{PR}}$   &1.062 (0.291) &0.794 (0.331) &0.453 (0.229) \\
			$\gamma=1$           &IM$_{\mathrm{Guass}}$&0.591 (0.214) &0.379 (0.130) &0.249 (0.083)\\
			&IM$_{\mathrm{Lap}}$  &\textbf{0.562 (0.217)} &\textbf{0.355 (0.120)} &\textbf{0.231 (0.076)}  \\
			&IM$_{\mathrm{RQ}}$   &0.568 (0.220) &0.357 (0.121) &0.232 (0.077) \\    \hline
			
			\textbf{Case (ii)}
			&SIR-DEE                  &0.582 (0.207)    &0.384 (0.129)     &0.271 (0.084)       \\
			&PR-SIR                   &0.636 (0.289)    &0.351 (0.136)     &0.213 (0.071)        \\                           	
			&MDDM                     &0.991 (0.321)    &0.625 (0.300)     &0.309 (0.124)      \\[0.15cm]
			&IM$_{\mathrm{PR}}$       &1.111 (0.278)    &0.823 (0.334)     &0.432 (0.219)  \\
			$\gamma=1$           &IM$_{\mathrm{Guass}}$    &0.596 (0.250)    &0.341 (0.121)     &0.218 (0.072)\\
			&IM$_{\mathrm{Lap}}$      &\textbf{0.567 (0.253)}    &\textbf{0.320 (0.118)}     &\textbf{0.203 (0.068)} \\
			&IM$_{\mathrm{RQ}}$       &0.573 (0.255)    &0.323 (0.119)     &0.204 (0.068)  \\
            \botrule
		\end{tabular*}
\end{table*}

\begin{example}[Univariate response  with zero-symmetric surface]\label{exampl_4-1}
	This example considers a univariate response model  where  the regression surface is symmetric about zero.   This model is  from    \citet{Bing07}, given by
	\begin{eqnarray*}
		Y=0.4(\boldsymbol{\beta}_{1}^T \mathbf{X})^{2}+|\boldsymbol{\beta}_{2}^{\mathrm{T}} \mathbf{X}|^{1/2}+0.4 \epsilon,
	\end{eqnarray*}
	where    $\boldsymbol{\beta}_{1}$ and $\boldsymbol{\beta}_{2}$ are $p$-dimensional vectors. The first six components of $\boldsymbol{\beta}_{1}$ and $\boldsymbol{\beta}_{2}$ are $(1, 1, 1, 0, 0, 0)^T$ and $(1, 0, 0, 0, 1, 3)^T$, respectively, and the remaining components are zero.  As in Example \ref{exampl_1}, we set $p = \lfloor \sqrt{n} \rfloor - 5$, and let $\mathbf{X}$ follow the multivariate standard normal distribution.
\end{example}

In Table  \ref{tab_4-1}, we compare nine   methods: the first-order methods--SIR, MDDM, IM$_{\mathrm{PR}}$, IM$_{\mathrm{Guass}}$--and the second-order methods--SAVE, IV$_{\mathrm{PR}}$, IV$_{\mathrm{Guass}}$, IV$_{\mathrm{Lap}}$, IV$_{\mathrm{RQ}}$. The comparisons are conducted under two error distributions: $\varepsilon \sim N(0, 1)$ and $\varepsilon \sim \text{Cauchy}(0, 1)$.
   Since   IM$_{\mathrm{Lap}}$ and IM$_{\mathrm{QR}}$ perform similarly to IM$_{\mathrm{Guass}}$,   they are omitted.

As shown in Table \ref{tab_4-1},
 all first-order methods fail, as expected, to recover $\mathcal{S}_{Y|\mathbf{X}}$ due to the symmetry of the regression surface about the origin.  In contrast, all second-order methods successfully estimate the central subspace. Notably, while the simulation setting is favorable for SAVE, IV$_{\mathrm{PR}}$ and the kernel-based ICVI methods consistently outperform it, with IV$_{\mathrm{PR}}$ being the top performer.

\begin{table*}[hptb]
	\caption{The mean (standard deviation) of                                      $\mathrm{dist}(\widehat{\mathbf{\Gamma}},\mathbf{\Gamma}_0)$
		for Example  \ref{exampl_4-1}.\label{tab_4-1}}
		\begin{tabular*}{\textwidth}{@{\extracolsep\fill}clccc} \toprule
			\multicolumn{1}{c}{Errors} &\multicolumn{1}{c}{Methods}
			&\multicolumn{1}{c}{$\begin{array}{c}n=225  \vspace{-0.8ex}\\ (p=10)\end{array} $ }
			&\multicolumn{1}{c}{ $\begin{array}{c}n=400 \vspace{-0.8ex}\\ (p=15)\end{array} $}
			&\multicolumn{1}{c}{$\begin{array}{c}n=625 \vspace{-0.8ex}\\ (p=20)\end{array} $ }\\
			\midrule
			$N(0, 1)$  &SIR                     &1.777  (0.137)     &1.859 (0.102)     &1.892 (0.074)        \\
			&MDDM                    &1.656  (0.168)     &1.781 (0.138)     &1.831 (0.106)        \\
			&IM$_{\mathrm{PR}}$      &1.759  (0.141)     &1.856 (0.099)     &1.890 (0.076)         \\
			$\gamma=2$ &IM$_{\mathrm{Guass}}$   &1.708  (0.154)     &1.830 (0.108)     &1.868 (0.086)     \\  \vspace{-1.5mm}\\
			&SAVE                    &0.631  (0.185)     &0.545 (0.103)     &0.490 (0.075)        \\
			&IV$_{\mathrm{PR}}$      &\textbf{0.460  (0.127)}     &\textbf{0.434 (0.078)} &\textbf{0.401 (0.058)}  \\
			$\gamma=2$ &IV$_{\mathrm{Guass}}$   &0.520  (0.129)     &0.485 (0.085)     &0.443 (0.064) \\
			&IV$_{\mathrm{Lap}}$     &0.526  (0.132)     &0.490 (0.089)     &0.444 (0.066)  \\
			&IV$_{\mathrm{RQ}}$      &0.519  (0.130)     &0.485 (0.087)     &0.441 (0.065)  \\                              \hline

			Cauchy     &SIR                  &1.794 (0.130)          &1.856  (0.097)             &1.898 (0.072)     \\
			&MDDM                 &1.744 (0.139)          &1.831  (0.104)             &1.871 (0.082)        \\
			&IM$_{\mathrm{PR}}$   &1.782 (0.137)          &1.855  (0.099)             &1.892 (0.071)         \\
			$\gamma=2$ &IM$_{\mathrm{Guass}}$&1.749 (0.142)          &1.835  (0.105)             &1.888 (0.072)    \\ \vspace{-1.5mm} \\
			&SAVE                 &0.959 (0.269)          &0.867  (0.232)             &0.746 (0.160)       \\
			&IV$_{\mathrm{PR}}$   &\textbf{0.702 (0.233)}          &\textbf{0.655(0.174)}         &\textbf{0.585 (0.109)}  \\
			$\gamma=2$ &IV$_{\mathrm{Guass}}$&0.755 (0.230)          &0.690 (0.166)             &0.620 (0.108)\\
			&IV$_{\mathrm{Lap}}$  &0.765 (0.235)          &0.699 (0.181)             &0.619 (0.113) \\
			&IV$_{\mathrm{RQ}}$   &0.754 (0.234)          &0.688 (0.172)             &0.615 (0.111) \\
             \botrule
		\end{tabular*}
\end{table*}

\begin{example}[Multivariate response with zero-symmetric surface]\label{exampl_4-2}
	We consider a multivariate response model with zero-symmetric regression surfaces.
  The model is the same as  Example  \ref{exampl_2} except that   $Y_1$ and $Y_2$ are replaced by
	\begin{eqnarray*}
		Y_{1}=(\boldsymbol{\beta}_{1}^{T} \mathbf{X})^2+\varepsilon_{1},    ~~ Y_{2}=|\boldsymbol{\beta}_{2}^{T} \mathbf{X}|+\varepsilon_{2}.
	\end{eqnarray*}
\end{example}

For Example \ref{exampl_4-2}, we again compare the nine methods across sample sizes $n=200, 400$, and 800. As presented in Table \ref{tab_4-2}, the performance of IV$_{\mathrm{PR}}$ is     better than that of IV$_{\mathrm{Gauss}}$, IV$_{\mathrm{Lap}}$, and IV$_{\mathrm{RQ}}$, and all four methods achieve comparable performance to PR-SAVE.

\begin{table*}[hptb]
	\caption{The mean (standard deviation) of $\mathrm{dist}(\widehat{\mathbf{\Gamma}},\mathbf{\Gamma}_0)$
		for Example  \ref{exampl_4-2}.\label{tab_4-2}}
		\begin{tabular*}{\textwidth}{@{\extracolsep\fill}clccc}
			 \toprule
			\multicolumn{1}{c}{Errors} &\multicolumn{1}{c}{Methods}
			&\multicolumn{1}{c}{$n=200$ }
			&\multicolumn{1}{c}{$n=400$}
			&\multicolumn{1}{c}{$n=800$}\\
			\midrule
			$N(0, 1)$  &SIR-DEE                &1.433 (0.248)      &1.271 (0.270)     &1.085 (0.303)      \\
			&MDDM                   &1.512 (0.216)      &1.495 (0.235)     &1.489 (0.230)        \\
			&IM$_{\mathrm{PR}}$     &1.589 (0.197)      &1.583 (0.212)     &1.568 (0.209)         \\
			$\gamma=30$&IM$_{\mathrm{Guass}}$  &1.513 (0.212)      &1.499 (0.234)     &1.492 (0.225)     \\  \vspace{-1.5mm} \\
			&PR-SAVE                &0.335 (0.089)      &0.220 (0.061)     &0.152 (0.040)        \\
			&IV$_{\mathrm{PR}}$     &\textbf{0.296 (0.079)}      &\textbf{0.199 (0.056)}  &\textbf{0.139 (0.038)}  \\
			$\gamma=30$&IV$_{\mathrm{Guass}}$  &0.319 (0.085)      &0.215 (0.060)     &0.149 (0.040) \\
			&IV$_{\mathrm{Lap}}$    &0.325 (0.087)      &0.217 (0.062)     &0.151 (0.040)  \\
			&IV$_{\mathrm{RQ}}$     &0.319 (0.085)      &0.214 (0.060)     &0.149 (0.040)  \\                              \hline

			Cauchy     &SIR-DEE               &1.584 (0.194)          &1.548 (0.214)         &1.539 (0.205)      \\
			&MDDM                  &1.591 (0.186)          &1.600 (0.202)         &1.615 (0.184)       \\
			&IM$_{\mathrm{PR}}$    &1.590 (0.207)          &1.595 (0.209)         &1.611 (0.195)        \\
			$\gamma=5$ &IM$_{\mathrm{Guass}}$ &1.584 (0.219)          &1.560 (0.215)         &1.601 (0.194)   \\  \vspace{-1.5mm}\\
			&PR-SAVE               &0.512 (0.161)          &0.304 (0.082)         &0.204 (0.051)    \\
			&IV$_{\mathrm{PR}}$    &\textbf{0.427 (0.119)}          &\textbf{0.282 (0.076)}       &\textbf{0.192 (0.048)}  \\
			$\gamma=5$ &IV$_{\mathrm{Guass}}$ &0.557 (0.181)          &0.321 (0.084)         &0.210 (0.056) \\
			&IV$_{\mathrm{Lap}}$   &0.485 (0.154)          &0.285 (0.077)         &0.191 (0.049)  \\
			&IV$_{\mathrm{RQ}}$    &0.494 (0.148)          &0.291 (0.077)         &0.194 (0.051)  \\			
			\botrule
			\end{tabular*}
\end{table*}

\begin{table*}[hptb]    
	\caption{Average runtime (in seconds) of various SDR methods.
	 \label{tab_1-2_runtime}}
		\begin{tabular*}{\textwidth}{@{\extracolsep\fill}cclccc}
            \toprule
			\multicolumn{1}{c}{Settings}&\multicolumn{1}{c}{Errors} &\multicolumn{1}{c}{Methods}
			&\multicolumn{1}{c}{$\begin{array}{c}n=225  \vspace{-0.8ex}\\ (p=10)\end{array} $ }
			&\multicolumn{1}{c}{ $\begin{array}{c}n=400 \vspace{-0.8ex}\\ (p=15)\end{array} $}
			&\multicolumn{1}{c}{$\begin{array}{c}n=625 \vspace{-0.8ex}\\ (p=20)\end{array} $ }\\
			\midrule
\multirow{1}{*}{\makecell[l]{Example \ref{exampl_1}\\  Case (i)}}

            &$N(0, 1)$  &SIR                  &0.0002  &0.0003 &0.0004  \\
			&           &CUME                 &0.0001  &0.0002 &0.0004  \\                           	
            &           &MDDM                 &0.0004  &0.0015 &0.0035  \\[0.15cm]

			&           &IM$_{\mathrm{PR}}$   &0.0345  &0.3155 &0.9562  \\
			&$\gamma=40$&IM$_{\mathrm{Guass}}$&0.0005  &0.0015&0.0037   \\
			&           &IM$_{\mathrm{Lap}}$  &0.0004  &0.0015 &0.0036  \\
			&           &IM$_{\mathrm{RQ}}$   &0.0004  &0.0014 &0.0034  \\  \hline  \\[-0.3cm] 
\multicolumn{1}{c}{ }&\multicolumn{1}{c}{ } &\multicolumn{1}{c}{ }
	&\multicolumn{1}{c}{$\begin{array}{c}n=200  \vspace{-0.8ex}\\ (p=6)\end{array} $ }
&\multicolumn{1}{c}{ $\begin{array}{c}n=400 \vspace{-0.8ex}\\ (p=6)\end{array} $}
&\multicolumn{1}{c}{$\begin{array}{c}n=800 \vspace{-0.8ex}\\ (p=6)\end{array} $ }\\
  \cmidrule{4-6}
\multirow{1}{*}{Example \ref{exampl_2}}

        &   $\mathrm{N}_{4}(\mathbf{0}, \mathbf{\Delta})$	& SIR-DEE             &0.2274 &2.9356 &20.1340        \\
		&	& PR-SIR              &0.0260 &0.0530 &0.0796         \\
        &    & MDDM                &0.0004 &0.0014 &0.0049 \\[0.15cm]
		&	&IM$_{\mathrm{PR}}$   &0.0469 &0.5675 &3.0127  \\
		&	$\gamma=40$&IM$_{\mathrm{Guass}}$&0.0004 &0.0015 &0.0053 \\
		&	&IM$_{\mathrm{Lap}}$  &0.0004 &0.0014 &0.0050  \\
		&	&IM$_{\mathrm{RQ}}$   &0.0003 &0.0013 &0.0048  \\

  \hline
		\end{tabular*}
\end{table*}

We further evaluate the computational efficiency of our proposed methods against existing SDR competitors. To conserve space, we report results only for Example~\ref{exampl_1}, Case (i) (univariate response) and Example~\ref{exampl_2} with $\mathrm{N}_{4}(\mathbf{0}, \mathbf{\Delta})$ errors (multivariate response). The runtime comparisons are summarized in Table~\ref{tab_1-2_runtime}. The results indicate that for univariate responses, our methods 
(excluding IM$_{\mathrm{PR}}$) are somewhat slower than SIR and CUME, yet still remain highly efficient. The computational advantage of our approach becomes more evident in the multivariate response setting, where our kernel-based methods (IM$_{\mathrm{Gauss}}$, IM$_{\mathrm{Lap}}$, IM$_{\mathrm{RQ}}$) significantly outperform SIR-DEE and PR-SIR. The angle-based IM$_{\mathrm{PR}}$ method is comparatively slower, which can be attributed to its $O(n^3)$ complexity, as opposed to the $O(n^2)$ complexity of our other proposals.

\section{Real data analysis}\label{Realdata}

In this section, we illustrate the proposed method through analysis of two datasets:
the Minneapolis school data from \citet{Cook1998regression} and the plasma retinol and
beta-carotene data   from \citet{NIERENBERG1989DETERMINANTS}.

\subsection{Minneapolis   school data}

The Minneapolis school dataset comprises performance data from 63 elementary schools, as studied by \citet{Cook1998regression}. It includes four response variables measuring the percentages of fourth- and sixth-grade students scoring above and below the national average on standardized reading tests, denoted as ``4ABOVE",  ``4BELOW",  ``6ABOVE", and  ``6BELOW".

We aim to model the relationships between these four responses and the following eight predictors:
(1)   percentage of children receiving aid, (2)   percentage of children who do not live with both parents, (3)   percentage of  persons who suffer from poverty,
(4)   percentage of adults in the school area who completed high school, (5)
  percentage of minority students, (6)   percentage of mobility, (7)   percentage of  pupils who attend school regularly, and (8)   pupil-teacher ratio.

Following \citet{YinMoment}, we apply square root transformations to the seven percentage predictors to better meet linearity and homoscedasticity assumptions. The structural dimension is set to
$d=1$. Let $\hat{\mathcal S}_{\bf Y|\bf X}$ denote the estimated subspace using the full dataset. To compare our methods with alternatives, we generate
$B$ bootstrap samples, estimate subspaces $\hat{\mathcal S}^b_{\bf Y|\bf X}$ for each, nd compute the distances  $\mathrm{dist}(\hat{\mathcal S}_{\bf Y|\bf X}, \hat{\mathcal S}^b_{\bf Y|\bf X}),$  for $b = 1,\cdots, B$.

Table \ref{real-data-1-table}  summarizes  the  mean, standard deviation (SD) and MSE of  $\mathrm{dist}(\hat{\mathcal S}_{\bf Y|\bf X}, \hat{\mathcal S}^b_{\bf Y|\bf X}),$
over $B=2000$ simulations (with $\gamma=1000$ for the kernel-based methods).
The results indicate that second-order methods perform   better than first-order methods, with
IV$_{\mathrm{Lap}}$ achieving the best overall performance. Fig. \ref{fig:exampl_4} presents scatterplot matrices of the four responses against the first estimated direction obtained via the
 IM$_{\mathrm{Lap}}$ and  IV$_{\mathrm{Lap}}$ methods, respectively.  The figure reveals a quadratic trend between the responses and the first direction, a finding consistent with the results of \citet{YinMoment}.

\begin{table*}[hptb]
	\caption{Summary    of
		$\mathrm{dist}(\hat{\mathcal S}_{\bf Y|\bf X}, \hat{\mathcal S}^b_{\bf Y|\bf X})$    for the Minneapolis School Data.\label{real-data-1-table} }
		\begin{tabular*}{\textwidth}{@{\extracolsep\fill}lcccclccc} \toprule
			\multicolumn{1}{c}{Methods}
			&\multicolumn{1}{c}{ Mean }
			&\multicolumn{1}{c}{SD}
			&\multicolumn{1}{c}{MSE}  && \multicolumn{1}{c}{Methods}
			&\multicolumn{1}{c}{ Mean }
			&\multicolumn{1}{c}{SD}
			&\multicolumn{1}{c}{MSE}            \\
			\midrule
			PR-SIR               &1.050 &0.313 &1.200  &&  PR-SAVE                  &0.860     &0.380     &0.884          \\
			MDDM                 &0.992 &0.315 &1.083  &&  IV$_{\mathrm{PR}}$       &0.446     &0.374     &0.339                                                        \\
			IM$_{\mathrm{PR}}$   &1.062 &0.297 &1.216  &&  IV$_{\mathrm{Guass}}$    &0.473     &0.404     &0.387            \\
			IM$_{\mathrm{Guass}}$&0.892 &0.306 &0.889  &&  IV$_{\mathrm{Lap}}$      &\textbf{0.434}     &\textbf{0.378}     &\textbf{0.331}       \\
			IM$_{\mathrm{Lap}}$  &0.888 &0.344 &0.907  &&  IV$_{\mathrm{RQ}}$       &0.466     &0.400     &0.377      \\
			IM$_{\mathrm{QR}}$   &0.885 &0.347 &0.904  &&              \\
			\botrule
		\end{tabular*}
\end{table*}

\begin{figure*}
	\centering
	\includegraphics[width=6cm,height=6cm]{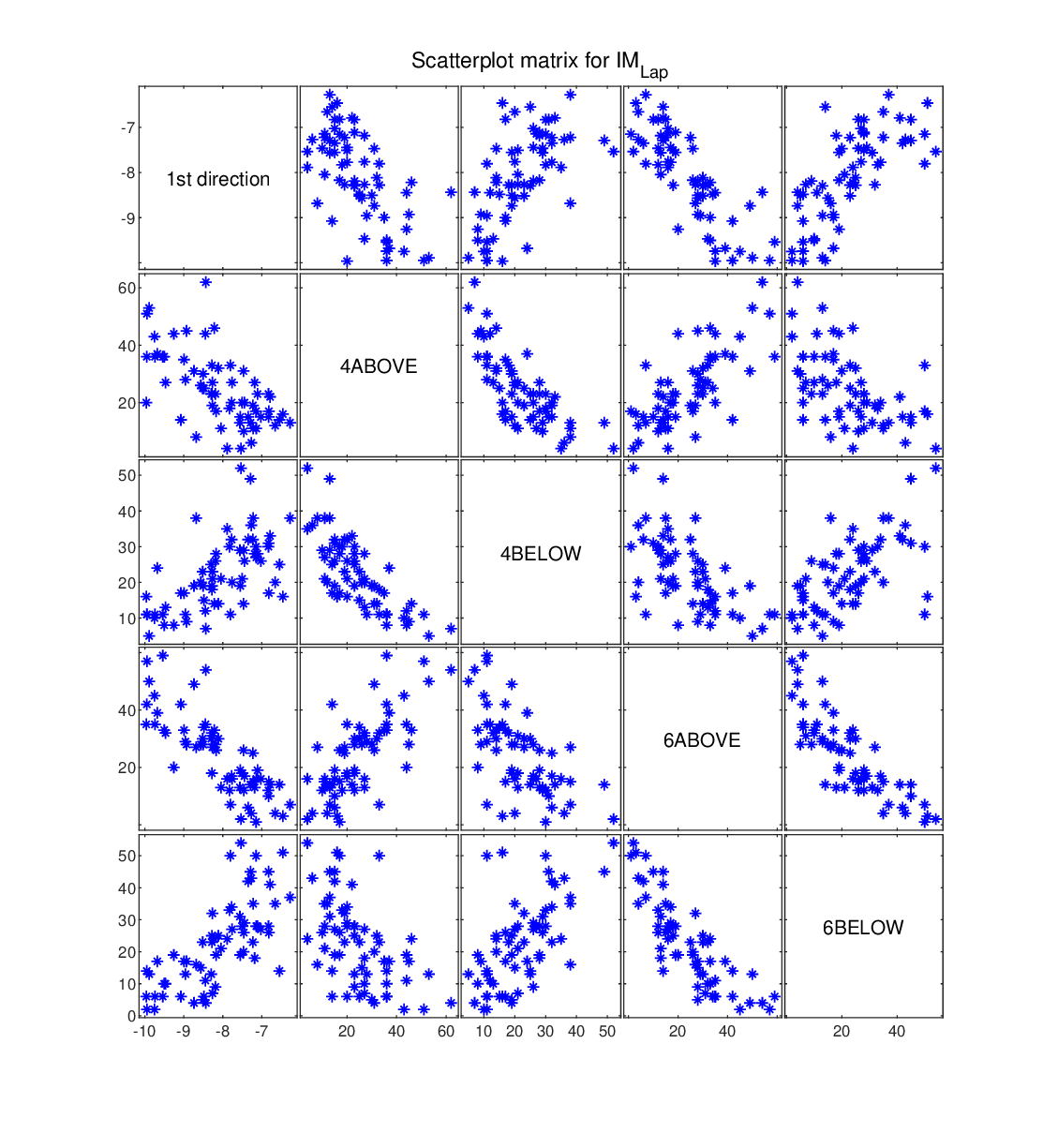}
	\includegraphics[width=6cm,height=6cm]{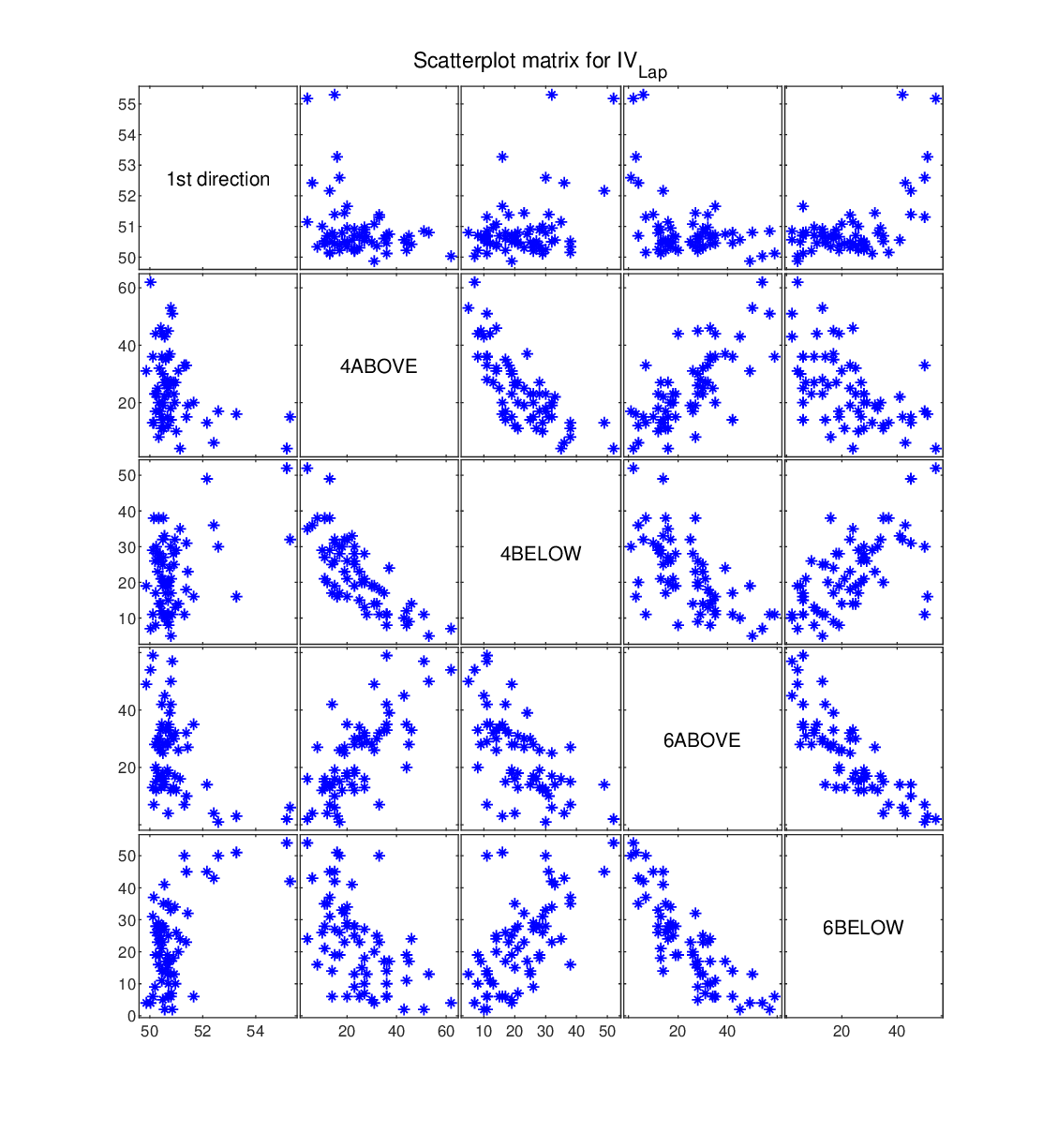}
	\caption{Scatterplot matrices  of the four responses  versus the first estimated direction   obtained from
		the  IM$_{\mathrm{Lap}}$ and  IV$_{\mathrm{Lap}}$ methods
		 for the Minneapolis school data.
		\label{fig:exampl_4}}
\end{figure*}

\subsection{Plasma retinol and beta-carotene data}

In this section, we apply the proposed methods to analyze the plasma beta-carotene level dataset from \citet{NIERENBERG1989DETERMINANTS}. The study aims to investigate the associations of plasma retinol and beta-carotene concentrations with 12 personal characteristics and dietary factors among 315 patient.

\citet{Zhu2010ON} introduced a SDR method, referred to as the UIF method, to identify a low-dimensional projection of the dataset. The two response variables are plasma beta-carotene and plasma retinol. Following \citet{Zhu2010ON}, we use nine quantitative predictors:   age,  quetelet index (BMI), number of calories, grams of fat, grams of fiber, number of alcoholic drinks, cholesterol,   dietary beta-carotene  and  dietary retinol. A full description of the data is available in the StatLib database at \url{http://lib.stat.cmu.edu/datasets/Plasma_Retinol}.

As in \citet{Zhu2010ON},  the structural dimension is estimated as   $d=2$. Using the bootstrap approach described in the previous section, we obtain the full-data subspace estimate
   $\hat{\mathcal S}_{\bf Y|\bf X}$  and its bootstrap counterparts $\hat{\mathcal S}^b_{\bf Y|\bf X}$. To compare our methods with UIF, we adopt the performance measure from \citet{Zhu2010ON}: $1- r^b (d),$  where $r^b (d)$ denotes the trace correlation coefficient
    between $\hat{\mathcal S}_{\bf Y|\bf X}$ and $\hat{\mathcal S}^b_{\bf Y|\bf X}$.

Table \ref{real-data-2-table} summarizes   the results of   $1- r^b (d)$ over $B=2000$ simulations ($\gamma=1000$ for the kernel-based methods). The results indicate   that
IV$_{\mathrm{Guass}}$, IV$_{\mathrm{Lap}}$ and IV$_{\mathrm{RQ}}$ outperform the other methods. For comparison,   the median of $1- r^b (d)$ reported for the UIF method in Table 5.5 of \citet{Zhu2010ON} is 0.0669. In this dataset,     IV$_{\mathrm{Guass}}$,
IV$_{\mathrm{Lap}}$ and IV$_{\mathrm{RQ}}$  all achieve better performance than UIF.

 \begin{table*}[hptb]
	\caption{Summary  of $1- r^b (d)$  for the plasma   retinol and beta-carotene   data.\label{real-data-2-table} }
		\begin{tabular*}{\textwidth}{@{\extracolsep\fill}lccclcc}
			\toprule
			\multicolumn{1}{c}{Methods}
			&\multicolumn{1}{c}{Median  }
			&\multicolumn{1}{c}{SD}
			&& \multicolumn{1}{c}{Methods}
			&\multicolumn{1}{c}{Median }
			&\multicolumn{1}{c}{SD}
			\\
			\midrule
			PR-SIR                &0.125      &0.112      &&PR-SAVE                  &0.077     &0.159                   \\
			MDDM                  &0.100      &0.094      &&IV$_{\mathrm{PR}}$       &0.379     &0.133                                                                                                 \\
			IM$_{\mathrm{PR}}$    &0.097      &0.093      &&IV$_{\mathrm{Guass}}$    &0.058     &0.156                         \\
			IM$_{\mathrm{Guass}}$ &0.308      &0.167      &&IV$_{\mathrm{Lap}}$      &\textbf{0.054}     &\textbf{0.146}  \\
			IM$_{\mathrm{Lap}}$   &0.112      &0.099      &&IV$_{\mathrm{RQ}}$       &0.055     &0.152                            \\
			IM$_{\mathrm{QR}}$    &0.302      &0.155      &&       \\
			\botrule
		\end{tabular*}
\end{table*}

\section{Discussion}\label{Conclusion}

This paper   presents  a unified  framework  of  sufficient dimension reduction by  measuring independence through  inverse conditional moments between random vectors.  To quantify such independence, we employ both projection and kernel techniques, leading to two methodological variants: projection-based and kernel-based approaches. This yields four distinct matrix formulations: the projection-based ICMI and ICVI matrices, denoted as $\mathrm{ICMI.M}_\mathrm{PR}(\mathbf{X}|\mathbf{Y})$ and $\mathrm{ICVI.M}_\mathrm{PR}(\mathbf{X}|\mathbf{Y})$, and their kernel-based counterparts, $\mathrm{ICMI.M}_\mathrm{K}(\mathbf{X}|\mathbf{Y})$ and $\mathrm{ICVI.M}_\mathrm{K}(\mathbf{X}|\mathbf{Y})$.
Under standard linearity and constant variance conditions, we show that all four matrices can consistently recover the central subspace. We further  derive their empirical estimators and investigate their theoretical properties.

The proposed methods can be categorized into two classes according to \citet{sejdinovic2013}. Specifically, $\mathrm{ICMI.M}_\mathrm{PR}(\mathbf{X}|\mathbf{Y})$, and $\mathrm{ICVI.M}_\mathrm{PR}(\mathbf{X}|\mathbf{Y})$  fall into the class of distance-based measures, while $\mathrm{ICMI.M}_\mathrm{K}(\mathbf{X}|\mathbf{Y})$ and $\mathrm{ICVI.M}_\mathrm{K}(\mathbf{X}|\mathbf{Y})$ belong to the kernel-based family. The theoretical connection between these two classes can be systematically explored following \citet{sejdinovic2013}. Within this framework, the recent work  proposed by \citet{ying2022frechet} represents an important generalization of distance-based approaches, extending SDR to non-Euclidean data through general distance functions.

This work focuses   on settings without sparsity assumptions. However, the proposed   methods can be extended to ultrahigh-dimensional regression problems where $\mathbf{X}$ is sparse and $p \gg n$. For related methodologies in such contexts, we refer readers to \citet{Fang2018Han} and \citet{wang2019yu}.  It is also important to note that the robustness  established here is   limited to the response variable. Extending the proposed procedures to handle heavy-tailed predictors remains an important direction for our future research.

\section*{Acknowledgement}
We sincerely thank the editor, the associate editor and  the reviewers for their insightful suggestions which have improved the manuscript significantly.

\backmatter

\section*{Declarations}
Jicai Liu's research was  supported by the Humanities and Social Sciences Youth Foundation of Ministry of Education of China (23YJC910003). Jinhong Li's work was supported by the Summit Advancement Disciplines Program of Zhejiang Province (Zhejiang Gongshang University - Statistics) and the Collaborative Innovation Center of Statistical Data Engineering Technology \& Application.

No conflict of interest exits in the submission of this manuscript, and manuscript is approved by all authors for publication. I would like to declare on behalf of my co-authors that the work described was original research that has not been published previously, and not under consideration for publication
elsewhere, in whole or in part. All the authors listed have approved the manuscript that is enclosed.

%%===================================================%%
%% For presentation purpose, we have included        %%
%% \bigskip command. Please ignore this.             %%
%%===================================================%%
\bigskip

\begin{appendices}

\section*{Appendix: Proofs}
\setcounter{equation}{0}
\setcounter{lemma}{0}
\renewcommand{\theequation}{A.\arabic{equation}}
\renewcommand{\thelemma}{A.\arabic{lemma}}

\begin{proof}[\textbf{Proof of   \eqref{equ-kernel}}]

We first prove the ``$\Rightarrow$" part. If $E\{\mathbf{X}|\mathbf{Y}\}=E \{\mathbf{X}\} $, we have   that
\begin{eqnarray*}
	   {E}\{{X}_{j}| \phi(\mathbf{Y})\}={E} \{{E}\{{X}_{j}|\mathbf{Y}\}| \phi(\mathbf{Y}) \}= {E}\{{X}_{j} \},
	\end{eqnarray*}
 for  $j=1,\cdots,p.$

 We now prove   the ``$\Leftarrow$" part.  Note that
	$$  {\boldsymbol{\Lambda}}_{\mathrm{K}, jj}
\!	=\! \langle{E}\{( {X}_{1 j}-E\{ {X}_{j}\})\phi(\mathbf{Y}_1)\},{E}\{ ({X}_{2 j}\!-\! E\{ {X}_j\})\phi(\mathbf{Y}_2)\}\rangle_\mathcal{H}.$$
Thus, we have
	\begin{eqnarray}\label{equ-2}
		{E}\{{X}_{j}| \phi(\mathbf{Y})\}= {E}\{{X}_{j} \}&\Longrightarrow&
		{\boldsymbol{\Lambda}}_{\mathrm{K}, jj} =0.
	\end{eqnarray}
By Theorem \ref{Bochner-induced},  we obtain
	\begin{eqnarray*}
		 { \boldsymbol{ \Lambda}}_{\mathrm{K}, jj}=0
 &\Longleftrightarrow& E\{{X}_{j}|\mathbf{Y}\}=E \{{X}_{j}\}.
	\end{eqnarray*}
This, together with \eqref{equ-2}, completes the proof of    the ``$\Leftarrow$" part.	
\end{proof}

\begin{proof}[\textbf{Proof of Proposition  \ref{PR_CUME}}]
	For any nonzero scalars  ${v}_1,{v}_2\in  \mathbb{R}$, we have
	\begin{eqnarray*}
		\mathrm{ang}( {v}_1, {v}_2)&=&\arccos \Big\{ \frac{ {v}_1   {v}_2}{\| {v}_1\|_2\|  {v}_2\|_2}\Big\}\\
		&=& \pi[ I( {v}_1<0)I( {v}_2>0) +I( {v}_1>0)I( {v}_2<0)].
	\end{eqnarray*}
	Suppose that ${Y}$ is continuous and univariate. Then,  we obtain
	\begin{eqnarray*}
		&&\mathrm{ang}( {Y}_1- {Y}_3,  {Y}_2- {Y}_3)\\&=& \pi [ I( {Y}_1< {Y}_3) I( {Y}_2> {Y}_3)+I( {Y}_1> {Y}_3) I( {Y}_2< {Y}_3)],
	\end{eqnarray*}
since  $\operatorname{pr}({Y}_i= {Y}_j)=0,$ $i\neq j$.  It follows that
	\begin{eqnarray*}
		&&\mathrm{ICMI.M}_\mathrm{PR}(\textbf{X}|{Y})\\
		&=&\!\!-\mathrm{E}\{ ( \mathbf{X}_1-E\{ \mathbf{X}_1\}) (\mathbf{X}_2-E\{ \mathbf{X}_2\})^T\\ &&\times\mathrm{ang}( {Y}_1-  {Y}_3, {Y}_2-  {Y}_3)\}\\
		\!\!&=&\!\!-\pi\mathrm{E}\{ ( \mathbf{X}_1-E\{ \mathbf{X}_1\}) (\mathbf{X}_2-E\{ \mathbf{X}_2\})^T\\
		&&\times[I( {Y}_1< {Y}_3)-I( {Y}_1< {Y}_3) I( {Y}_2\leqslant{Y}_3)] \}\\
		&&-\pi\mathrm{E}\{ ( \mathbf{X}_1-E\{ \mathbf{X}_1\}) (\mathbf{X}_2-E\{ \mathbf{X}_2\})^T\\
		&&\times
		[I( {Y}_2< {Y}_3)-I( {Y}_1\leqslant {Y}_3) I( {Y}_2<{Y}_3)] \}\\
		\!\!&=&\!\!2\pi\mathrm{E}\{ ( \mathbf{X}_1-E\{ \mathbf{X}_1\}) (\mathbf{X}_2-E\{ \mathbf{X}_2\})^T\\
		&&\times
		I( {Y}_1< {Y}_3) I( {Y}_2\leqslant{Y}_3)] \}\\
&=& \!\! 2\pi{ \boldsymbol{ \Lambda}}_\mathrm{CUME},
	\end{eqnarray*}
	provided  $\varpi(\cdot)=1$.
\end{proof}

\begin{proof}[\textbf{Proof of Theorem  \ref{them-projection}}]
	By  \eqref{intr0} and the  tower property  of conditional expectation, we have
	\begin{eqnarray*}
		\mathrm{E} \{\mathbf{X}_i-E\{ \mathbf{X}_i\}|\mathbf{Y}_i \}
		&=&\mathrm{E} \{\mathrm{E}[\mathbf{X}_i-E\{ \mathbf{X}_i\}|\mathbf{Y}_i, \mathbf{\Gamma}^T\mathbf{X}_i ] |\mathbf{Y}_i \}\\
		&=&\mathrm{E} \{\mathrm{E}[\mathbf{X}_i-E\{ \mathbf{X}_i\}|  \mathbf{\Gamma}^T\mathbf{X}_i] |\mathbf{Y}_i \}.
	\end{eqnarray*}
	This, together with  the linearity condition that  $E\{\mathbf{X} |\mathbf{\Gamma}^T \mathbf{X}\}=\mathbf{P}_{\mathbf{\Gamma}}^T(\mathbf{\Sigma}) \mathbf{X}$,  yields that
	\begin{eqnarray*}
		&&{ \boldsymbol{ \Lambda}}_\mathrm{PR}\\&=&\mathrm{ICMI.M}_\mathrm{PR}(\textbf{X}|\mathbf{Y})\nonumber\\
		& =& -\mathrm{E} \{ \mathrm{E}\{\mathbf{X}_1-E\{ \mathbf{X}_1\}|\mathbf{Y}_1\} \mathrm{E}\{\mathbf{X}_2-E\{ \mathbf{X}_2\}|\mathbf{Y}_2\}^T\nonumber\\
		&&\times \mathrm{ang}(\mathbf{Y}_1- \mathbf{Y}_3,\mathbf{Y}_2- \mathbf{Y}_3) \} \nonumber\\
		& =&-\mathrm{E} \{\mathbf{P}_{\Gamma}^T(\mathbf{\Sigma}) \mathrm{E}\{\mathbf{X}_1-E\{ \mathbf{X}_1\}|\mathbf{Y}_1\} \mathrm{E}\{\mathbf{X}_2-E\{ \mathbf{X}_2\}|\mathbf{Y}_2\}^T\\
		&&\times\mathbf{P}_{\Gamma}(\mathbf{\Sigma}) \mathrm{ang}(\mathbf{Y}_1- \mathbf{Y}_3,\mathbf{Y}_2- \mathbf{Y}_3) \} \nonumber\\
		& =&\mathbf{P}_{\Gamma}^T(\mathbf{\Sigma}){ \boldsymbol{ \Lambda}}_\mathrm{PR} \mathbf{P}_{\Gamma}(\mathbf{\Sigma}). \nonumber
	\end{eqnarray*}
	Thus,  $\mathcal{S}({ \boldsymbol{ \Lambda}}_\mathrm{PR}) \subseteq \mathbf{\Sigma} \mathcal{S}_{Y | \mathrm{X}}$.
	\end{proof}

\begin{proof}[\textbf{Proof of Theorem  \ref{them-kenerl-consistency}}]
	
We prove the theorem using an $\varepsilon$-net argument. Let $\mathbb{S}^{p-1}$ be the unit sphere in $\mathbb{R}^p$ equipped with Euclidean distance, and let $\mathcal{N}_\varepsilon$ be an $\varepsilon$-net of $\mathbb{S}^{p-1}$. By Corollary~4.2.13 of \citet{vershynin2018high}, the cardinality of $\mathcal{N}_\varepsilon$ satisfies
\begin{equation}\label{net-bound}
|\mathcal{N}_{\varepsilon}| \leq \left(1 + \frac{2}{\varepsilon}\right)^p.
\end{equation}
For any symmetric matrix $\mathbf{A} \in \mathbb{R}^{p \times p}$, Lemma~4.4.1 of \citet{vershynin2018high} gives the following bound on the operator norm
\begin{equation}\label{operator-bound}
		\|\mathbf{A}\|= \sup _{\boldsymbol{v} \in \mathbb{S}^{p-1}} |\boldsymbol{v}^{T} \mathbf{A} \boldsymbol{v} | \leq \frac{1}{1-2 \epsilon} \sup _{\boldsymbol{v} \in \mathcal{N}_{\epsilon}} |\boldsymbol{v}^{T} \mathbf{A} \boldsymbol{v} |.
	\end{equation}

Taking $\varepsilon = 1/4$, we obtain from \eqref{net-bound} and \eqref{operator-bound}
 \begin{equation*}
		|\mathcal{N}_{1 / 4}|< 9^{p}   ~~ \mathrm{and}  ~ ~\|\mathbf{A}\| \leq 2 \sup _{\boldsymbol{v} \in \mathcal{N}_{1 / 4}} |\boldsymbol{v}^{T} \mathbf{A} \boldsymbol{v} |.
	\end{equation*}
Then, for any   $t> 0$, we have  	
	\begin{eqnarray*}\label{approach-net}
		&&\operatorname{pr}\left(\|\widehat{ \boldsymbol{ \Lambda}}_\mathrm{PR} -\boldsymbol{ \Lambda}_\mathrm{PR}\|  \geq t  \right)\nonumber\\
		  & \leq&  \operatorname{pr}\left(\sup _{\boldsymbol{v} \in \mathcal{N}_{1 / 4}}|\boldsymbol{v}^{\mathrm{T}}\{\widehat{ \boldsymbol{ \Lambda}}_\mathrm{PR} -\boldsymbol{ \Lambda}_\mathrm{PR}\} \boldsymbol{v}| \geq t/2\right)\nonumber\\
		& \leq&   \sum_{\boldsymbol{v} \in \mathcal{N}_{1 / 4}} \operatorname{pr}\left( |\boldsymbol{v}^{\mathrm{T}}\{\widehat{ \boldsymbol{ \Lambda}}_\mathrm{PR} -\boldsymbol{ \Lambda}_\mathrm{PR}\} \boldsymbol{v}| \geq t/2\right).
	\end{eqnarray*}
By Lemma  \ref{lemma1-thprojection},  we have
	\begin{eqnarray}\label{unbound-theoem-proof1}
		&&\operatorname{pr}\left(\|\widehat{ \boldsymbol{ \Lambda}}_\mathrm{PR} -\boldsymbol{ \Lambda}_\mathrm{PR}\|  \geq t  \right)\nonumber\\
		& \leq&  9^{p} \sup _{\boldsymbol{v} \in\mathcal{S}^{p-1}  }  \operatorname{pr}\left( |\boldsymbol{v}^{\mathrm{T}}\{\widehat{ \boldsymbol{ \Lambda}}_\mathrm{PR} -\boldsymbol{ \Lambda}_\mathrm{PR}\} \boldsymbol{v}| \geq t/2 \right)\nonumber\\
		& \leq&   9^{p} c \exp \left[-C n\min \left(\frac{t^{2}}{ \|\mathbf{\Sigma}\|^{2}}, \frac{ t}{\|\mathbf{\Sigma}\|}\right)\right]\nonumber\\
		& \leq&   c \exp \left[-C n  \frac{t^{2}}{ \|\mathbf{\Sigma}\|^{2} + t\|\mathbf{\Sigma}\|}+p  \log 9 \right].
	\end{eqnarray}
where the last inequality follows from $\min(a,b) \geq ab/(a+b)$.

	For any $0<\alpha<1,$ let
	\begin{eqnarray}\label{exp-bound-inequality}
\exp \left[-C n  \frac{t^{2}}{ \|\mathbf{\Sigma}\|^{2} + t\|\mathbf{\Sigma}\|}+p  \log 9 \right]\leq\alpha.
\end{eqnarray}
Taking logarithms and rearranging, we obtain
	$$  \frac{p-\log \alpha}{n}  \leq\frac{Ct^{2}}{ \|\mathbf{\Sigma}\|^{2} + t\|\mathbf{\Sigma}\|} ~~\Big(\leq \frac{Ct^{2}}{ \|\mathbf{\Sigma}\|^{2}}\Big).$$
Thus, \eqref{exp-bound-inequality} holds for any  $t$, satisfying
	$$t  \geq C^{-1}  \|\mathbf{\Sigma}\| \sqrt{ \frac{p+\log(1/\alpha)}{n}}.$$

Taking $t= C^{-1} \|\mathbf{\Sigma}\| \sqrt{ \frac{p+\log(1/\alpha)}{n}}$ in \eqref{unbound-theoem-proof1}, we obtain
 	\begin{eqnarray*}
		\operatorname{pr}\left(\|\widehat{ \boldsymbol{ \Lambda}}_\mathrm{PR} -\boldsymbol{ \Lambda}_\mathrm{PR}\|  \geq C^{-1} \|\mathbf{\Sigma}\| \sqrt{ \frac{p+\log(1/\alpha)}{n}} \right)
		\leq    \alpha.
	\end{eqnarray*}
This complete the proof.
\end{proof}

\begin{lemma}\label{lemma1-thprojection} Assume that  Condition 1 holds.
	For any   $\boldsymbol{v} \in\mathbb{S}^{p-1},$ we have
	\begin{eqnarray*}
		&&\operatorname{pr}\left( |\boldsymbol{v}^{\mathrm{T}}\{\widehat{ \boldsymbol{ \Lambda}}_\mathrm{PR} -\boldsymbol{ \Lambda}_\mathrm{PR}\} \boldsymbol{v}| \geq t \right)\\
		 &&\leq  c \exp \left[-C n\min \left(\frac{t^{2}}{ \|\mathbf{\Sigma}\|^{2}}, \frac{ t}{\|\mathbf{\Sigma}\|}\right)\right], ~~ t>0\nonumber,
	\end{eqnarray*}
	where  $c$ and $C$ are two positive constants.
\end{lemma}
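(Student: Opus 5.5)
Fix $\boldsymbol{v}\in\mathbb{S}^{p-1}$ and reduce to scalars. Put $W=\boldsymbol{v}^T(\mathbf{X}-E\{\mathbf{X}\})$ and $W_i=\boldsymbol{v}^T(\mathbf{X}_i-E\{\mathbf{X}\})$. By Condition 1, $W$ is sub-Gaussian with $\|W\|_{\psi_2}^2\le C\,\boldsymbol{v}^T\mathbf{\Sigma}\boldsymbol{v}\le C\|\mathbf{\Sigma}\|$. Write $h(\mathbf{y}_1,\mathbf{y}_2,\mathbf{y}_3)=\mathrm{ang}(\mathbf{y}_1-\mathbf{y}_3,\mathbf{y}_2-\mathbf{y}_3)\in[0,\pi]$ (set to $0$ on ties), so the weight is bounded by $\pi$. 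Replacing $\overline{\mathbf{X}}_n$ by $E\{\mathbf{X}\}+(\overline{\mathbf{X}}_n-E\{\mathbf{X}\})$ and writing $\bar W=n^{-1}\sum_i W_i$, we get
\[
\boldsymbol{v}^T\widehat{\boldsymbol{\Lambda}}_{\mathrm{PR}}\boldsymbol{v}=-\frac{1}{n^3}\sum_{i,j,k}(W_i-\bar W)(W_j-\bar W)h_{ijk}
= T_1-2\bar W\, T_2+\bar W^2 T_3 ,
\]
where
\[
T_1=-n^{-3}\sum_{i,j,k} W_iW_jh_{ijk},\qquad T_2=-n^{-3}\sum_{i,j,k}W_ih_{ijk},\qquad T_3=-n^{-3}\sum_{i,j,k}h_{ijk}.
\]
Then $|T_3|\le\pi$. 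The tail of $|\bar W|$ follows from Hoeffding's inequality for sub-Gaussian sums: $\operatorname{pr}(|\bar W|\ge s)\le 2\exp(-Cns^2/\|\mathbf{\Sigma}\|)$. Hence $\bar W^2T_3$ exceeds $t$ only with probability at most $2\exp(-Cnt/\|\mathbf{\Sigma}\|)$, which is the sub-exponential branch of the bound. For $T_2$, condition on $\{\mathbf{Y}_i\}$: it is a weighted sum of the $W_i$ with weights bounded by $\pi$, but the $W_i$ are not independent of the $\mathbf{Y}$'s. I would therefore treat $T_2$ directly as a V-statistic with sub-Gaussian kernel, as for $T_1$ below, and get $|T_2|\lesssim\sqrt{\|\mathbf{\Sigma}\|}$ plus a sub-Gaussian deviation. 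The product $\bar W T_2$ then has a sub-exponential tail at scale $\|\mathbf{\Sigma}\|/n$, which is even better than needed.

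The main term is $T_1$, a V-statistic of order three with kernel $g(Z_i,Z_j,Z_k)=-W_iW_jh_{ijk}$, where $Z_i=(W_i,\mathbf{Y}_i)$. Its mean is $\boldsymbol{v}^T\boldsymbol{\Lambda}_{\mathrm{PR}}\boldsymbol{v}$. First I would split the V-statistic into the U-statistic over distinct triples and the diagonal terms with a repeated index. The diagonal terms have relative weight $O(1/n)$ and kernels such as $W_i^2h$, which are sub-exponential with $\psi_1$-norm $\lesssim\|\mathbf{\Sigma}\|$. Bernstein's inequality makes their contribution negligible at the stated rate; in particular the centring $E\{W_1^2h_{113}\}$, which is $O(\|\mathbf{\Sigma}\|)$ after division by $n$, is absorbed because $\|\mathbf{\Sigma}\|/n\le\|\mathbf{\Sigma}\|\sqrt{1/n}$ once $t\gtrsim\|\mathbf{\Sigma}\|/n$.

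For the U-statistic I would use Hoeffding's representation, writing $U=(n!)^{-1}\sum_{\sigma}V_\sigma$ as an average of sums of $m=\lfloor n/3\rfloor$ independent blocks $g(Z_{\sigma(1)},Z_{\sigma(2)},Z_{\sigma(3)})$. Convexity of $x\mapsto e^{\lambda x}$ (the Chernoff method) gives $E e^{\lambda(U-EU)}\le E e^{\lambda(V_\sigma-EV_\sigma)}$. Each block $g$ is a product of two sub-Gaussian variables times a bounded weight. It is therefore sub-exponential with $\|g\|_{\psi_1}\le\pi\|W_i\|_{\psi_2}\|W_j\|_{\psi_2}\le C\|\mathbf{\Sigma}\|$, and the same holds after centring. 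Bernstein's moment-generating-function bound for sums of $m$ i.i.d. centred sub-exponential variables then gives
\[
\operatorname{pr}(|U-EU|\ge t)\le 2\exp\bigl[-Cn\min(t^2/\|\mathbf{\Sigma}\|^2,\,t/\|\mathbf{\Sigma}\|)\bigr].
\]
Finally I would combine the pieces by a union bound with $t$ split into thirds, adjusting constants $c$ and $C$.

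The main obstacle is the Hoeffding-decomposition step. I must check that the independent-block average applies to a non-symmetric kernel (symmetrise over the $3!$ orderings first, which keeps $\|g\|_{\psi_1}$ bounded). I must also check that the dependence between $W_i$ and $\mathbf{Y}_i$ inside the kernel causes no trouble: it does not, since the blocks are i.i.d. in $Z$. Handling the cross terms involving $\bar W$ with the same sub-exponential scaling $\|\mathbf{\Sigma}\|$ is the other delicate bookkeeping point.
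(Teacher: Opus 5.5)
Your proposal is correct and follows essentially the paper's own route. It uses the same split $T_1-2\bar W T_2+\bar W^2T_3$, which is the paper's $-S_{1n}-S_{2n}+2S_{3n}$. It treats the third-order U-statistic by Hoeffding's independent-block representation with Jensen's inequality, plus a sub-exponential MGF bound optimized over $\lambda$. Finally, it controls $\bar W$ with sub-Gaussian Hoeffding bounds.

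Two minor corrections. First, with your convention $h=0$ on ties, the repeated-index terms vanish identically (e.g.\ $\mathrm{ang}(\mathbf{u},\mathbf{u})=0$), so the V-statistic is exactly $\frac{(n-1)(n-2)}{n^2}$ times the U-statistic and no Bernstein step is needed for them. Second, since $E\{T_2\}\neq 0$ in general, $\bar W T_2$ fluctuates at scale $\|\mathbf{\Sigma}\|/\sqrt{n}$, not $\|\mathbf{\Sigma}\|/n$. It therefore yields exactly, not better than, the $t^2/\|\mathbf{\Sigma}\|^2$ branch of the bound, which still suffices.
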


\begin{proof}[\textbf{Proof of Lemma  \ref{lemma1-thprojection}}]
	  By direct calculation, we obtain  the following decomposition
	\begin{eqnarray*}
		\boldsymbol{v}^{\mathrm{T}} \boldsymbol{\widehat{ \Lambda}}_\mathrm{PR}  \boldsymbol{v}
		 &=&  -\frac{1}{n^3}  \sum_{i,r,l=1}^n\!\!
		\boldsymbol{v}^{\mathrm{T}} [(\mathbf{X}_i\!-\!E\{\mathbf{X}\})\!-\!(\overline{\mathbf{X}}_{n}\!-\!E\{\mathbf{X}\} ) ]\\&&\times [(\mathbf{X}_l\!-\!E\{\mathbf{X}\})\!-\!(\overline{\mathbf{X}}_{n}\!-\!E\{\mathbf{X}\} ) ]^T\boldsymbol{v}\\&&\times\mathrm{ang}(\mathbf{Y}_i -\mathbf{Y}_r, \mathbf{Y}_l-\mathbf{Y}_r) \\
		&=&-S_{1n} -S_{2n}  + 2S_{3n},
	\end{eqnarray*}
where the three terms are defined as
	\begin{eqnarray*}
		 S_{1n}&=&\frac{1}{n^3} \sum_{i,r,l=1}^n \boldsymbol{v}^{\mathrm{T}}[\mathbf{X}_i-E\{\mathbf{X}\}] [\mathbf{X}_l-E\{\mathbf{X}\}]^T\boldsymbol{v}\\
		&&\times\mathrm{ang}(\mathbf{Y}_i  -\mathbf{Y}_r, \mathbf{Y}_l-\mathbf{Y}_r), \nonumber\\
		S_{2n}&=& \frac{1}{n^3}\sum_{i,r,l=1}^n \mathrm{ang}(\mathbf{Y}_i  -\mathbf{Y}_r, \mathbf{Y}_l-\mathbf{Y}_r)\\
		&&\times\frac{1}{n^2}\sum_{j, m=1}^n
		\boldsymbol{v}^{\mathrm{T}}[\mathbf{X}_j-E\{\mathbf{X}\}]  [\mathbf{X}_m-E\{\mathbf{X}\}]^T\boldsymbol{v}  ,   \nonumber\\
		S_{3n}&=&\frac{1}{n^3} \sum_{i, r,l=1}^n  \mathrm{ang}(\mathbf{Y}_i  -\mathbf{Y}_r, \mathbf{Y}_l-\mathbf{Y}_r)\\
		&&\times\{\boldsymbol{v}^{\mathrm{T}}[\mathbf{X}_l-E\{\mathbf{X}\}]\}\frac{1}{n}
		\sum_{j =1}^n  \boldsymbol{v}^{\mathrm{T}}[\mathbf{X}_j-E\{\mathbf{X}\}].
	\end{eqnarray*}
	Therefore,   we have
	\begin{eqnarray}\label{widehat.omega-omega}
		&&\operatorname{pr}\left( |\boldsymbol{v}^{\mathrm{T}}\{\widehat{ \boldsymbol{ \Lambda}}_\mathrm{PR} -\boldsymbol{ \Lambda}_\mathrm{PR}\} \boldsymbol{v}| \geq t\right)\nonumber\\
		&\leq&  \operatorname{pr} \left( |S_{1n} \! + \boldsymbol{v}^{\mathrm{T}} \boldsymbol{ \Lambda}_\mathrm{PR} \boldsymbol{v})|\geq    {t}/{3}\right)\!+ \!\operatorname{pr} \left( |S_{2n} |\geq  {t}/{3}\right) \nonumber\\
		&& + \operatorname{pr} \left( |S_{3n}|\geq   {t}/{6}\right).
	\end{eqnarray}
We proceed to bound each term by U-statistics theory. For notational convenience, we assume throughout that the predictors have been centered, i.e., $\mathbf{X}_i$ is replaced by $\mathbf{X}_i - E{\mathbf{X}}$, while maintaining the original notation.

	\emph{ \textbf{Step 1: The tail bound on  $S_{1n} \! + \boldsymbol{v}^{\mathrm{T}} \boldsymbol{ \Lambda}_\mathrm{PR} \boldsymbol{v} $.}}
We begin by expressing $S_{1n}$ as
	\begin{eqnarray*}
		S_{1n} \!\! &= &\!\! \frac{1}{n^3} \sum_{i,r,l=1}^n\{\boldsymbol{v}^T\mathbf{X}_i\} \{ \mathbf{X}_l^T\boldsymbol{v}\} \mathrm{ang}(\mathbf{Y}_i \!\!  -\!\!\mathbf{Y}_r, \mathbf{Y}_l-\mathbf{Y}_r) \nonumber\\
		\!\!&= &\!\!\frac{1}{n^3} \Big\{ \!\!\! \sum_{\substack{i\neq r,i\neq l,\\ r\neq l}}\!\!\!\!\{\boldsymbol{v}^T\mathbf{X}_i\} \{ \mathbf{X}_l^T\boldsymbol{v}\} \mathrm{ang}(\mathbf{Y}_i  -\mathbf{Y}_r, \mathbf{Y}_l \!\!-\!\!\mathbf{Y}_r)\\
		&& +\!\!\!
		\sum_{\substack{i\neq r,i\neq l,\\ r= l}}\!\!\!\! \{\boldsymbol{v}^T\mathbf{X}_i\} \{ \mathbf{X}_l^T\boldsymbol{v}\} \mathrm{ang}(\mathbf{Y}_i  -\mathbf{Y}_r, \mathbf{Y}_l-\mathbf{Y}_r)\nonumber\\
		&&+ \!\!\! \sum_{\substack{i\neq r,i= l,\\ r\neq l}} \!\! \!\!\{\boldsymbol{v}^T\mathbf{X}_i\} \{ \mathbf{X}_l^T\boldsymbol{v}\} \mathrm{ang}(\mathbf{Y}_i  -\mathbf{Y}_r, \mathbf{Y}_l-\mathbf{Y}_r)\\
		&&+\!\!\! \sum_{\substack{i = r,i\neq l,\\ r\neq l}} \!\!\!\!  \{\boldsymbol{v}^T\mathbf{X}_i\} \{ \mathbf{X}_l^T\boldsymbol{v}\} \mathrm{ang}(\mathbf{Y}_i \!\!  -\!\! \mathbf{Y}_r, \mathbf{Y}_l-\mathbf{Y}_r)\Big\}\nonumber\\
		\!\! &= &\!\! \frac{1}{n^3}  \!\!\!\!\sum_{\substack{i\neq r,i\neq l,\\ r\neq l}} \!\!\!\!\{\boldsymbol{v}^T\mathbf{X}_i\} \{ \mathbf{X}_l^T\boldsymbol{v}\} \mathrm{ang}(\mathbf{Y}_i  -\mathbf{Y}_r, \mathbf{Y}_l-\mathbf{Y}_r)\\
		&&+\frac{1}{n^3} \sum_{i\neq r,i= l} \!\!\!\!\{\boldsymbol{v}^T\mathbf{X}_i\} \{ \mathbf{X}_l^T\boldsymbol{v}\} \mathrm{ang}(\mathbf{Y}_i \!\!  -\!\!\mathbf{Y}_r, \mathbf{Y}_l \!\!-\!\!\mathbf{Y}_r)
		\nonumber\\
		&=& \frac{(n-1)(n-2)} {n^2}{T}_{1 n},\nonumber
	\end{eqnarray*}
	where    ${T}_{1 n}$  is  a  U-statistic, given by
	\begin{eqnarray*}
		{T}_{1 n}\!\!\!  &=&  \!\!\begin{pmatrix} n \\  3 \end{pmatrix}^{\!\!\!\!-1}  \!\!\!\! \sum_{i<r<l}\!\! \frac{ 1}{3} \Big\{ \{\boldsymbol{v}^T\mathbf{X}_i\} \{ \mathbf{X}_l^T\boldsymbol{v}\} \mathrm{ang}(\mathbf{Y}_i \!\!  -\!\! \mathbf{Y}_r, \mathbf{Y}_l \!\!-\!\! \mathbf{Y}_r)\\
		&&+ \{\boldsymbol{v}^T\mathbf{X}_r\} \{ \mathbf{X}_l^T\boldsymbol{v}\} \mathrm{ang}(\mathbf{Y}_r  -\mathbf{Y}_i, \mathbf{Y}_l-\mathbf{Y}_i)\\
		&& + \{\boldsymbol{v}^T\mathbf{X}_i\} \{ \mathbf{X}_r^T\boldsymbol{v}\} \mathrm{ang}(\mathbf{Y}_i  -\mathbf{Y}_l, \mathbf{Y}_r-\mathbf{Y}_l)\Big \}\\
		&=& \begin{pmatrix} n \\  3 \end{pmatrix}^{-1} \sum_{i<r<l} k (\mathbf{X}_{i},\mathbf{Y}_{i}; \mathbf{X}_{r}, \mathbf{Y}_{r};\mathbf{X}_{l}, \mathbf{Y}_{l} ),
	\end{eqnarray*}
	%and
	%\begin{eqnarray*}
	%{T}_{2 n}%&=&\begin{pmatrix} n \\  2 \end{pmatrix}^{-1} \sum_{i\neq r} \mathrm{ang}( {X}_{i,k} - {X}_{r,k},  {X}_{i,k}- {X}_{r,k})Y_i^2 \\
	%&=&\begin{pmatrix} n \\  2 \end{pmatrix}^{-1}\sum_{i<r}\frac{1}{2}  \mathrm{ang}(\mathbf{Y}_i  -\mathbf{Y}_r, \mathbf{Y}_i-\mathbf{Y}_r)\Big\{\{\mathbf{u}_{1}^T\mathbf{X}_i\} \{ \mathbf{X}_i^T\mathbf{u}_{2}\}+\{\mathbf{u}_{1}^T\mathbf{X}_r\} \{ \mathbf{X}_r^T\mathbf{u}_{2} \} \Big\}\\
	%   &=& 0,
	%  \end{eqnarray*}
and   $k (\mathbf{X}_{i},\mathbf{Y}_{i}; \mathbf{X}_{r}, \mathbf{Y}_{r};\mathbf{X}_{l}, \mathbf{Y}_{l} )$ is defined as
\begin{eqnarray*}
	&&k (\mathbf{X}_{i},\mathbf{Y}_{i}; \mathbf{X}_{r}, \mathbf{Y}_{r};\mathbf{X}_{l}, \mathbf{Y}_{l} )\\
	&=&    \frac{ 1}{3} \Big\{ \{\boldsymbol{v}^T\mathbf{X}_i\} \{ \mathbf{X}_l^T\boldsymbol{v}\} \mathrm{ang}(\mathbf{Y}_i  -\mathbf{Y}_r, \mathbf{Y}_l-\mathbf{Y}_r)\\
	&&+ \{\boldsymbol{v}^T\mathbf{X}_r\} \{ \mathbf{X}_l^T\boldsymbol{v}\} \mathrm{ang}(\mathbf{Y}_r  -\mathbf{Y}_i, \mathbf{Y}_l-\mathbf{Y}_i)\\
	&& + \{\boldsymbol{v}^T\mathbf{X}_i\} \{ \mathbf{X}_r^T\boldsymbol{v}\} \mathrm{ang}(\mathbf{Y}_i  -\mathbf{Y}_l, \mathbf{Y}_r-\mathbf{Y}_l)\Big \}.
\end{eqnarray*}

Let $S_{1}=-\boldsymbol{v}^{\mathrm{T}} \boldsymbol{ \Lambda}_\mathrm{PR} \boldsymbol{v}.$
Note that
\begin{eqnarray}\label{s1nk1}
	&&\operatorname{pr}\left( |S_{1 n} + \boldsymbol{v}^{\mathrm{T}} \boldsymbol{ \Lambda}_\mathrm{PR} \boldsymbol{v} | \geq  t \right)  \nonumber\\
	&=&\!\!\operatorname{pr}\left(  |\frac{(n\!-\!1)(n\!-\!2)}{n^2} [{T}_{1 n}-S_{1}]\!-\!\frac{ 3n\!-\!2}{n^2} S_{1} | \geq  t  \right) \nonumber\\
	& \leq & \operatorname{pr}\left(  | {T}_{1 n}-S_{1} | \geq  \frac{t}{2} \right)+\operatorname{pr}\left(  |\frac{ 3n-2}{n^2} S_{1}  | \geq \frac{t}{2} \right).\nonumber\\
\end{eqnarray}
Since  $E\{ {T}_{1 n}\}=S_{1}$ and    $0\leq \mathrm{ang}(\mathbf{Y}_1  -\mathbf{Y}_3, \mathbf{Y}_4-\mathbf{Y}_3)\leq \pi,$ we have
\begin{eqnarray}\label{S1boundness}
	|E\{ {T}_{1 n}\}|&\leq&   E|\{\boldsymbol{v}^T\mathbf{X}_1\} \{ \mathbf{X}_4^T\boldsymbol{v}\} \mathrm{ang}(\mathbf{Y}_1  -\mathbf{Y}_3, \mathbf{Y}_4-\mathbf{Y}_3) | \nonumber\\ &\leq& \pi E\{|\boldsymbol{v}^T\mathbf{X} |^2\}.
\end{eqnarray}
Under Condition 1, we have
\begin{eqnarray*}
	\operatorname{pr}\left(|\boldsymbol{v}^T\mathbf{X}| \geq t \right) \leq 2 \exp [-C t^{2} /\|\boldsymbol{v}^T\mathbf{X} \|_{\psi_{2}}^{2}]\\ \leq  2 \exp [-C t^{2} /\|\mathbf{X}\|_{\psi_{2}}^{2}],
\end{eqnarray*}
where    $\|\mathbf{X}\|_{\psi_{2}} =\sup _{\boldsymbol{v} \in \mathbb{S}^{p-1}}\|\boldsymbol{v}^T\mathbf{X}\|_{\psi_{2}}$.   This implies
\begin{eqnarray*} E\{|\boldsymbol{v}^T\mathbf{X}|^2\}&=&\int_{0}^{\infty}\operatorname{pr}\left(|\boldsymbol{v}^T\mathbf{X}| \geq \sqrt{u}\right) d u\nonumber\\
	&\leq&  \int_{0}^{\infty}2 \exp \left(-C u  /\|\mathbf{X}\|_{\psi_{2}}^{2}\right)  d u=\frac{2\|\mathbf{X}\|_{\psi_{2}}^{2}}{C}.
\end{eqnarray*}
Thus,  we   obtain
$|S_{1}|\leq  {2\pi\|\mathbf{X}\|_{\psi_{2}}^{2}}/{c}<\infty,$       implying
\begin{equation}\label{s1nk2}
	\operatorname{pr}\Big(|\frac{ 3n-2}{n^2} S_{1} | \geq t \Big)=0,
\end{equation}
for  sufficiently large $n$, i.e.,   $n \geq 6\pi \|\mathbf{X}\|_{\psi_{2}}^{2}/(Ct)$.
%This, together with \eqref{s1nk1}, yields that
% \begin{eqnarray}\label{s1nk2}
	% \operatorname{pr}\left( |S_{1 n}-S_{1} | \geq  \frac{t}{6} \right)
	%    \leq   \operatorname{pr}\left(  | {T}_{1 n}-S_{1} | \geq  \frac{t}{12} \right) .
	% \end{eqnarray}

To establish a concentration inequality for $T_{1n}$, we employ the decoupling technique from \citet[Section~5.1.6]{By1980Approximation}. Specifically, define
\begin{eqnarray*}
	&&\Omega\left(\mathbf{X}_{1}, \mathbf{Y}_{1} ; \cdots ; \mathbf{X}_{n}, \mathbf{Y}_{n}\right)\\&=&\frac{1}{m} \sum_{r=0}^{m-1}k (\mathbf{X}_{1+2 r},\mathbf{Y}_{1+2 r}; \mathbf{X}_{2+2 r}, \mathbf{Y}_{2+2 r}; \mathbf{X}_{3+2 r}, \mathbf{Y}_{3+2 r} ),
\end{eqnarray*}
where $m=\lfloor n/3\rfloor.$
As shown in \citet[Section~5.1.6]{By1980Approximation}, we have
\begin{equation*}{T}_{1 n}= \frac{1}{n !} \sum_{n !} \Omega\left(\mathbf{X}_{i_{1}}, \mathbf{Y}_{i_{1}} ; \cdots ; \mathbf{X}_{i_{n}}, \mathbf{Y}_{i_{n}}\right),\end{equation*}
where $\sum_{n !}$ denote summation
over all $n !$ permutations $\left(i_{1}, \cdots, i_{n}\right)$ of $(1, \cdots, n).$
By   Markov's inequality, for any $\lambda>0,$ we have
\begin{equation*}
	\operatorname{pr}\left(    {T}_{1 n}-S_{1}   \geq  t\right) \leq \exp \{-\lambda t \} E\left\{\exp\left\{ \lambda[ {T}_{1 n}-S_{1}]\right\}\right\}.
\end{equation*}
Applying Jensen's inequality   yields
\begin{eqnarray*}
	&&E\left\{\exp\left\{ \lambda{T}_{1 n}\right\}\right\}\\
	 &=&E\Big[\exp  \{\frac{\lambda}{n !} \sum_{n !} \Omega (\mathbf{X}_{i_{1}}, \mathbf{Y}_{i_{1}} ; \cdots ; \mathbf{X}_{i_{n}}, \mathbf{Y}_{i_{n}} ) \}\Big] \nonumber \\
	&\leq& \frac{1}{n !} \sum_{n !} E \Big[\exp\{\lambda \Omega (\mathbf{X}_{i_{1}}, \mathbf{Y}_{i_{1}} ; \cdots ; \mathbf{X}_{i_{n}}, \mathbf{Y}_{i_{n}} )\}\Big ] \nonumber\\
	&=& E\Big[\exp \Big\{ \frac{\lambda}{m} \sum_{r=0}^{m-1}k (\mathbf{X}_{1+2 r},\mathbf{Y}_{1+2 r}; \mathbf{X}_{2+2 r}, \mathbf{Y}_{2+2 r};\\
	&&\mathbf{X}_{3+2 r}, \mathbf{Y}_{3+2 r} )\Big\}\Big]
	\nonumber\\
	&=&E^{m} \{\exp\{\lambda k(\mathbf{X}_{1},\mathbf{Y}_{1}; \mathbf{X}_{2}, \mathbf{Y}_{2};\mathbf{X}_{3}, \mathbf{Y}_{3} )/m \} \}.
\end{eqnarray*}
This leads to
\begin{eqnarray}\label{expo-markov-generating memont}
	&&\operatorname{pr}\left(    {T}_{1 n}-S_{1}   \geq  t \right)\nonumber \\
	&\leq& \exp \{-\lambda t \}
	E^{m} \{\exp  \{\lambda [k(\mathbf{X}_{1},\mathbf{Y}_{1}; \mathbf{X}_{2}, \mathbf{Y}_{2};\mathbf{X}_{3}, \mathbf{Y}_{3} )\nonumber\\
	&&-S_1]/m \} \}.
\end{eqnarray}

We now show that $k(\mathbf{X}{1},\mathbf{Y}{1}; \mathbf{X}{2}, \mathbf{Y}{2};\mathbf{X}{3}, \mathbf{Y}{3})$ is   sub-exponential. Using the inequality $|a+b+c|^p \leq 3^{p-1}(|a|^p+|b|^p+|c|^p)$ for $p \geq 1$, we obtain
 \begin{eqnarray*}
	 E\{|k(\mathbf{X}_{1},\mathbf{Y}_{1}; \mathbf{X}_{2}, \mathbf{Y}_{2};\mathbf{X}_{3}, \mathbf{Y}_{3} )|^p \}  &\leq&  \pi^p    E^2\{| \boldsymbol{v}^T\mathbf{X}  |^p \},
\end{eqnarray*}
which implies
\begin{eqnarray*}
&&p^{-1} \{   E|k(\mathbf{X}_{1},\mathbf{Y}_{1}; \mathbf{X}_{2}, \mathbf{Y}_{2};\mathbf{X}_{3}, \mathbf{Y}_{3} ) |^p \}^{1/p} \\
&  \leq& \pi \Big[ { E^{1/p}\{| \boldsymbol{v}^T\mathbf{X}  |^p  \}}/{\sqrt{p}} \Big ]^2.
\end{eqnarray*}
By   the  definitions of   $\|\cdot\|_{\psi_{1}}$  and  $\|\cdot\|_{\psi_{2}}$, we have
$$
\left\| k(\mathbf{X}_{1},\mathbf{Y}_{1}; \mathbf{X}_{2}, \mathbf{Y}_{2};\mathbf{X}_{3}, \mathbf{Y}_{3} )\right\|_{\psi_{1}}   \leq  \pi   \|\boldsymbol{v}^T\mathbf{X} \|_{\psi_2}  ^2.
$$
Thus,         $k(\mathbf{X}_{1},\mathbf{Y}_{1}; \mathbf{X}_{2}, \mathbf{Y}_{2};\mathbf{X}_{3}, \mathbf{Y}_{3} )$ is  sub-exponential.
Moreover, by \eqref{S1boundness},  we have
\begin{eqnarray}\label{kernel-subexp-norm}
	&&\left\|k(\mathbf{X}_{1},\mathbf{Y}_{1}; \mathbf{X}_{2}, \mathbf{Y}_{2};\mathbf{X}_{3}, \mathbf{Y}_{3} )-S_{1}\right\|_{\psi_{1}}\nonumber\\
	&\leq& \left\|k(\mathbf{X}_{1},\mathbf{Y}_{1}; \mathbf{X}_{2}, \mathbf{Y}_{2};\mathbf{X}_{3}, \mathbf{Y}_{3} )\right\|_{\psi_{1}}+ |S_{1} |\nonumber\\
	&\leq&  \pi \sup _{\boldsymbol{v} \in \mathbb{S}^{p-1}}\|\boldsymbol{v}^T\mathbf{X}\|_{\psi_{2}}^2 + \pi \boldsymbol{v}^TE\{ \mathbf{X}  \mathbf{X}^T\}\boldsymbol{v} \nonumber\\
	&\leq&  2\pi \|\mathbf{\Sigma}\|.
\end{eqnarray}
%For the last inequality above, we used Lemma  A.3 of  \citet{bunea2015sample}.

Let $K=\|\mathbf{\Sigma}\|$ for brevity. By property (e) in Proposition 2.7.1 of \citet{vershynin2018high} and \eqref{kernel-subexp-norm}, we  can  bound  \eqref{expo-markov-generating memont}.  Specifically, if $\lambda$ satisfies
\begin{equation}\label{constraint:MGF}
	\Big|\frac{\lambda}{m}\Big| \leq \frac{c}{K},
\end{equation}
we have
\begin{eqnarray*}
	&& E  \{\exp   (m^{-1} \lambda [k(\mathbf{X}_{1},\mathbf{Y}_{1}; \mathbf{X}_{2}, \mathbf{Y}_{2};\mathbf{X}_{3}, \mathbf{Y}_{3} )-S_1]     )  \} \\
	& \leq&  E  \{\exp \Big ( m^{-2}\lambda^2  \|k(\mathbf{X}_{1},\mathbf{Y}_{1}; \mathbf{X}_{2}, \mathbf{Y}_{2};\mathbf{X}_{3}, \mathbf{Y}_{3} )-S_{1} \|_{\psi_{1}}    )  \}  \\
	& \leq& \exp [ C \lambda^2  K^2 /m^2 ].
\end{eqnarray*}
This, together with  \eqref{expo-markov-generating memont}, yields
\begin{equation}\label{tail-bound-optimization}
	\operatorname{pr}\left( {T}_{1 n}-S_{1}   \geq  t  \right) \leq \exp \Big[-\lambda t  +C \lambda^2 K^2 /m \Big].
\end{equation}

To obtain a sharp bound for \eqref{tail-bound-optimization}, we minimize the right-hand side over $\lambda > 0$ subject to constraint \eqref{constraint:MGF}. The optimal choice is
$$\lambda_{\mathrm{opt}}=\min \Big(\frac{m t}{ 2 C K^{2}}, \frac{m c}{K} \Big).$$
Then, we have
\begin{eqnarray*}
	\operatorname{pr}\Big({T}_{1 n}\!-\!S_{1}  \geq  t \Big)\leq  \exp \Big[- \frac{mt^{2}}{4 C  K^{2} }\Big],
\end{eqnarray*}
if $ \frac{m t}{ 2 C K^{2}}\leq  \frac{m c}{K};$
and
\begin{eqnarray*}
	\operatorname{pr}\Big({T}_{1 n}\!-\!S_{1}  \geq  t \Big)&\leq& \exp \Big[-t\lambda_{\mathrm{opt}} +C \lambda_{\mathrm{opt}}\frac{m t}{ 2 C K^{2}} \frac{K^2}{m} \Big] \\
	&\leq &  \exp \Big[-\frac{m c t}{2K } \Big],
\end{eqnarray*} if $\frac{m t}{ 2 C K^{2}}>  \frac{m c}{K}$.
Combining  both inequalities, we have
\begin{eqnarray*}
	\operatorname{pr}\Big({T}_{1 n}\!-\!S_{1}  \geq  t \Big)
	&\leq&  \exp \Big[-m\min \Big(\frac{t^{2}}{4 C  K^{2} }, \frac{c t}{2K }\Big)\Big]\\
	&\leq& \exp \Big[-C_1n\min \Big(\frac{t^{2}}{ K^{2}}, \frac{ t}{K}\Big)\Big].
\end{eqnarray*}

Applying the same argument to $-[T_{1n} - S_{1}]$ gives an identical bound for $\operatorname{pr}( -[T_{1n} - S_{1}] \geq t )$. Therefore,
\begin{eqnarray*}
	 \operatorname{pr}\Big(|{T}_{1 n}\!-\!S_{1}|  \geq t \Big)
	 \leq  2 \exp \Big[-C_1n\min \Big(\frac{t^{2}}{ K^{2}}, \frac{ t}{K}\Big)\Big].
\end{eqnarray*}
This, together with  \eqref{s1nk1} and \eqref{s1nk2}, we obtain
\begin{eqnarray}\label{term1}
	 \operatorname{pr}\left( |S_{1 n}-S_{1} | \geq t \right)
	\leq  2 \exp \Big[- Cn\min \Big(\frac{t^{2}}{ K^{2}}, \frac{ t}{K}\Big)\Big].\nonumber\\
\end{eqnarray}

\emph{ \textbf{Step 2: The tail bound on  $S_{2n}$.}}
We begin with the decomposition
$$S_{2n}=S_{2n,1}S_{2n,2},$$
where
\begin{eqnarray*}
&&S_{2n,1}= \frac{1}{n^3}\sum_{i,r,l=1}^n \mathrm{ang}(\mathbf{Y}_i  -\mathbf{Y}_r, \mathbf{Y}_l-\mathbf{Y}_r),\\
 && S_{2n,2}=\frac{1}{n^2}\sum_{j, m=1}^n \{\boldsymbol{v}^{\mathrm{T}}\mathbf{X}_j\} \{ \mathbf{X}_m^T\boldsymbol{v} \}.
\end{eqnarray*}
Let
$ S_{2,1}=E \{\mathrm{ang}(\mathbf{Y}_1  -\mathbf{Y}_3, \mathbf{Y}_4-\mathbf{Y}_3) \}.$  Then,   we have
\begin{eqnarray}\label{S2n-S2}
	S_{2n} =[S_{2n,1}-S_{2,1} ]S_{2n,2} +S_{2,1} S_{2n,2}.
\end{eqnarray}

We can express   $S_{2n,1}$   as
\begin{eqnarray*}
	S_{2n,1}&=& \frac{1}{n^3}\sum_{i,r,l=1}^n \mathrm{ang}(\mathbf{Y}_i  -\mathbf{Y}_r, \mathbf{Y}_l-\mathbf{Y}_r)\\
	& =&\frac{(n-1)(n-2)} {n^2} {T}_{2n,1},\nonumber
\end{eqnarray*}
where  ${T}_{2n,1}$  is  a U-statistic, given by
\begin{eqnarray*}
	{T}_{2n,1}&=&   \begin{pmatrix} n \\  3 \end{pmatrix}^{  -1}   \sum_{i<r<l}   \frac{ 1}{3} \Big\{  \mathrm{ang}(\mathbf{Y}_i  -\mathbf{Y}_r, \mathbf{Y}_l-\mathbf{Y}_r)\\
	&& + \mathrm{ang}(\mathbf{Y}_r  -\mathbf{Y}_i, \mathbf{Y}_l-\mathbf{Y}_i)
	\\&&+    \mathrm{ang}(\mathbf{Y}_i  -\mathbf{Y}_l, \mathbf{Y}_r-\mathbf{Y}_l)\Big \}\\
	&=& \begin{pmatrix} n \\  3 \end{pmatrix}^{-1} \sum_{i<r<l}\tilde{ k} (\mathbf{X}_{i},\mathbf{Y}_{i}; \mathbf{X}_{r}, \mathbf{Y}_{r};\mathbf{X}_{l}, \mathbf{Y}_{l} ).
\end{eqnarray*}
Note that   $0 \leq |\tilde{ k}(\mathbf{X}_{i},\mathbf{Y}_{i}; \mathbf{X}_{r}, \mathbf{Y}_{r};\mathbf{X}_{l}, \mathbf{Y}_{l} )|\leq \pi $.  By Theorem 5.6.1.A of \citet{By1980Approximation}, we   obtain
\begin{eqnarray*}
	\operatorname{pr}\left( |T_{2n,1}-E\{T_{2n,1}\} | \geq t  \right)
	\leq  2 \exp\Big[-2\lfloor n/3\rfloor t^2/ \pi^2 \Big].
\end{eqnarray*}
Since $E\{T_{2n,1}\}=S_{2,1}$ and    $|S_{2,1}|\leq \pi,$ we have
\begin{eqnarray}\label{S12nk-2k}
	\operatorname{pr}\left( |S_{2n,1}-S_{2,1} | \geq  t \right)
	& \leq & \operatorname{pr}\left(  | T_{2n,1}-S_{2,1} | \geq  \frac{t}{2} \right)\nonumber\\
	&&+\operatorname{pr}\left(  |\frac{ 3n-2}{n^2} S_{2,1} | \geq \frac{t}{2} \right)\nonumber\\
	& \leq &  2 \exp\Big[- \frac{n t^2}{6\pi^2}\Big],
\end{eqnarray}
for sufficiently large $n.$

Note that   $E\{\mathbf{X}_i\} = 0$  and $\boldsymbol{v}^{\mathrm{T}}\mathbf{X}$ is sub-gaussian by Condition 1. By    Hoeffding's inequality in  Theorem 2.6.3  in  \citet{vershynin2018high},    we obtain
\begin{eqnarray}\label{S22nk-2k-barX}
	\operatorname{pr} \Big(  | \boldsymbol{v}^T\overline{\mathbf{X}}_{n} |  \geq   t \Big)
	&\leq&  2 \exp \Big[-\frac{C n t^{2}}{ \|\boldsymbol{v}^T\mathbf{X} \|_{\psi_{2}}^{2} } \Big]\nonumber\\
	&\leq&  2 \exp \Big[- \frac{C n t^{2}}{ \|\mathbf{X}\|_{\psi_{2}}^{2}} \Big].
\end{eqnarray}
Thus, we have
\begin{eqnarray}\label{S22nk-2k}
	\operatorname{pr} \Big(  |S_{2n,2} | \geq   t \Big)&\leq&
	\operatorname{pr} \Big(  | \boldsymbol{v}^T\overline{\mathbf{X}}_{n} |  \geq   \sqrt{t} \Big)\nonumber\\
	&\leq&  2 \exp \Big[- \frac{C n t }{ \|\mathbf{X}\|_{\psi_{2}}^{2}} \Big]\nonumber\\
	&\leq&    2 \exp \Big[- \frac{C n t }{ K } \Big].
\end{eqnarray}
The last inequality follows from  $\|\mathbf{X}\|_{\psi_{2}}^{2} \leq C(\mathrm{tr}(\mathbf{\Sigma}) + \|\mathbf{\Sigma}\|)\leq CK.$

From\eqref{S12nk-2k}-\eqref{S22nk-2k}, we obtain
\begin{eqnarray*}
	&&\operatorname{pr}  ( |S_{2n,1}-S_{2,1}||S_{2n,2}  | \! \geq  t  )\\
	& \leq &    \operatorname{pr}  ( | S_{2n,1}-S_{2,1} | \! \geq   \sqrt{t}   ) \!+\!
	\operatorname{pr}  ( | S_{2n,2}  | \! \geq  \sqrt{t}~  )  \\
	& \leq&2 \exp\Big[- \frac{n t }{6\pi^2}\Big] +2 \exp \Big[-\frac{C n \sqrt{t} }{ K } \Big].
\end{eqnarray*}
Moreover, using $|S_{2,1}|\leq \pi$,      we have
\begin{eqnarray*}
	\operatorname{pr} \Big( | S_{2,1}||S_{2n,2}  |  \geq  t  \Big)   & \leq&    \operatorname{pr} \Big(  |S_{2n,2}  | \!\geq \!\frac{t}{\pi } \Big)  \leq     2 \exp \Big[- \frac{C n t }{ K } \Big].
\end{eqnarray*}
  Note that   $t^2\leq t \leq t^{1/2}$, $t\in[0,1]$. By    \eqref{S2n-S2}, we have
\begin{eqnarray}\label{term2}
	\operatorname{pr} \Big( | S_{2n } | \geq t \Big)  & \leq &    \operatorname{pr} \Big( |S_{2n,1}-S_{2,1}||S_{2n,2} |   \geq \frac{t}{2}\Big)\nonumber\\
	&&+ \operatorname{pr} \Big( | S_{2,1}||S_{2n,2}  | \geq  \frac{t}{2}   \Big)\nonumber\\
	  & \leq&   6 \exp \Big[- \frac{C n t }{ K } \Big].
\end{eqnarray}

\emph{ \textbf{Step 3: The tail bound on  $S_{3n}$.}} We employ the decomposition
$$S_{3n}=S_{3n,1}S_{3n,2},$$
where
\begin{align*}
S_{3n,1} &= \frac{1}{n^3} \sum_{i, r,l=1}^n \mathrm{ang}(\mathbf{Y}_i - \mathbf{Y}_r, \mathbf{Y}_l - \mathbf{Y}_r) {\boldsymbol{v}^T\mathbf{X}_l}, \\
S_{3n,2} &= \frac{1}{n} \sum_{j=1}^n \boldsymbol{v}^{\mathrm{T}}\mathbf{X}_j.
\end{align*}
Define $ S_{3,1}=E \{  \{ \mathbf{X}_4^T\boldsymbol{v}\} \mathrm{ang}(\mathbf{Y}_1  -\mathbf{Y}_3, \mathbf{Y}_4-\mathbf{Y}_3) \}.$

Note that $ S_{3n,1 }$ can be 	expressed as
\begin{eqnarray*}
	S_{3n,1 }  =\frac{(n-1)(n-2)} {n^2}  {T}_{3n,1 },\nonumber
\end{eqnarray*}
where    ${T}_{3n,1}$  is  a U-statistic, given by
\begin{eqnarray*}
	{T}_{3n,1 }&=&\begin{pmatrix} n \\  3 \end{pmatrix}^{-1} \sum_{i<r<l}\widetilde{\tilde{ k}} (\mathbf{X}_{i},\mathbf{Y}_{i}; \mathbf{X}_{r}, \mathbf{Y}_{r};\mathbf{X}_{l}, \mathbf{Y}_{l} ),
\end{eqnarray*}
with $ \widetilde{\tilde{ k}} (\mathbf{X}_{i},\mathbf{Y}_{i}; \mathbf{X}_{r}, \mathbf{Y}_{r};\mathbf{X}_{l}, \mathbf{Y}_{l} )=  \frac{ 1}{6} [ \{\boldsymbol{v}^T\mathbf{X}_i+\boldsymbol{v}^T\mathbf{X}_l\}  \mathrm{ang}(\mathbf{Y}_i  -\mathbf{Y}_r, \mathbf{Y}_l-\mathbf{Y}_r) + \{\boldsymbol{v}^T\mathbf{X}_r+\boldsymbol{v}^T\mathbf{X}_l\} \mathrm{ang}(\mathbf{Y}_r  -\mathbf{Y}_i, \mathbf{Y}_l-\mathbf{Y}_i)   + \{\boldsymbol{v}^T\mathbf{X}_i+\boldsymbol{v}^T\mathbf{X}_r\}   \mathrm{ang}(\mathbf{Y}_i  -\mathbf{Y}_l, \mathbf{Y}_r-\mathbf{Y}_l) ].  $

Under Condition 1 and following arguments analogous to Step 1, we establish
\begin{eqnarray*}
	\operatorname{pr}\left( |S_{3n,1}-S_{3,1} | \geq  t \right)
	\leq 2 \exp \Big[-Cn\min \Big(\frac{t^{2}}{ K^{2}}, \frac{ t}{K}\Big)\Big].
\end{eqnarray*}
By \eqref{S22nk-2k-barX}, we have
\begin{eqnarray*}
	\operatorname{pr}\left( |S_{3n,2} | \geq  t  \right)  \leq
	2 \exp \Big[- \frac{Cn t^{2}}{ K } \Big].
\end{eqnarray*}
Furthermore,
\begin{eqnarray*}
	|S_{3,1}|&=&|E \{  \{ \mathbf{X}_4^T\boldsymbol{v}\} \mathrm{ang}(\mathbf{Y}_1  -\mathbf{Y}_3, \mathbf{Y}_4-\mathbf{Y}_3) \}|\\&\leq&\pi E \{ |\mathbf{X}_4^T\boldsymbol{v}|\}\leq\pi \sqrt{\frac{2}{C}}\|\mathbf{X}\|_{\psi_{2}} .
\end{eqnarray*}
Thus, we have
\begin{eqnarray}\label{term3}
	&& \operatorname{pr} \Big( |S_{3n } | \geq t \Big)\nonumber\\  & \leq &    \operatorname{pr} \Big( |S_{3n,1}-S_{3,1}||S_{3n,2} |   \geq \frac{t}{2}\Big)\nonumber\\
	&&+ \operatorname{pr} \Big( | S_{3,1}||S_{3n,2}  | \geq  \frac{t}{2}   \Big)\nonumber\\
	& \leq &    \operatorname{pr} \Big ( | S_{3n,1}-S_{3,1} | \! \geq   \sqrt{\frac{t}{2}} \Big) \!+\!
	\operatorname{pr}  \Big( | S_{3n,2}  | \! \geq  \sqrt{\frac{t}{2} }  \Big) \nonumber\\
	&&+
	\operatorname{pr} \Big ( | S_{3n,2}  | \! \geq  \frac{t}{C\|\mathbf{X}\|_{\psi_{2}} }  \Big)\nonumber\\
	\!\!\!  & \leq&  6 \exp \Big[-Cn\min \Big(\frac{t^{2}}{ K^{2}}, \frac{ t}{K}\Big)\Big].
\end{eqnarray}
Combining \eqref{term1}, \eqref{term2}, \eqref{term3} with \eqref{widehat.omega-omega}, we have
\begin{eqnarray*}
	 \operatorname{pr} ( |\boldsymbol{v}^T\{\widehat{ \boldsymbol{ \Lambda}}_\mathrm{PR} -\boldsymbol{ \Lambda}_\mathrm{PR}\} \boldsymbol{v}| \geq t )
 	\leq  c \exp \{-Cn\min  (\frac{t^{2}}{ K^{2}}, \frac{ t}{K} )\},
\end{eqnarray*}
with $K=\|\mathbf{\Sigma}\|$.
\end{proof}

\begin{proof}[\textbf{Proof of Theorem  \ref{them-projection-consistency}}]
By    Theorem 4.7.1    in  \citet{vershynin2018high}, we   obtain
\begin{equation}\label{Theorem 4.7.1}
	{E}\{\|\widehat{\mathbf{\Sigma}}_{n}-\mathbf{\Sigma} \|\} \leq C K^{2}\left(\sqrt{\frac{p}{n}}+\frac{p}{n}\right)\|\mathbf{\Sigma}\|.
\end{equation}
Note that
\begin{eqnarray*}
	&&\widehat{\mathbf{\Sigma}}_{n}^{-1} \widehat{\boldsymbol{ \Lambda}}_\mathrm{PR}-\mathbf{\Sigma}^{-1} \boldsymbol{ \Lambda}_\mathrm{PR}\\
	&=&\mathbf{\Sigma}^{-1} (\mathbf{\Sigma}-\widehat{\mathbf{\Sigma}}_{n} ) \widehat{\mathbf{\Sigma}}_{n}^{-1} (\widehat{ \boldsymbol{ \Lambda}}_\mathrm{PR}- { \boldsymbol{ \Lambda}}_\mathrm{PR} )\\&&+\mathbf{\Sigma}^{-1} (\mathbf{\Sigma}-\widehat{\mathbf{\Sigma}}_{n} ) \widehat{\mathbf{\Sigma}}_{n}^{-1}  { \boldsymbol{ \Lambda}}_\mathrm{PR}+\mathbf{\Sigma}^{-1} (\widehat{ \boldsymbol{ \Lambda}}_\mathrm{PR}- { \boldsymbol{ \Lambda}}_\mathrm{PR}).
\end{eqnarray*}
This, together with   Theorem  \ref{them-kenerl-consistency}  and  \eqref{Theorem 4.7.1}, yields that
\begin{eqnarray*}
	\| \widehat{\mathbf{\Sigma}}_n^{-1}\widehat{ \boldsymbol{ \Lambda}}_\mathrm{PR}-  {\mathbf{\Sigma}}^{-1} { \boldsymbol{ \Lambda}}_\mathrm{PR}\|=O_{p}\left( \sqrt{ \frac{p }{n}}\right),
\end{eqnarray*}
under Condition 2.
\end{proof}

\begin{proof}[\textbf{Proof of Theorem  \ref{Bochner-induced}}]
(i) By    Lemma  \ref{Bochner}, we have
\begin{eqnarray*}
	&&\mathrm{ICMI}_\mathrm{\mu}(X|\mathbf{Y})\nonumber\\&=&  \int_{\mathbb{R}^q} |\mathrm{E}\{{X}\exp(i \textbf{u}^T\textbf{Y} )\}- \mathrm{E}\{ X \}\mathrm{E}\{\exp(i \textbf{u}^T \textbf{Y})\}  |^2   d \mu(\textbf{u})\nonumber\\
	&=&\int_{\mathbb{R}^q}\mathrm{E}\{ ( {X}_1-E\{ {X}_1\}) ({X}_2-E\{ {X}_2\}) \nonumber\\&&\times \exp (i \textbf{u}^T(\textbf{Y}_1-\textbf{Y}_2))\}d \mu(\textbf{u})\nonumber\\
	&=&\mathrm{E}\Big\{ ( {X}_1-E\{ {X}_1\}) ({X}_2-E\{ {X}_2\})\nonumber \\
	&&\times \int_{\mathbb{R}^q}\exp (i \textbf{u}^T(\textbf{Y}_1-\textbf{Y}_2))d \mu(\textbf{u})\Big\}
	\\
	&=&\mathrm{E}\{ ( {X}_1-E\{ {X}_1\}) ({X}_2-E\{ {X}_2\})K(\mathbf{Y}_1-\mathbf{Y}_2)  \}.\nonumber
\end{eqnarray*}

(ii)  From the first equality above,   $\mathrm{ICMI}_\mathrm{\mu}(X|\mathbf{Y})=0$ holds if and only if   $$\mathrm{E}\{{X}\exp(i \textbf{u}^T\textbf{Y} )\}=\mathrm{E}\{ X \}\mathrm{E}\{\exp(i \textbf{u}^T\textbf{Y})\},$$
for all $\mathbf{u} \in \mathbb{R}^q$, which  is equivalent   to
$$\mathrm{E} \{[ {E}\{X | \mathbf{Y}\}-\mathbb{E}\{X\}]\exp(i \textbf{u}^T\textbf{Y} )  \}=0.$$
By the uniqueness of Fourier transforms, it follows that
$\mathbb{E}\{X | \mathbf{Y}\}=\mathbb{E}\{X\},$ a.s.
\end{proof}

\begin{proof}[\textbf{Proof of Theorem  \ref{them-kenel}}]
 Under  the linearity condition that $E\{\mathbf{X} |\mathbf{\Gamma}^T \mathbf{X}\}=\mathbf{P}_{\mathbf{\Gamma}}^T(\mathbf{\Sigma}) \mathbf{X}$,
  and following arguments analogous to that in Theorem~\ref{them-projection}, we obtain
\begin{eqnarray*}
	{ \boldsymbol{ \Lambda}}_\mathrm{K} & =& \mathrm{E}\{ \mathrm{E}\{\mathbf{X}_1-E\{ \mathbf{X}_1\}|\mathbf{Y}_1\} \mathrm{E}\{\mathbf{X}_2-E\{ \mathbf{X}_2\}|\mathbf{Y}_2\}^T\nonumber \\&&\times K(\mathbf{Y}_1-\mathbf{Y}_2)  \}\nonumber\\
	& =&\mathrm{E}\{\mathbf{P}_{\mathbf{\Gamma}}^T(\mathbf{\Sigma}) \mathrm{E}\{\mathbf{X}_1-E\{ \mathbf{X}_1\}|\mathbf{Y}_1\}\\
	&&\times \mathrm{E}\{\mathbf{X}_2-E\{ \mathbf{X}_2\}|\mathbf{Y}_2\}^T\mathbf{P}_{\mathbf{\Gamma}}(\mathbf{\Sigma})  K(\mathbf{Y}_1-\mathbf{Y}_2)  \} \nonumber\\
	& =&\mathbf{P}_{\mathbf{\Gamma} }^T(\mathbf{\Sigma}) { \boldsymbol{ \Lambda}}_\mathrm{K} \mathbf{P}_{\mathbf{\Gamma} }(\mathbf{\Sigma}). \nonumber
\end{eqnarray*}
Thus,  $\mathcal{S}( { \boldsymbol{ \Lambda}}_\mathrm{K}) \subseteq \mathbf{\Sigma} \mathcal{S}_{\mathbf{Y} | \mathbf{X}}$.
\end{proof}

\begin{proof}[\textbf{Proof of Theorem  \ref{them- Bochner-consistency}}]

By the Cauchy-Schwarz inequality, for any $\mathbf{y}_1, \mathbf{y}_2 \in \mathbb{R}^q$,
  it follows that
$$
K^{2}(\mathbf{y}_1, \mathbf{y}_2)=\langle\boldsymbol{\phi}(\mathbf{y}_1), \boldsymbol{\phi}(\mathbf{y}_2)\rangle_\mathcal{H}^2 \leq K(\mathbf{y}_1, \mathbf{y}_1) K(\mathbf{y}_2, \mathbf{y}_2).
$$
Since $K(\cdot, \cdot)$ is translation-invariant, we have
\begin{eqnarray*}\label{boundness-kerenl}
	0< K(\mathbf{y}_1, \mathbf{y}_2)  &=& K(\mathbf{y}_1-\mathbf{y}_2)
	   \leq  K(\mathbf{y}_1, \mathbf{y}_1)= K(0),
\end{eqnarray*}
which implies that $K(\cdot, \cdot)$ is bounded.

By the boundedness of $K(\cdot, \cdot)$  and arguments similar to those in Theorems~\ref{them-kenerl-consistency}-\ref{them-projection-consistency}, Theorem~\ref{them- Bochner-consistency} follows.
 \end{proof}

\begin{proof}[\textbf{Proof of Theorem  \ref{them-kenel-HD}}]
Note that
\begin{eqnarray*}
	{\boldsymbol{ \Lambda}}_\mathrm{V.PR}&=& -\mathrm{E}\{ [\mathbf{Z}_1\mathbf{Z}_1^T-\boldsymbol{\Sigma}][\mathbf{Z}_2\mathbf{Z}_2^T-
	\boldsymbol{\Sigma}]\\
	&&\times  \mathrm{ang}(\mathbf{Y}_1- \mathbf{Y}_3,\mathbf{Y}_2- \mathbf{Y}_3)  \}\\
	& =& \mathrm{E} \{\mathrm{E}\{[\mathbf{Z}_1\mathbf{Z}_1^T-\boldsymbol{\Sigma}]|\mathbf{Y}_1\} \mathrm{E}\{[\mathbf{Z}_2\mathbf{Z}_2^T-\boldsymbol{\Sigma}]|\mathbf{Y}_2\}^T\\ &&\times\mathrm{ang}(\mathbf{Y}_1- \mathbf{Y}_3,\mathbf{Y}_2- \mathbf{Y}_3) \}.
\end{eqnarray*}

By   the linearity condition
$E\{\mathbf{X} | \mathbf{\Gamma}^T \mathbf{X}\}=\mathbf{P}_{\mathbf{\Gamma}}^T(\mathbf{\Sigma}) \mathbf{X}$,
and the constant variance condition
$\operatorname{Var}(\mathbf{X}| \mathbf{\Gamma}^T \mathbf{X})
=\mathbf{\Sigma}-\mathbf{\Sigma}\mathbf{P}_{\mathbf{\Gamma}}(\mathbf{\Sigma})$,
we derive
\begin{eqnarray*}
	&&\mathrm{E}\{[\mathbf{Z}_i\mathbf{Z}_i^T-\boldsymbol{\Sigma}]|\mathbf{Y}_i\}\\
	&=&\mathrm{E} \{\mathrm{E}[(\mathbf{Z}_i\mathbf{Z}_i^T-\boldsymbol{\Sigma})|\mathbf{Y}_i, \mathbf{\Gamma}^T\mathbf{X}_i ] |\mathbf{Y}_i \}\\
	&=&\mathrm{E} \{\mathrm{E}[(\mathbf{Z}_i\mathbf{Z}_i^T-\boldsymbol{\Sigma})|  \mathbf{\Gamma}^T\mathbf{X}_i] |\mathbf{Y}_i \}\\
	&=&\mathrm{E} \Big\{
	\mathrm{E}[ \mathbf{X}_i\mathbf{X}_i^T |  \mathbf{\Gamma}^T\mathbf{X}_i]
	-\mathrm{E}[\mathbf{X}_i]\mathrm{E}[\mathbf{X}_i^T |\mathbf{\Gamma}^T\mathbf{X}_i]\\
	&&-\mathrm{E}[\mathbf{X}_i |\mathbf{\Gamma}^T\mathbf{X}_i]\mathrm{E}[\mathbf{X}_i^T]
	|\mathbf{Y}_i \Big\} +\mathrm{E}[\mathbf{X}_i]\mathrm{E}[\mathbf{X}_i^T]-\boldsymbol{\Sigma}\\
	&=&\mathrm{E} \Big\{\operatorname{Var}(\mathbf{X}_i| \mathbf{\Gamma}^T \mathbf{X}_i)+
	\mathrm{E}[\mathbf{X}_i |\mathbf{\Gamma}^T\mathbf{X}_i]
	\mathrm{E}[\mathbf{X}_i^T |\mathbf{\Gamma}^T\mathbf{X}_i]
	\\
	&&-\mathrm{E}[\mathbf{X}_i]\mathrm{E}[\mathbf{X}_i^T |\mathbf{\Gamma}^T\mathbf{X}_i]-\mathrm{E}[\mathbf{X}_i |\mathbf{\Gamma}^T\mathbf{X}_i]\mathrm{E}[\mathbf{X}_i^T]
	|\mathbf{Y}_i \Big\} \\
	&&+\mathrm{E}\{\mathbf{X}_i\}\mathrm{E}\{\mathbf{X}_i^T\}-\boldsymbol{\Sigma}
	\\
	&=&\mathrm{E} \Big\{\mathbf{\Sigma}-\mathbf{\Sigma}\mathbf{P}_{\Gamma}+
	\mathbf{P}_{\Gamma}^{\top} \mathbf{X}_i
	\mathbf{X}_i^{\top}\mathbf{P}_{\Gamma}-\mathrm{E}[\mathbf{X}_i]\mathbf{X}_i^T\mathbf{P}_{\Gamma}\\
	&&	-\mathbf{P}_{\Gamma}^T \mathbf{X}_i\mathrm{E}[\mathbf{X}_i^T]
	) \Big|\mathbf{Y}_i \Big\} +\mathrm{E}[\mathbf{X}_i]\mathrm{E}[\mathbf{X}_i^T]-\boldsymbol{\Sigma}
	\\
	&=&\mathbf{P}_{\Gamma}^T\mathrm{E}\{
	\mathbf{X}_i\mathbf{X}_i^T|\mathbf{Y}_i \}\mathbf{P}_{\Gamma}
	-\mathrm{E}\{\mathbf{X}_i\}\mathrm{E}\{\mathbf{X}_i^T|\mathbf{Y}_i\}\mathbf{P}_{\Gamma}\\
	&&-\mathbf{P}_{\Gamma}^T \mathrm{E}\{\mathbf{X}_i|\mathbf{Y}_i\}\mathrm{E}\{\mathbf{X}_i^T\}
	+\mathrm{E}\{\mathbf{X}_i\}\mathrm{E}\{\mathbf{X}_i^T\} -\mathbf{\Sigma}\mathbf{P}_{\Gamma}
	\\
	&=& \mathbf{P}_{\Gamma}^T\mathrm{E}\{
	\mathbf{X}_i\mathbf{X}_i^T|\mathbf{Y}_i  \}\mathbf{P}_{\Gamma}
	-\mathbf{P}_{\Gamma}^T\mathrm{E}[\mathbf{X}_i]
	\mathrm{E}\{\mathbf{X}_i^T|\mathbf{Y}_i \}\mathbf{P}_{\Gamma}\\
	&&-\mathbf{P}_{\Gamma}^T\mathrm{E}\{\mathbf{X}_i|\mathbf{Y}_i \}\mathrm{E}[\mathbf{X}_i^T]\mathbf{P}_{\Gamma}
	+\mathrm{E}\{\mathbf{X}_i\}\mathrm{E}\{\mathbf{X}_i^T\}-\mathbf{\Sigma}\mathbf{P}_{\Gamma}
	\\
	&=&\mathbf{P}_{\Gamma}^T
	\mathrm{E}\{[\mathbf{Z}_i\mathbf{Z}_i^T-\boldsymbol{\Sigma}]|\mathbf{Y}_i\}\mathbf{P}_{\Gamma}.
\end{eqnarray*}
Here, $\mathbf{P}_{\Gamma}=\mathbf{P}_{\Gamma}(\mathbf{\Sigma})$ for   brevity.
Then, we have
\begin{eqnarray*}
	{\boldsymbol{ \Lambda}}_\mathrm{V.PR}&=& -\mathrm{E}\{ [\mathbf{Z}_1\mathbf{Z}_1^T-\boldsymbol{\Sigma}][\mathbf{Z}_2\mathbf{Z}_2^T-
	\boldsymbol{\Sigma}]\nonumber\\&&\times  \mathrm{ang}(\mathbf{Y}_1- \mathbf{Y}_3,\mathbf{Y}_2- \mathbf{Y}_3)  \}\\
	& =& \mathrm{E} \{\mathrm{E}\{[\mathbf{Z}_1\mathbf{Z}_1^T-\boldsymbol{\Sigma}]|\mathbf{Y}_1\} \mathrm{E}\{[\mathbf{Z}_2\mathbf{Z}_2^T-\boldsymbol{\Sigma}]|\mathbf{Y}_2\}^T\nonumber\\ &&\times\mathrm{ang}(\mathbf{Y}_1- \mathbf{Y}_3,\mathbf{Y}_2- \mathbf{Y}_3) \}
	\nonumber\\
	& =&-\mathbf{P}_{\Gamma}^T(\mathbf{\Sigma})\mathrm{E}\{ [\mathbf{Z}_1\mathbf{Z}_1^T\boldsymbol{\Sigma}]\mathbf{P}_{\Gamma}(\mathbf{\Sigma})[\mathbf{Z}_2\mathbf{Z}_2^T-
	\boldsymbol{\Sigma}]\nonumber\\&&\times  \mathrm{ang}(\mathbf{Y}_1- \mathbf{Y}_3,\mathbf{Y}_2- \mathbf{Y}_3)  \} \mathbf{P}_{\Gamma}(\mathbf{\Sigma}). \nonumber
\end{eqnarray*}
Thus,  $\mathcal{S}({\boldsymbol{ \Lambda}}_\mathrm{V.PR}) \subseteq \mathbf{\Sigma} \mathcal{S}_{Y | \mathrm{X}}$.  Similarly, we obtain $\mathcal{S}({\boldsymbol{ \Lambda}}_\mathrm{V.K}) \subseteq \mathbf{\Sigma} \mathcal{S}_{Y | \mathrm{X}}$.
\end{proof}

%%=============================================%%
%% For submissions to Nature Portfolio Journals %%
%% please use the heading ``Extended Data''.   %%
%%=============================================%%

%%=============================================================%%
%% Sample for another appendix section			       %%
%%=============================================================%%

%% \section{Example of another appendix section}\label{secA2}%
%% Appendices may be used for helpful, supporting or essential material that would otherwise
%% clutter, break up or be distracting to the text. Appendices can consist of sections, figures,
%% tables and equations etc.

\end{appendices}

%%===========================================================================================%%
%% If you are submitting to one of the Nature Portfolio journals, using the eJP submission   %%
%% system, please include the references within the manuscript file itself. You may do this  %%
%% by copying the reference list from your .bbl file, paste it into the main manuscript .tex %%
%% file, and delete the associated \verb+\bibliography+ commands.                            %%
%%===========================================================================================%%

\end{document}